\documentclass[11pt,a4paper]{article}

\usepackage[margin=1in]{geometry}
\usepackage[T1]{fontenc}
\usepackage{lmodern}
\usepackage{microtype}
\usepackage{amsmath,amssymb,amsthm}
\usepackage{mathtools}
\usepackage[table]{xcolor}
\usepackage{tikz}
\usetikzlibrary{arrows.meta}
\usepackage[hidelinks]{hyperref}
\usepackage[nameinlink,capitalise]{cleveref}
\usepackage{aliascnt}

\newtheorem{theorem}{Theorem}[section]

\newaliascnt{lemma}{theorem}
\newtheorem{lemma}[lemma]{Lemma}
\aliascntresetthe{lemma}
\crefname{lemma}{Lemma}{Lemmas}

\newaliascnt{proposition}{theorem}
\newtheorem{proposition}[proposition]{Proposition}
\aliascntresetthe{proposition}
\crefname{proposition}{Proposition}{Propositions}

\newaliascnt{corollary}{theorem}
\newtheorem{corollary}[corollary]{Corollary}
\aliascntresetthe{corollary}
\crefname{corollary}{Corollary}{Corollaries}

\theoremstyle{definition}

\newaliascnt{definition}{theorem}
\newtheorem{definition}[definition]{Definition}
\aliascntresetthe{definition}
\crefname{definition}{Definition}{Definitions}

\newaliascnt{example}{theorem}
\newtheorem{example}[example]{Example}
\aliascntresetthe{example}
\crefname{example}{Example}{Examples}

\theoremstyle{remark}

\newaliascnt{remark}{theorem}
\newtheorem{remark}[remark]{Remark}
\aliascntresetthe{remark}
\crefname{remark}{Remark}{Remarks}

\title{Separating Parsing Expression Grammars using Cell-Probe Lower Bounds}
\author{
Jungyeom Kim\\
\small Independent Researcher\\
\small\texttt{kimjg1199lock@gmail.com}
\and
Jihyeok Park\\
\small Korea University\\
\small\texttt{jihyeok\_park@korea.ac.kr}
}

\begin{document}

\newcommand{\nonterminalset}{\mathcal{N}}
\newcommand{\nonterminal}{A}

\newcommand{\symset}{\Sigma}
\newcommand{\sym}{a}

\newcommand{\productionset}{P}
\newcommand{\cfgderive}[1]{\mathrel{\Rightarrow_{#1}}}

\newcommand{\pegrelation}{R}
\newcommand{\pegstart}{S}
\newcommand{\parsingexpressions}{\mathcal{E}}
\newcommand{\pegfail}{\mathsf{fail}}
\newcommand{\pegsem}[4]{#1 \vdash #2,#3 \Downarrow #4}
\newcommand{\pegrule}[1]{\;\text{\scriptsize(\textsc{#1})}}

\newcommand{\automaton}{A}
\newcommand{\stateset}{Q}
\newcommand{\workalphabet}{\Gamma}
\newcommand{\scaffold}{\mathcal{S}}
\newcommand{\scamissing}{\bot}
\newcommand{\scaself}{\mathsf{SELF}}
\newcommand{\scapaths}[2]{[#1]^{\le #2}}
\newcommand{\scaview}[3]{\mathcal{B}_{#1}(#2,#3)}
\newcommand{\scastep}{\operatorname{Step}}

\newcommand{\bitset}{\{0,1\}}
\newcommand{\celladdr}[1]{\mathsf{Addr}_{#1}}
\newcommand{\cellword}[1]{\mathsf{Word}_{#1}}
\newcommand{\cellmemory}[1]{\mathsf{Mem}_{#1}}

\maketitle

\begin{abstract}
We resolve three open problems concerning parsing expression grammars (PEGs).
We construct a single language $C$ satisfying
$C\in\mathsf{LIN}\cap\mathsf{PEG}$ and
$C^R\in\mathsf{LIN}\setminus\mathsf{PEG}$.
This proves that some linear context-free language is not a PEG language and
that PEG languages are not closed under reversal, confirming a conjecture of
Loff, Moreira, and Reis.
Factoring the same witness resolves the concatenation-closure problem of
Rubtsov and Chudinov negatively, in the strong form
$\mathsf{PEG}\cdot\mathsf{REG}\not\subseteq\mathsf{PEG}$ despite
$\mathsf{REG}\cdot\mathsf{PEG}\subseteq\mathsf{PEG}$.
It also refutes closure under Kleene star, homomorphisms, and substitutions.

Our main technique converts scaffolding automata (SCAs), which characterize
reversals of PEG languages, into dynamic data structures in the cell-probe
model.
For any suitably local serialization of a problem with preprocessing, updates,
and a final Boolean query, an SCA recognizer yields an exact deterministic
cell-probe data structure whose operation costs are proportional to the
corresponding encoding lengths.
Cell-probe lower bounds can therefore prove SCA non-membership and, by reversal,
PEG non-membership.
We apply this transfer to Multiphase Inner Product using one-symbol update
blocks and a query suffix of length $\mathcal{O}(\log n)$, while keeping both
the language and its reversal linear context-free.
Ko's cell-probe lower bound then yields the witness above.
The arguments are additionally formalized in Lean~4.
\end{abstract}

\section{Introduction}

Parsing expression grammars (PEGs) are formal grammars that define languages through parsing rules, rather than through generative rules as in context-free grammars (CFGs)~\cite{Ford2004PEG}.
PEGs have several notable advantages. In particular, their prioritized choice gives them deterministic recognition semantics
and makes them unambiguous by construction~\cite{Ford2004PEG}.
Moreover, PEGs can be parsed in linear time using a packrat parser, at the cost of additional memory~\cite{Ford2002Packrat}.
Packrat parsing is essentially a memoization technique based on dynamic programming.
Owing to these advantages, PEGs have been adopted in practical parser implementations.
CPython replaced its LL(1) parser with a PEG-based parser in Python 3.9~\cite{PEP617}.

Although PEGs are becoming increasingly influential in parsing, their theoretical properties are not yet fully understood.
PEGs are highly expressive, and every deterministic context-free language can be expressed by a
PEG~\cite{BirmanUllman1973,Ford2004PEG}.
Furthermore, PEGs can express some non-context-free languages.
One such example is $\{a^n b^n c^n \mid n \geq 0\}$~\cite{BirmanUllman1973,Ford2004PEG}.
Therefore, writing $\mathsf{PEG}$, $\mathsf{DCFL}$, and $\mathsf{CFL}$ for the classes of languages defined by PEGs,
deterministic context-free grammars, and context-free grammars, respectively, both
\(
\mathsf{DCFL} \subseteq \mathsf{PEG}
\)
and
\(
\mathsf{PEG} \not \subseteq \mathsf{CFL}
\)
hold.
PEGs are, moreover, closed under complement and intersection, owing to the expressive power of their lookahead
operators~\cite{Ford2004PEG}.
Despite this expressive power, the following two questions have remained open:
\begin{itemize}
\item[Q1.] Does $\mathsf{CFL} \subseteq \mathsf{PEG}$ hold?
\item[Q2.] Is the class of PEG languages closed under reversal, that is, does $\mathsf{PEG}=\mathsf{PEG}^R$ hold?
\end{itemize}

We prove that there exists a linear context-free language that cannot be recognized by any PEG and that the class of PEG languages is not closed under reversal.

Both results require showing that no PEG recognizes a given language, which is particularly challenging.
Loff, Moreira, and Reis asked whether the language of palindromes has a PEG, observing that none was known
even for that comparatively simple language~\cite{LoffMoreiraReis2020}; our witness is instead a language
built for the reduction below.
Other language classes have well-established techniques for proving that a given language does not belong to them.
For example, regular languages can be characterized using the Myhill--Nerode theorem, while context-free languages can often be ruled out using the pumping lemma~\cite{Myhill1957,Nerode1958,BarHillelPerlesShamir1961}.
By contrast, no comparably effective tool is known for PEGs.
Indeed, Loff, Moreira, and Reis proved that PEGs do not admit a pumping lemma analogous to the one for context-free languages~\cite{LoffMoreiraReis2020}.

Known positive results do not close this gap from the other side either: translations into PEGs exist only for
restricted grammar classes.
Mascarenhas, Medeiros, and Ierusalimschy gave language-preserving translations from several subclasses of CFGs,
including LL-regular grammars, to PEGs~\cite{MascarenhasMedeirosIerusalimschy2014}.
Our first result shows that such translations cannot cover all context-free languages.

The difficulty of proving PEG non-membership has motivated several automata-theoretic models for studying the
expressive power of PEGs.
Loff, Moreira, and Reis introduced \emph{scaffolding automata} (SCAs), a finite-automaton model enriched with a
pointer structure, and conjectured that the answer to Q2 is negative~\cite{LoffMoreiraReis2020}.
Rubtsov and Chudinov later introduced \emph{deterministic pointer pushdown automata} as another such model:
pushdown automata that record with each stack symbol the input position at which it was pushed and may return the
input head there when the symbol is popped~\cite{RubtsovChudinov2024}.
Among these models, our proof builds on scaffolding automata, which we now describe in more detail.

An SCA extends a finite automaton with an append-only persistent pointer structure called a scaffold.
As the automaton reads each input symbol, it appends a new node carrying a label and a constant number of backward or self pointers.
The scaffold thus grows only by edges pointing toward already-created nodes, encoding the input history in
chronological order; this inherent directionality provides important intuition for why reversal may substantially
alter the expressive power of PEGs.
The key relationship between PEGs and SCAs is captured by the following characterization.
For every language $H$,
\[
H \in \mathsf{PEG}
\iff
H^{R} \in \mathsf{SCA},
\]
where $\mathsf{SCA}$ denotes the class of languages decided by SCAs.
This characterization reduces PEG lower bounds to SCA lower bounds, but it does not by itself supply such bounds.

Although the scaffold retains the entire input history, an SCA can access only a constant-size portion of it while
processing each input symbol.
We exploit this access bottleneck through a framework that interprets an SCA as an online data structure.
We call a Boolean data-structure problem \emph{three-stage} if, for each size parameter $n$, it consists of preprocessing data, a sequence of updates, and a final query with a Boolean answer.
We serialize these stages as a prefix, a sequence of update blocks, and a query suffix, so that membership of the
resulting string in a fixed language $K$ determines the Boolean answer.
Each block depends only on the operation it encodes (an update block may also depend on its position in the
sequence), and every complete serialization has polynomial length.
If, in addition, the update blocks and the query suffix have lengths at most $a(n)$ and $b(n)$, respectively, we
call $K$ an $(a,b)$-$\mathsf{SCA}$ encoding (\cref{def:sca-encoding}).

Our transfer theorem (\cref{thm:transfer}) states that if such an encoding is decided by an SCA, then the
underlying problem admits an exact deterministic cell-probe data structure with word size $\Theta(\log n)$, update
time $\mathcal{O}(a(n))$, query time $\mathcal{O}(b(n))$, and polynomially many memory cells.
Contrapositively, excluding such a data structure establishes that the encoding lies outside $\mathsf{SCA}$, and
the PEG--SCA characterization then places its reversal outside $\mathsf{PEG}$.

We apply our strategy to Multiphase Inner Product, a variant of P\u{a}tra\c{s}cu's Multiphase
Problem~\cite{Patrascu2010Multiphase} for which Ko proved the sufficiently strong lower bound used
here~\cite{Ko2026CellProbe}.
Our encoding uses constant-length update blocks and a query suffix of length $\mathcal{O}(\log n)$.
Consequently, any query-time lower bound of $\omega(\log n)$, that is, any bound exceeding the logarithmic
barrier, suffices for our purposes.
The resulting witness language and its reversal are both linear context-free; we write $\mathsf{LIN}$ for the
class of linear context-free languages.

The same witness also resolves further closure questions.
We prove that PEG languages are closed under neither unrestricted concatenation nor Kleene star, with concatenation
failing even in the strong form
\[
\mathsf{PEG}\cdot\mathsf{REG}\nsubseteq\mathsf{PEG}.
\]

Our main contributions are as follows:
\begin{enumerate}
\item We introduce a framework that transforms an SCA encoding of a dynamic data-structure problem into a concrete exact deterministic cell-probe data structure with bounded word size, update time, and query time.
\item Using this framework, we construct a language $C$ satisfying
\[
C\in\mathsf{LIN}\cap\mathsf{PEG}
\qquad\text{and}\qquad
C^R\in\mathsf{LIN}\setminus\mathsf{PEG}.
\]
Consequently, $\mathsf{LIN}\not\subseteq\mathsf{PEG}$ and hence
$\mathsf{CFL}\not\subseteq\mathsf{PEG}$, while the class of PEG languages is not closed under reversal.
\item From the same witness, we resolve the open closure questions for concatenation and Kleene star negatively.
As further consequences, PEG languages are not closed under homomorphisms or substitutions.
\end{enumerate}

\noindent
Beyond these main contributions, we provide a Lean~4 formalization of the separation argument for the witness
language, whose precise scope and trust assumptions are described in the
\hyperref[sec:formal-verification]{Formal Verification} section.

\section{Preliminaries}
\label{sec:preliminaries}

\subsection{Notation}

An alphabet is a finite set of symbols.
For an alphabet $\symset$, we write $\symset^*$ for the set of finite words over $\symset$, $\varepsilon$ for the empty
word, and $\symset^+=\symset^*\setminus\{\varepsilon\}$.
The length and reversal of a word $x$ are denoted by $|x|$ and $x^R$, respectively.
For a language $L\subseteq\symset^*$, let $L^R=\{\,x^R:x\in L\,\}$.
Likewise, for a class $\mathcal{C}$ of languages, let
$\mathcal{C}^R=\{\,L^R:L\in\mathcal{C}\,\}$; the class $\mathcal{C}$ is \emph{closed under reversal} if
$\mathcal{C}=\mathcal{C}^R$.
For an integer $m\geq0$, let $[m]=\{0,\ldots,m-1\}$.

\subsection{Context-Free Grammars and Linear Languages}

\begin{definition}[context-free grammar]
A context-free grammar is a tuple \(G=(\nonterminalset,\symset,\productionset,\pegstart)\).
Here $\nonterminalset$ is a finite set of nonterminal symbols, $\symset$ is a finite set of terminal symbols disjoint
from $\nonterminalset$, and $\pegstart\in\nonterminalset$ is the start symbol.
The set $\productionset$ is a finite set of productions contained in
$\nonterminalset\times(\nonterminalset\cup\symset)^*$.
Each pair $(\nonterminal,\alpha)\in\productionset$ represents the production
$\nonterminal\to\alpha$.
For $x,y\in(\nonterminalset\cup\symset)^*$, define the one-step derivation relation by
\[
x\cfgderive{G}y
\quad\Longleftrightarrow\quad
\begin{aligned}
\exists\,&u,v\in(\nonterminalset\cup\symset)^*,\
\nonterminal\in\nonterminalset,\
\alpha\in(\nonterminalset\cup\symset)^*
\\
&x=u\nonterminal v,\qquad
y=u\alpha v,\qquad
(\nonterminal,\alpha)\in\productionset.
\end{aligned}
\]
The language generated by $G$ is
\[
L(G)=
\{\,w\in\symset^*:\pegstart\cfgderive{G}^{*}w\,\}.
\]
\end{definition}

\begin{definition}[linear context-free language]
A context-free grammar is \emph{linear} if the right-hand side of every production contains at most one nonterminal.
Every sentential form reachable from the start symbol therefore contains at most one nonterminal, so the recursive
structure of a derivation is linear.
We write
\[
\begin{aligned}
\mathsf{CFL}
  &=\{\,L(G):G\text{ is a context-free grammar}\,\},\\
\mathsf{LIN}
  &=\{\,L(G):G\text{ is a linear context-free grammar}\,\}.
\end{aligned}
\]
\end{definition}

\subsection{Parsing Expression Grammars}

Unlike a context-free grammar, a parsing expression grammar specifies a deterministic recognition procedure.
Its choice operator is prioritized, and its lookahead operators may inspect the remaining input without consuming it.

\begin{definition}[parsing expression]
For disjoint finite sets of nonterminals $\nonterminalset$ and terminals $\symset$, the set
$\parsingexpressions(\nonterminalset,\symset)$ of parsing expressions is generated by
\[
e ::= \varepsilon \mid \pegfail \mid \sym \mid \nonterminal
   \mid e_1e_2 \mid e_1/e_2 \mid {!e},
\qquad
\sym\in\symset,\quad \nonterminal\in\nonterminalset.
\]
\end{definition}

The prioritized choice $e_1/e_2$ first attempts $e_1$ and attempts $e_2$ only if $e_1$ fails.
Lookahead evaluates an expression while retaining only whether it succeeds, rather than the input position reached by
the evaluation.
The result of an entire recursive parse can therefore guide subsequent parsing from the original input position.
Negative lookahead $!e$ succeeds if and only if $e$ fails, while positive lookahead is given by the syntactic sugar
$\mathbin{\&}e\coloneqq !(!e)$ and succeeds if and only if $e$ succeeds.
We also use a dot for the choice of all terminals, writing \(.=\sym_1/\cdots/\sym_m\) when
\(\symset=\{\sym_1,\ldots,\sym_m\}\), and hence $!.$ tests for the end of the input.
When an expression $e$ is guaranteed to consume at least one symbol whenever it succeeds, $e^*$ denotes the usual
abbreviation obtained by introducing a fresh nonterminal $A_e\leftarrow eA_e/\varepsilon$.

\begin{definition}[parsing expression grammar]
A parsing expression grammar (PEG) is a tuple
\(G=(\nonterminalset,\symset,\pegrelation,\pegstart)\).
Here $\nonterminalset$ is a finite set of nonterminals, $\symset$ is a finite terminal alphabet disjoint from
$\nonterminalset$, and $\pegstart\in\nonterminalset$ is the start nonterminal.
The rule function
\(\pegrelation\colon\nonterminalset\to\parsingexpressions(\nonterminalset,\symset)\) assigns a parsing expression
$\pegrelation(\nonterminal)$ to each nonterminal $\nonterminal\in\nonterminalset$.
We write $\nonterminal\leftarrow e$ when $\pegrelation(\nonterminal)=e$.
\end{definition}

For $r\in\symset^*\cup\{\pegfail\}$, the judgment $\pegsem{G}{e}{w}{r}$ means that $e$ leaves the suffix $r$ of $w$
when $r\ne\pegfail$; the result $\pegfail$ records failure.
The judgment is defined by the following rules.
\[
\begin{gathered}
\frac{}{\pegsem{G}{\varepsilon}{w}{w}}
\pegrule{Eps}
\qquad
\frac{}{\pegsem{G}{\pegfail}{w}{\pegfail}}
\pegrule{Fail}
\\[1ex]
\frac{}{\pegsem{G}{\sym}{\sym w}{w}}
\pegrule{Term-S}
\qquad
\frac{w\notin \sym\symset^*}
     {\pegsem{G}{\sym}{w}{\pegfail}}
\pegrule{Term-F}
\\[1ex]
\frac{\pegsem{G}{\pegrelation(\nonterminal)}{w}{r}}
     {\pegsem{G}{\nonterminal}{w}{r}}
\pegrule{Var}
\\[1ex]
\frac{\pegsem{G}{e_1}{w}{y}\quad
      \pegsem{G}{e_2}{y}{r}}
     {\pegsem{G}{e_1e_2}{w}{r}}
\pegrule{Seq}
\qquad
\frac{\pegsem{G}{e_1}{w}{\pegfail}}
     {\pegsem{G}{e_1e_2}{w}{\pegfail}}
\pegrule{Seq-F}
\\[1ex]
\frac{\pegsem{G}{e_1}{w}{y}\quad y\in\symset^*}
     {\pegsem{G}{e_1/e_2}{w}{y}}
\pegrule{Choice-L}
\qquad
\frac{\pegsem{G}{e_1}{w}{\pegfail}\quad
      \pegsem{G}{e_2}{w}{r}}
     {\pegsem{G}{e_1/e_2}{w}{r}}
\pegrule{Choice-R}
\\[1ex]
\frac{\pegsem{G}{e}{w}{\pegfail}}
     {\pegsem{G}{!e}{w}{w}}
\pegrule{Not-S}
\qquad
\frac{\pegsem{G}{e}{w}{y}\quad y\in\symset^*}
     {\pegsem{G}{!e}{w}{\pegfail}}
\pegrule{Not-F}
\end{gathered}
\]

Whenever it is defined, the semantics is deterministic.
More precisely, if $\pegsem{G}{e}{w}{r}$ and $\pegsem{G}{e}{w}{r'}$, then $r=r'$.
This follows by induction on the sum of the heights of the two derivations.
A derivation need not exist, however, because recursive calls may fail to terminate.

\begin{definition}[total PEG and PEG language]
A PEG $G$ is \emph{total} if, for every $e\in\parsingexpressions(\nonterminalset,\symset)$ and every
$w\in\symset^*$, there exists $r\in\symset^*\cup\{\pegfail\}$ such that $\pegsem{G}{e}{w}{r}$.
The language recognized by $G$ and the class of languages recognized by total PEGs are, respectively,
\[
\begin{aligned}
L(G)
  &=\{\,w\in\symset^*:\pegsem{G}{\pegstart}{w}{\varepsilon}\,\},\\
\mathsf{PEG}
  &=\{\,L(G):G\text{ is a total PEG}\,\}.
\end{aligned}
\]
\end{definition}

\begin{remark}[acceptance conventions]
\label{rem:peg-acceptance-conventions}
Ford defines the language of a PEG using success of the start expression, without requiring it to consume the entire
input~\cite{Ford2004PEG}.
In our notation, his convention is
\[
L_{\mathrm{succ}}(G)
=
\{\,w\in\symset^*
\mid
\exists r\in\symset^*\;
\pegsem{G}{\pegstart}{w}{r}
\,\}.
\]
Thus a grammar whose start rule is $S\leftarrow\varepsilon$ recognizes $\symset^*$ under Ford's convention, whereas
it recognizes only $\{\varepsilon\}$ under our definition of $L(G)$.

We use the full-consumption convention because membership throughout this paper concerns the entire input word and
because it agrees with the whole-input acceptance convention of scaffolding automata.
At the level of language classes, however, this distinction does not change the class recognized by total PEGs.
Indeed, let $\mathsf{All}\leftarrow .\,\mathsf{All}/\varepsilon$, which consumes the entire remaining input.
Given a total grammar with start nonterminal $S$, its language under Ford's convention is recognized under our
convention after adding the fresh start rule
$S_{\mathrm{full}}\leftarrow \mathbin{\&}S\,\mathsf{All}$.
Conversely, its language under our convention is recognized under Ford's convention after adding the fresh start
rule $S_{\mathrm{succ}}\leftarrow S\,{!.}$.
Both transformations preserve totality.
\end{remark}

\subsection{Scaffolding Automata}

We use the scaffolding automata of Loff, Moreira, and Reis~\cite{LoffMoreiraReis2020}.

\begin{definition}[scaffold]
Let $d\geq1$, and let $\workalphabet$ be a finite working alphabet.
A $(d,\workalphabet)$-scaffold of size $t+1$ is
\[
\scaffold=(V,(E_v)_{v\in V},\ell),
\qquad
V=\{0,\ldots,t\},
\]
where
\[
\ell\colon V\to\workalphabet\cup\{\scamissing\},
\qquad
E_v\colon[d]\to V\cup\{\scamissing\},
\qquad
E_v(i)\in\{0,\ldots,v\}\cup\{\scamissing\}.
\]
The node $t$ is the top of $\scaffold$.
\end{definition}

For $j\geq0$, let $\scapaths{d}{j}$ denote the set of \emph{paths} of length at most $j$, namely sequences of edge
indices drawn from $[d]$.
For a node $v$ and a path $p=i_1\cdots i_m$, let $v\cdot p$ be the endpoint obtained by following edges
$i_1,\ldots,i_m$ from $v$, and let $v\cdot p=\scamissing$ if any required edge is missing.

\begin{definition}[neighbourhood]
The radius-$j$ neighbourhood of $v$ is defined by
\[
\begin{aligned}
\scaview{0}{\scaffold}{v}
   &=\ell(v),\\
\scaview{j+1}{\scaffold}{v}
   &=
   \bigl(\ell(v),
     \scaview{j}{\scaffold}{E_v(0)},\ldots,
     \scaview{j}{\scaffold}{E_v(d-1)}\bigr),
\end{aligned}
\qquad
\scaview{j}{\scaffold}{\scamissing}=\scamissing.
\]
Let $\scaview{k}{d}{\workalphabet}$ denote the finite set of all radius-$k$ neighbourhoods of
$(d,\workalphabet)$-scaffolds.
\end{definition}

\begin{figure}[h!]
\centering
\begin{tikzpicture}[
  x=1cm,
  y=1cm,
  scaffold node/.style={
    circle,
    draw=black,
    semithick,
    fill=white,
    minimum size=8mm,
    inner sep=1pt
  },
  neighbourhood node/.style={
    scaffold node,
    fill=black!12
  },
  top scaffold node/.style={
    neighbourhood node,
    double,
    fill=black!18
  },
  missing node/.style={
    rectangle,
    draw=black,
    dashed,
    fill=black!12,
    minimum size=6mm,
    inner sep=1pt
  },
  zero edge/.style={
    -{Latex[length=1.8mm]},
    semithick
  },
  one edge/.style={
    -{Latex[length=1.8mm]},
    semithick,
    densely dashed
  }
]
  \node[neighbourhood node] (s0) at (0,1.2) {$0$};
  \node[neighbourhood node] (s1) at (1.5,1.2) {$1$};
  \node[scaffold node] (s2) at (3,1.2) {$2$};
  \node[neighbourhood node] (s3) at (4.5,1.2) {$3$};
  \node[neighbourhood node] (s4) at (6,1.2) {$4$};
  \node[top scaffold node] (s5) at (7.5,1.2) {$5$};
  \node[missing node] (sbot) at (0.45,-0.25) {$\scamissing$};

  \draw[zero edge] (s1) -- (s0);
  \draw[one edge] (s1) -- (sbot);
  \draw[zero edge] (s2) -- (s1);
  \draw[one edge] (s2) to[loop above,min distance=8mm] (s2);
  \draw[zero edge] (s3) -- (s2);
  \draw[one edge] (s3) to[bend left=48] (s1);
  \draw[zero edge] (s4) -- (s3);
  \draw[one edge] (s4) to[bend right=52] (s0);
  \draw[zero edge] (s5) -- (s4);
  \draw[one edge] (s5) to[bend left=52] (s1);

  \node[anchor=west,inner sep=0pt] at (8.5,1.2) {
    \setlength{\arraycolsep}{4pt}
    $\begin{array}{c|cccccc}
      v      &0&1&2&3&4&5\\ \hline
      E_v(0) &\scamissing&0&1&2&3&4\\
      E_v(1) &\scamissing&\scamissing&2&1&0&1
    \end{array}$
  };
\end{tikzpicture}
\caption{A degree-$2$ scaffold with top node $5$.
Solid and dashed arrows denote edges $0$ and $1$, respectively, and shaded vertices belong to the radius-$2$
neighbourhood of the top.}
\label{fig:scaffold-neighbourhood}
\end{figure}

\begin{definition}[scaffolding automaton]
A scaffolding automaton is an $8$-tuple
\[
\automaton=
\langle\symset,d,\workalphabet,k,\stateset,\delta,q_0,F\rangle,
\]
where $\symset$ is the input alphabet, $d\geq1$ is the degree, $\workalphabet$ is the working alphabet, $k\geq0$ is
the view radius, $\stateset$ is a finite state set, $q_0\in\stateset$ is the initial state,
$F\subseteq\stateset$ is the set of accepting states, and
\[
\delta\colon
\stateset\times\symset\times\scaview{k}{d}{\workalphabet}
\longrightarrow
\stateset\times\workalphabet\times
\bigl(\scapaths{d}{k}\cup\{\scaself,\scamissing\}\bigr)^d.
\]
Here $\scaself$ is a transition instruction that makes the corresponding edge of the newly created node point to the
node itself, whereas $\scamissing$ leaves that edge missing.
\end{definition}

The initial scaffold $\scaffold_0$ consists of a single node $0$ whose label and edges are all $\scamissing$.
If the current configuration is $(q,\scaffold)$ with top $t$ and
\[
\delta\bigl(q,\sym,\scaview{k}{\scaffold}{t}\bigr)
=(q',\gamma,p_0,\ldots,p_{d-1}),
\]
then $\scastep_{\delta,\sym}(q,\scaffold)=(q',\scaffold')$, where $\scaffold'$ is obtained by appending a node $t+1$
labelled $\gamma$ and
\[
E_{t+1}(i)=
\begin{cases}
\scamissing,
  &p_i=\scamissing,\\
t+1,
  &p_i=\scaself,\\
\scamissing,
  &p_i\in\scapaths{d}{k}\text{ and }t\cdot p_i=\scamissing,\\
t\cdot p_i,
  &p_i\in\scapaths{d}{k}\text{ and }t\cdot p_i\neq\scamissing.
\end{cases}
\]
For $w=\sym_1\cdots\sym_n$, begin with $(q_0,\scaffold_0)$ and define
\[
(q_{i+1},\scaffold_{i+1})
=\scastep_{\delta,\sym_{i+1}}(q_i,\scaffold_i)
\qquad(0\leq i<n).
\]
The language decided by $\automaton$ and the class of languages decided by scaffolding automata are, respectively,
\[
\begin{aligned}
L(\automaton)
  &=\{\,w\in\symset^*:q_n\in F\,\},\\
\mathsf{SCA}
  &=\{\,L(\automaton):\automaton\text{ is a scaffolding automaton}\,\}.
\end{aligned}
\]

Loff, Moreira, and Reis proved the following characterization~\cite[Theorem~16]{LoffMoreiraReis2020}.

\begin{theorem}[PEG--SCA characterization]
\label{thm:peg-sca-characterization}
For every language $L\subseteq\symset^*$,
\[
L\in\mathsf{PEG}
\quad\Longleftrightarrow\quad
L^R\in\mathsf{SCA}.
\]
\end{theorem}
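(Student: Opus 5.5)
This characterization is the cited result \cite[Theorem~16]{LoffMoreiraReis2020}; its proof goes through two simulations, which I would organize as follows. For the direction $L\in\mathsf{PEG}\Rightarrow L^R\in\mathsf{SCA}$, the starting point is the packrat table. For a total PEG $G$, the outcome of each nonterminal (and each subexpression occurring in the rules) on a suffix $w_i\cdots w_n$ is either $\pegfail$ or a suffix $w_j\cdots w_n$ with $j\ge i$. Reading $w^R$ left to right means processing the suffixes of $w$ from shortest to longest. So I would let the SCA create, for each new symbol, one node representing the suffix starting at that position. Using a constant number of edges per node, it points, for each nonterminal $A$, to the node of the suffix left after a successful parse of $A$, or records failure in the label. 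Those target nodes all already exist, since they represent shorter suffixes, which is exactly the backward-pointer discipline of a scaffold.

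The key step is to compute these pointers using only a radius-$k$ view. I would first put $G$ into a normal form in which every rule is $\varepsilon$, $\pegfail$, a single terminal, $BC$, $B/C$, or $!B$. Among rules that consume no input at the current position, I would topologically order the dependencies; totality is what guarantees this ordering is well-founded. After that, $A\leftarrow BC$ needs only to follow pointer $B$ from the new node and then pointer $C$ from the node reached. This is a path of length $2$ inside the scaffold, except when $B$ consumes nothing, in which case the pointer is an intra-node reference handled by $\scaself$ or by composing within the same step. Iterating over the finite ordered list of nonterminals bounds the radius by a constant. Acceptance is the label bit saying that $S$ consumes the whole word, i.e.\ that its pointer reaches the initial node $0$, which is the empty suffix.

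For the converse, $L^R\in\mathsf{SCA}\Rightarrow L\in\mathsf{PEG}$, I would construct a PEG that, on input $w$, simulates the SCA on $w^R$ by recursion from the right end. For each state $q$, label $\gamma$, and path $p$, introduce nonterminals whose success on suffix $w_i\cdots w_n$ encodes ``after reading $(w_i\cdots w_n)^R$ the state is $q$, the top has label $\gamma$'', and ``following $p$ from the top lands on node $v$'', where $v$ is represented as the position where the nonterminal stops consuming. The neighbourhood $\scaview{k}{\scaffold}{t}$ is then tested by lookaheads $\&$ that chain these pointer nonterminals. Because everything is deterministic, prioritized choice and negative lookahead can select the unique transition. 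Totality follows because every recursive call either consumes a symbol or descends in a well-founded order on (position, path length).

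The main obstacle is this second direction, specifically representing edge targets. A pointer has to be realized as ``parse and stop at position $j$'', and following a path of length $k$ means composing such parses. We must check that computing $E_{t+1}(i)=t\cdot p_i$ only calls nonterminals on strictly longer suffixes of the original input, where a suffix is longer exactly when its node is an older scaffold node, so that left recursion cannot occur. $\scaself$ edges need special handling as a zero-length consumption. I would first try to fix a canonical encoding in which each node's edges are exposed as separate nonterminals $E^{(i)}$, and verify by induction on $n-i$ that these nonterminals are total and correct.
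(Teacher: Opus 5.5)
The paper does not prove this statement itself. It imports Loff, Moreira, and Reis's Theorem~16 and points to its Lean artifact. Your two simulations are the natural construction for that theorem: suffix nodes with one pointer per nonterminal in one direction, and position-valued edge nonterminals with lookahead tests of the view in the other. So the route agrees, but two steps are stated incorrectly, and each has an easy fix.

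First, in the direction $L\in\mathsf{PEG}\Rightarrow L^R\in\mathsf{SCA}$, a static topological order of same-position dependencies need not exist under the paper's semantic notion of totality. For example, $A\leftarrow{!a}\,C$, $C\leftarrow{!a}/A$ is total. Yet $A$ can call $C$ and $C$ can call $A$ at the same position, on disjoint sets of inputs. Bound the radius dynamically instead. Derivations are finite and deterministic, so within the evaluation at the new position no nonterminal is called twice on that same suffix. The same-position call depth is therefore at most $|\nonterminalset|$. Each level lengthens the path from the old top $t$ by at most one edge, so $k=|\nonterminalset|$ suffices.

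Two further details in this direction. Node $0$ has label $\scamissing$ and no edges, so the grammar's behaviour on the empty suffix must be hard-wired into $\delta$; a path reaching node $0$ is recognised by its label inside the view. Also, the ``$S$ consumed everything'' bit must be kept in the state, because an SCA accepts by state, not by a label.

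Second, in the converse direction your well-foundedness check is reversed. Older scaffold nodes are \emph{shorter} suffixes of $w$, as you correctly said in the first direction. Left recursion is excluded because every state, label, and edge nonterminal at position $i$ consumes $w_i$ before any recursive call, inside a lookahead where necessary (for instance for $\scaself$ edges). All recursive calls are therefore on strictly shorter suffixes. With that orientation the construction goes through as you describe.
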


In particular, for every language $K$,
\[
K\notin\mathsf{SCA}
\quad\Longrightarrow\quad
K^R\notin\mathsf{PEG}.
\]

\begin{remark}[on the proof of \cref{thm:peg-sca-characterization}]
\label{rem:loff-repair}
We use both directions of \cref{thm:peg-sca-characterization}: the necessary
direction in \cref{thm:multiphase-negative}, and the sufficient direction in
\cref{lem:closure-basic,lem:closure-pref-peg}.
The accompanying artifact proves both directions in full and checks them in
Lean.
\end{remark}

\subsection{Cell-Probe Model}

The cell-probe model originates in Yao's work on static information retrieval~\cite{Yao1981}.
Fredman and Saks introduced its dynamic formulation~\cite{FredmanSaks1989}.
We use the following deterministic dynamic formulation.

\begin{definition}[deterministic cell-probe model]
Fix a word size $w\geq1$.
The address space, word space, and memory states are
\[
\celladdr{w}=\cellword{w}=\bitset^w,
\qquad
\cellmemory{w}
=\{\,M:\celladdr{w}\to\cellword{w}\,\}.
\]
A deterministic cell-probe data structure consists of a preprocessing map
\(\mathsf{Pre}\colon\mathcal{I}\to\cellmemory{w}\), where $\mathcal{I}$ is the set of preprocessing inputs,
together with an adaptive procedure for each update and query operation.
Starting from persistent memory $M$, a procedure may repeatedly perform
\[
\mathsf{read}(p): z\gets M(p),
\qquad
\mathsf{write}(p,z): M(p)\gets z,
\qquad
p\in\celladdr{w},\ z\in\cellword{w}.
\]
The cost counts only memory access.
Each read or write costs one probe, whereas preprocessing and all computation between probes are free.
When an operation terminates, its updated memory becomes the persistent memory supplied to the next operation, while
any transient local state is discarded.
\end{definition}

A data structure is \emph{exact} if every query returns the correct answer on every legal operation sequence.
We measure its performance by the worst-case update time $t_u$ and query time $t_q$, namely the maximum numbers of
probes made by a single update and a single query.
Throughout, $n$ denotes the size parameter of the problem under consideration, and we take $w=\Theta(\log n)$, so
that a word can address any of the $2^w=n^{\Theta(1)}$ cells in a memory of polynomial size.

\section{The SCA--Cell-Probe Transfer Framework}
\label{sec:framework}

In this section, we formalize the transfer from SCA recognizers to cell-probe
data structures.  The key observation is that a transition of a fixed SCA
accesses only a constant-size neighbourhood of its scaffold and can therefore
be simulated with $\mathcal{O}(1)$ cell probes.  We first define three-stage
Boolean data-structure problems and their SCA encodings, then prove the
transfer theorem and derive its consequence for PEG lower bounds.

\subsection{Three-Stage Problems and SCA Encodings}
\label{sec:framework-encoding}

The three stages are serialized in their natural order, with the query last.
For an SCA, this ordering produces a query-independent scaffold snapshot.  By
the time the query begins, the SCA has processed the preprocessing data and
updates but has not seen which query it must answer, so the snapshot can serve
as persistent memory.

As a running example, consider a bit vector $x\in\{0,1\}^n$ that is given
during preprocessing, updates that each toggle one coordinate of $x$, and a
final query that asks for the current value of one coordinate.  Writing a
subscripted bracket for a suitably delimited binary encoding of the indicated
stage, an operation sequence can be serialized schematically as
\[
\underbrace{\langle n,x\rangle_{\mathrm{pre}}}_{\text{preprocessing prefix}}\,
\underbrace{\langle i_1\rangle_{\mathrm{upd}}\cdots
\langle i_\ell\rangle_{\mathrm{upd}}}_{\text{update blocks}}\,
\underbrace{\langle q\rangle_{\mathrm{qry}}}_{\text{query suffix}}.
\]
The corresponding language consists of the serializations for which
the answer to the final query is $1$.  This is the form of problem and
encoding captured by the following definitions;
\cref{ex:bit-vector-encoding} revisits the example with concrete encoders.

\begin{definition}[three-stage Boolean data-structure problem]
\label{def:three-stage-problem}
A \emph{three-stage Boolean data-structure problem} $\mathcal{P}$ is a family
indexed by $n\geq1$.
For each $n$, it consists of sets $\mathcal{D}_n$, $\mathcal{U}_n$, and
$\mathcal{Q}_n$, a number of updates $\ell(n)\geq1$, and an answer function
\[
f_n\colon
\mathcal{D}_n\times\mathcal{U}_n^{\ell(n)}\times\mathcal{Q}_n
\longrightarrow \{0,1\}.
\]
\end{definition}

For parameter $n$, a data structure for $\mathcal{P}$ first preprocesses
$d\in\mathcal{D}_n$, then receives updates
$u_1,\ldots,u_{\ell(n)}\in\mathcal{U}_n$ in order, and finally receives a
query $q\in\mathcal{Q}_n$ and returns
$f_n(d,u_1,\ldots,u_{\ell(n)},q)$.
The index $j$ is known to the data structure when the update $u_j$ is
presented; correspondingly, the encoders introduced in the next definition
are indexed by $j$ as well.

\begin{definition}[$(a,b)$-$\mathsf{SCA}$ encoding]
\label{def:sca-encoding}
Let $a,b\colon\mathbb{N}_{>0}\to\mathbb{N}_{>0}$, and let
$K\subseteq\Sigma^*$ be a language over a fixed finite alphabet $\Sigma$.
We call $K$ an \emph{$(a,b)$-$\mathsf{SCA}$ encoding} of $\mathcal{P}$ if there
are constants $c_0,c_1>0$, independent of $n$, and, for every $n\geq1$, maps
\[
\begin{aligned}
\mathsf{Pre}_n&\colon\mathcal{D}_n\to\Sigma^*,\\
\mathsf{Upd}_{n,j}&\colon\mathcal{U}_n\to\Sigma^*
&& (1\leq j\leq\ell(n)),\\
\mathsf{Qry}_n&\colon\mathcal{Q}_n\to\Sigma^*
\end{aligned}
\]
satisfying the following conditions.
\begin{enumerate}
\item[(E1)] \textbf{Correctness.}\par
For every $d\in\mathcal{D}_n$,
$u_1,\ldots,u_{\ell(n)}\in\mathcal{U}_n$, and $q\in\mathcal{Q}_n$,
\[
\begin{aligned}
&\mathsf{Pre}_n(d)\,
\mathsf{Upd}_{n,1}(u_1)\cdots
\mathsf{Upd}_{n,\ell(n)}(u_{\ell(n)})\,
\mathsf{Qry}_n(q)\in K\\
&\hspace{4cm}\Longleftrightarrow
f_n(d,u_1,\ldots,u_{\ell(n)},q)=1.
\end{aligned}
\]

\item[(E2)] \textbf{Update and query lengths.}\par
For every $1\leq j\leq\ell(n)$, $u\in\mathcal{U}_n$, and
$q\in\mathcal{Q}_n$,
\[
|\mathsf{Upd}_{n,j}(u)|\leq a(n),
\qquad
|\mathsf{Qry}_n(q)|\leq b(n).
\]

\item[(E3)] \textbf{Polynomial total length.}\par
For the constants $c_0,c_1$ fixed above and every
$d,u_1,\ldots,u_{\ell(n)},q$ as above,
\[
\left|
\mathsf{Pre}_n(d)\,
\mathsf{Upd}_{n,1}(u_1)\cdots
\mathsf{Upd}_{n,\ell(n)}(u_{\ell(n)})\,
\mathsf{Qry}_n(q)
\right|
\leq c_0n^{c_1}.
\]
\end{enumerate}
\end{definition}

\begin{example}[bit-vector toggling]
\label{ex:bit-vector-encoding}
For each $n\geq1$, let
\[
\mathcal{D}_n=\{0,1\}^n,
\qquad
\mathcal{U}_n=\mathcal{Q}_n=[n],
\qquad
\ell(n)=n.
\]
For the bit-vector problem introduced above, the answer function is
\[
f_n(x,i_1,\ldots,i_n,q)
=
x_q\mathbin{\oplus}
\bigoplus_{j=1}^{n}\mathbf{1}[i_j=q].
\]

Let $r(n)=\lceil\log_2(n+1)\rceil$, and let
$\mathsf{code}_n\colon[n]\to\{0,1\}^{r(n)}$ be the fixed-length binary
encoding.  Let $\mathsf{bin}(n)$ be the binary representation of $n$ without
leading zeroes.  Over the fixed alphabet
$\Sigma=\{0,1,\mathtt{p},\mathtt{u},\mathtt{q},\#\}$, define
\[
\begin{aligned}
\mathsf{Pre}_n(x)
  &=\mathtt{p}\,\mathsf{bin}(n)\#x\#,\\
\mathsf{Upd}_{n,j}(i)
  &=\mathtt{u}\,\mathsf{code}_n(i)
  &&(1\leq j\leq n),\\
\mathsf{Qry}_n(q)
  &=\mathtt{q}\,\mathsf{code}_n(q).
\end{aligned}
\]
Let $K_{\mathrm{bit}}$ consist of all words
\[
\mathsf{Pre}_n(x)\,
\mathsf{Upd}_{n,1}(i_1)\cdots
\mathsf{Upd}_{n,n}(i_n)\,
\mathsf{Qry}_n(q)
\]
for which $f_n(x,i_1,\ldots,i_n,q)=1$.  The language $K_{\mathrm{bit}}$ is an
$(a,b)$-$\mathsf{SCA}$ encoding of this bit-vector problem with
$a(n)=b(n)=r(n)+1=\mathcal{O}(\log(n+1))$.
\end{example}

The encoder types impose locality only on the serialization of each operation.
The $j$th update block is determined by $n$, $j$, and $u_j$, while its effect
may depend on the entire SCA configuration produced by the preceding updates.
No separate length bound is imposed on the preprocessing prefix because
preprocessing is free in the cell-probe model.  The polynomial bound on the
complete encoding ensures that an SCA run on it creates only
$n^{\mathcal{O}(1)}$ nodes, whose addresses fit in
$\Theta(\log n)$-bit words.

\subsection{The Transfer Theorem}
\label{sec:framework-transfer}

We begin with the local simulation property that makes the transfer possible.
The constants hidden below may depend on the fixed SCA, but not on the problem parameter $n$.

\begin{lemma}[constant-probe SCA simulation]
\label{lem:constant-probe-sca-simulation}
\normalfont
Fix an SCA $\automaton$.
For every $M\geq1$, each configuration of $\automaton$ containing at most $M$ scaffold nodes can be represented in
$\mathcal{O}(M)$ cells of word size
\[
w=\lceil\log_2(M+1)\rceil+\mathcal{O}(1).
\]
One input-symbol transition can be simulated and stored persistently using $\mathcal{O}(1)$ probes, provided that
the resulting configuration also contains at most $M$ nodes.
Starting from a stored configuration, a string $z$ whose simulation keeps the total number of nodes at most $M$ can
also be processed using $\mathcal{O}(|z|+1)$ probes without modifying persistent memory.
\end{lemma}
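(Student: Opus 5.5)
We represent a configuration $(q,\scaffold)$ with top $t\le M-1$ by an array indexed by node numbers. Node $v$ occupies a fixed block of $d+1$ cells at addresses $\alpha(v,0),\ldots,\alpha(v,d)$, with $\alpha(v,i)=(d+1)v+i+1$. Cell $\alpha(v,0)$ stores the label $\ell(v)\in\workalphabet\cup\{\scamissing\}$ in $\mathcal{O}(1)$ bits. Cell $\alpha(v,i+1)$ stores $E_v(i)$, encoded as a number in $\{0,\ldots,M-1\}$ or a reserved code for $\scamissing$. One further cell at address $0$ stores the pair $(t,q)$. This uses $(d+1)M+1=\mathcal{O}(M)$ cells. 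The largest address is $\mathcal{O}(M)$, and each cell holds either a node index plus $\mathcal{O}(1)$ extra bits or a state and a top index. So a word size of $\lceil\log_2(M+1)\rceil+\mathcal{O}(1)$ suffices, with the additive constant depending only on $d$, $|\workalphabet|$, and $|\stateset|$. Since $d,k,\workalphabet,\stateset$ are fixed by $\automaton$, every constant below depends only on $\automaton$.

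To simulate one transition on input symbol $\sym$, we proceed in four steps.
\begin{enumerate}
\item Read cell $0$ to obtain $t$ and $q$.
\item Reconstruct $\scaview{k}{\scaffold}{t}$ by a breadth-first unfolding to depth $k$. For each node reached, read its label cell and its $d$ edge cells, following non-$\scamissing$ edges. This is a tree unfolding, so repeated nodes are simply reread. It touches at most $\sum_{j\le k}d^j=\mathcal{O}(1)$ nodes, each costing $d+1$ reads.
\item Evaluate $\delta(q,\sym,\scaview{k}{\scaffold}{t})=(q',\gamma,p_0,\ldots,p_{d-1})$ for free. For each $p_i\in\scapaths{d}{k}$, the endpoint $t\cdot p_i$ is already known from the unfolding, since every path of length at most $k$ from $t$ was explored.
\item Write $\gamma$ and the $d$ resolved edge targets into the cells of node $t+1$, using $t+1$ for $\scaself$ and the missing code otherwise. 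Then overwrite cell $0$ with $(t+1,q')$.
\end{enumerate}
The total is $\mathcal{O}(1)$ probes, and the hypothesis that the result has at most $M$ nodes guarantees that the block for $t+1$ lies inside the allocated range.

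The main obstacle is the last claim: processing a string $z$ with $\mathcal{O}(|z|+1)$ probes while leaving persistent memory unchanged. The plan is a copy-on-write overlay held in the free transient local state. We keep a local table recording the contents of the at most $(d+1)|z|$ cells that would have been written for the new nodes $t+1,\ldots,t+|z|$, together with the current local values of $t$ and $q$. Each read first consults the table: a node index greater than the stored top is resolved locally at zero cost, and anything else triggers a real read probe. Every real probe is therefore a read of an old node. Each simulated step performs at most $\mathcal{O}(1)$ reads and no writes, plus one initial read of cell $0$. Local computation is free and unbounded in the model, and the table is discarded when the operation ends, so persistent memory is untouched.

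Correctness of both simulations is by induction on the number of steps. The invariant is that the stored (or overlaid) memory encodes exactly the configuration $\scastep$ would produce. It rests on one fact: appending a node never alters existing labels or edges, because all edges point to nodes $\le v$. Hence the neighbourhood read from memory equals $\scaview{k}{\scaffold}{t}$.
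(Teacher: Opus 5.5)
Your proposal is correct and follows the paper's proof essentially step for step. Both use a fixed-size record per node plus a header holding the state and top, recover the radius-$k$ neighbourhood with at most $1+d+\cdots+d^k$ record reads, evaluate $\delta$ for free and then perform $\mathcal{O}(1)$ writes, and keep new nodes in a discarded local overlay for the read-only simulation; your explicit address map $\alpha(v,i)$ and word-size accounting simply spell out what the paper states more briefly.
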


\begin{figure}[ht]
\centering
\begin{tikzpicture}[
  x=1cm,
  y=1cm,
  font=\small,
  memory region/.style={
    draw=black,
    rounded corners,
    fill=black!3
  },
  overlay region/.style={
    memory region,
    densely dashed
  },
  record/.style={
    rectangle,
    draw=black,
    minimum height=6.5mm,
    inner sep=1pt
  },
  label field/.style={
    record,
    minimum width=9mm
  },
  pointer field/.style={
    record,
    minimum width=8mm
  },
  dots field/.style={
    record,
    minimum width=4.5mm
  },
  scaffold node/.style={
    circle,
    draw=black,
    minimum size=7mm,
    inner sep=1pt
  },
  local node/.style={
    scaffold node,
    densely dashed,
    fill=white
  },
  appended node/.style={
    scaffold node,
    fill=black!12
  },
  pointer/.style={
    -{Latex[length=1.7mm]},
    semithick
  },
  secondary pointer/.style={
    pointer,
    densely dashed
  }
]
  \node[font=\small\bfseries] at (7.65,5.75) {(a) Persistent extension};
  \draw[memory region] (0.15,2.05) rectangle (15.15,3.55);

  \begin{scope}[xshift=3.5mm]
  \node[scaffold node] (s0) at (1.9,4.45) {$0$};
  \node at (3.7,4.45) {$\cdots$};
  \node[scaffold node] (su) at (5.5,4.45) {$u$};
  \node at (7.3,4.45) {$\cdots$};
  \node[scaffold node,very thick] (st) at (9.1,4.45) {$t$};
  \node at (10.9,4.45) {$\cdots$};
  \node[appended node] (sn) at (12.7,4.45) {$t+1$};
  \draw[pointer] (su) to[bend left=24] (s0);
  \draw[pointer] (st) to[bend left=24] (su);
  \draw[pointer] (sn) to[bend left=24] (st);
  \draw[secondary pointer] (st) to[bend right=22] (s0);
  \draw[secondary pointer] (sn) to[bend right=22] (su);

  \foreach \x/\v in {1.9/0,5.5/u,9.1/t} {
    \node[label field,font=\scriptsize] at (\x-1.125,3.05) {label};
    \node[pointer field,font=\scriptsize] at (\x-0.275,3.05) {ptr$_0$};
    \node[dots field,font=\scriptsize] at (\x+0.35,3.05) {$\cdots$};
    \node[pointer field,minimum width=10mm,font=\scriptsize] at (\x+1.075,3.05) {ptr$_{d-1}$};
    \node[font=\footnotesize] at (\x,2.38) {$\operatorname{Rec}(\v)$};
  }

  \node[label field,fill=black!12,font=\scriptsize] at (11.575,3.05) {label};
  \node[pointer field,fill=black!12,font=\scriptsize] at (12.425,3.05) {ptr$_0$};
  \node[dots field,fill=black!12,font=\scriptsize] at (13.05,3.05) {$\cdots$};
  \node[pointer field,minimum width=10mm,fill=black!12,font=\scriptsize] at (13.775,3.05)
    {ptr$_{d-1}$};
  \node[font=\footnotesize] at (12.7,2.38) {$\operatorname{Rec}(t+1)$};

  \draw[black!45] (s0.south) -- (1.9,3.43);
  \draw[black!45] (su.south) -- (5.5,3.43);
  \draw[black!45] (st.south) -- (9.1,3.43);
  \draw[black!45] (sn.south) -- (12.7,3.43);
  \end{scope}

  \node[font=\small\bfseries] at (7.65,1.4) {(b) Read-only extension};
  \draw[memory region] (0.25,-1.0) rectangle (8.75,1.0);
  \draw[overlay region] (9.05,-1.0) rectangle (15.05,1.0);
  \node[anchor=north west,font=\footnotesize] at (0.4,0.85) {persistent scaffold};
  \node[anchor=north west,font=\footnotesize] at (9.2,0.85) {query-local overlay};

  \node[scaffold node] (p0) at (1.2,-0.25) {$0$};
  \node at (2.35,-0.25) {$\cdots$};
  \node[scaffold node] (pu) at (3.5,-0.25) {$u$};
  \node at (4.65,-0.25) {$\cdots$};
  \node[scaffold node,very thick] (pt) at (6.0,-0.25) {$t$};
  \node[local node] (o1) at (9.9,-0.25) {$t+1$};
  \node[local node] (o2) at (11.7,-0.25) {$t+2$};
  \node at (13.0,-0.25) {$\cdots$};
  \node[local node,very thick] (or) at (14.15,-0.25) {$t+r$};
  \draw[pointer] (pu) to[bend left=32] (p0);
  \draw[pointer] (pt) to[bend left=32] (pu);
  \draw[pointer] (o1) to[bend right=22] (pt);
  \draw[pointer] (o2) to[bend left=32] (o1);
  \draw[secondary pointer] (pt) to[bend right=18] (p0);
  \draw[secondary pointer] (o2) to[bend right=18] (pt);
\end{tikzpicture}
\caption{Persistent and read-only extensions of a stored scaffold.
Solid and dashed arrows denote pointer slots $0$ and $1$, respectively.
In (a), each vertical line associates a node with its stored record.
Query-created nodes in (b) remain in a local overlay.}
\label{fig:sca-cell-probe-simulation}
\end{figure}

\begin{proof}
Assign each scaffold node $v$ a record containing its label and the $d$ values
$E_v(0),\ldots,E_v(d-1)$.
The current state and top-node identifier are stored in a constant-size header.
Since the degree $d$, the working alphabet, and the state set of $\automaton$ are fixed, this representation uses
$\mathcal{O}(1)$ cells per node and $\mathcal{O}(M)$ cells in total.
Their addresses and stored node identifiers fit in words of the stated size, as illustrated in
\cref{fig:sca-cell-probe-simulation}.

To simulate a transition, recover the radius-$k$ neighbourhood of the current top by following the stored pointers.
At most $1+d+\cdots+d^k$ pointer occurrences are encountered.
Since $d$ and $k$ are fixed, the neighbourhood can be recovered with $\mathcal{O}(1)$ probes, after which the transition
function is evaluated for free and the new record is written with $\mathcal{O}(1)$ further probes.
The resulting probe sequence is summarized in \cref{fig:sca-persistent-transition}.

\begin{figure}[ht]
\centering
\begin{tikzpicture}[
  x=1cm,
  y=1cm,
  font=\small,
  memory region/.style={
    draw=black,
    rounded corners,
    fill=black!3
  },
  cell/.style={
    rectangle,
    draw=black,
    minimum width=18mm,
    minimum height=8mm,
    inner sep=2pt
  },
  probed cell/.style={
    cell,
    fill=black!8
  },
  new cell/.style={
    cell,
    densely dashed,
    fill=black!12
  },
  process/.style={
    rectangle,
    draw=black,
    rounded corners,
    minimum height=10mm,
    align=center,
    fill=white
  },
  flow/.style={
    -{Latex[length=1.8mm]},
    semithick
  },
  association/.style={
    flow,
    black!55
  },
  explored occurrence/.style={
    circle,
    draw=black!65,
    fill=black!8,
    minimum size=4.5mm,
    inner sep=0.4pt,
    font=\scriptsize
  },
  exploration edge/.style={
    -{Latex[length=1.3mm]},
    semithick,
    black!65
  }
]
  \draw[memory region] (0,-0.6) rectangle (11.3,0.6);
  \node[probed cell,minimum width=24mm] (header) at (1.35,0) {header $(q,t)$};
  \node at (2.9,0) {$\cdots$};
  \node[probed cell] (recu) at (4.0,0) {$\operatorname{Rec}(u)$};
  \node at (5.7,0) {$\cdots$};
  \node[probed cell,very thick] (rect) at (7.4,0) {$\operatorname{Rec}(t)$};
  \node[new cell] (recn) at (10.15,0) {$\operatorname{Rec}(t+1)$};

  \node[process,minimum width=24mm] (readheader) at (1.35,4.4) {\textbf{1}\quad Read header};
  \node[process,minimum width=35mm] (readview) at (5.7,4.4)
    {\textbf{2}\quad Recover $\scaview{k}{\scaffold}{t}$};
  \node[process,minimum width=23mm] (evaluate) at (9.3,4.4)
    {Evaluate $\delta$\\[-1pt]\scriptsize free computation};
  \node[process,minimum width=20mm] (write) at (12.2,4.4) {\textbf{3}\quad Write};

  \draw[flow] (readheader) -- (readview);
  \draw[flow] (readview) -- (evaluate);
  \draw[flow] (evaluate) -- (write);

  \draw[association] (readheader.south) -- (header.north);
  \node[explored occurrence] (rootocc) at (5.7,3.1) {$t$};
  \node[explored occurrence] (firstleft) at (5.0,2.45) {};
  \node at (5.7,2.45) {$\cdots$};
  \node[explored occurrence] (firstright) at (6.4,2.45) {};
  \node[explored occurrence] (lastone) at (4.55,1.35) {};
  \node[explored occurrence] (lasttwo) at (5.15,1.35) {};
  \node at (5.7,1.35) {$\cdots$};
  \node[explored occurrence] (lastthree) at (6.25,1.35) {};
  \node[explored occurrence] (lastfour) at (6.85,1.35) {};
  \node at (5.7,1.9) {$\vdots$};
  \node[anchor=east,font=\scriptsize] at (4.65,2.45) {at most $d$};
  \node[anchor=east,font=\scriptsize] at (4.2,1.35) {at most $d^k$};
  \draw[exploration edge] (rootocc) -- (firstleft);
  \draw[exploration edge] (rootocc) -- (firstright);
  \draw[exploration edge,densely dotted] (firstleft) -- (lastone);
  \draw[exploration edge,densely dotted] (firstleft) -- (lasttwo);
  \draw[exploration edge,densely dotted] (firstright) -- (lastthree);
  \draw[exploration edge,densely dotted] (firstright) -- (lastfour);
  \draw[black!55] (7.05,3.1) -- (7.3,3.1) -- (7.3,1.35) -- (7.05,1.35);
  \node[anchor=west,align=left,font=\scriptsize] at (7.38,2.23) {at most $k$\\pointer steps};
  \node[font=\scriptsize] at (5.7,0.88)
    {$1+d+\cdots+d^k=\mathcal{O}(1)$ record probes};
  \draw[association] (readview.south) -- (rootocc.north);
  \draw[association] (write.south) -- (recn.north);
  \draw[association,rounded corners=3pt]
    (write.east) -- (13.45,4.4) -- (13.45,-1.05) -- (1.35,-1.05) -- (header.south);
  \node[font=\scriptsize,fill=white,inner sep=1pt] at (7.4,-1.05) {update header};
\end{tikzpicture}
\caption{A persistent transition and the constant-size exploration used to recover
$\scaview{k}{\scaffold}{t}$.}
\label{fig:sca-persistent-transition}
\end{figure}

For a read-only simulation of $z$, keep the current state, top identifier, and every newly created record in the local
overlay shown in \cref{fig:sca-cell-probe-simulation}.
Each edge is either missing or points to a stored node, an earlier overlay node, or the new node itself.
Overlay records are available without probes, while the record of a stored target is obtained by probing memory, so
each transition still costs $\mathcal{O}(1)$ probes.
After every simulated symbol, the stored scaffold together with the local overlay represents the current
configuration of $\automaton$.
The overlay is discarded when the simulation terminates, so persistent memory is unchanged.
\end{proof}

The simulation lemma provides the remaining ingredient for the transfer theorem.

\begin{theorem}[transfer theorem]
\label{thm:transfer}
Let $K$ be an $(a,b)$-$\mathsf{SCA}$ encoding of a three-stage Boolean data-structure problem, and suppose that
$K \in \mathsf{SCA}$.
Then there exists an exact deterministic cell-probe data structure satisfying
\[
w=\Theta(\log n), \qquad
t_u=\mathcal{O}(a(n)), \qquad
t_q=\mathcal{O}(b(n)), \qquad
S=n^{\mathcal{O}(1)},
\]
where $w$ denotes the word size, $t_u$ and $t_q$ denote the numbers of cell
probes required for an update and a query, respectively, and $S$ denotes the
number of persistent memory cells.
\end{theorem}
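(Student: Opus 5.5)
Fix an SCA $\automaton$ with $L(\automaton)=K$, together with the encoders $\mathsf{Pre}_n$, $\mathsf{Upd}_{n,j}$, $\mathsf{Qry}_n$ and the constants $c_0,c_1$ from \cref{def:sca-encoding}. The data structure will simply run $\automaton$ on the serialization of the operation sequence, keeping the scaffold in memory as in \cref{lem:constant-probe-sca-simulation}.

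\textbf{Memory and word size.} We set $M=c_0n^{c_1}+1$. By (E3), every complete serialization has length at most $c_0n^{c_1}$. Each input symbol appends exactly one node, so every configuration reached on any prefix of a complete serialization has at most $M$ nodes. \Cref{lem:constant-probe-sca-simulation} then gives a representation in $\mathcal{O}(M)=n^{\mathcal{O}(1)}$ cells with word size $\lceil\log_2(M+1)\rceil+\mathcal{O}(1)=\Theta(\log n)$. If needed, we pad the word size up to a fixed $\Theta(\log n)$ value, so that $w$ depends only on $n$.

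\textbf{Preprocessing.} On input $d$, the structure computes $\mathsf{Pre}_n(d)$ and runs $\automaton$ on it from $(q_0,\scaffold_0)$. This computation is free in the cell-probe model. It then writes the representation of the resulting configuration $(q,\scaffold)$, with state and top in the header. Since preprocessing is free, the unbounded length of $\mathsf{Pre}_n(d)$ causes no cost.

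\textbf{Updates.} On the $j$th update $u_j$, the index $j$ is known, so the structure computes $z=\mathsf{Upd}_{n,j}(u_j)$ for free. It then applies the persistent single-symbol simulation of \cref{lem:constant-probe-sca-simulation} to each symbol of $z$ in turn. Each symbol costs $\mathcal{O}(1)$ probes, so by (E2) an update costs $\mathcal{O}(|z|)\le\mathcal{O}(a(n))$ probes. If $z=\varepsilon$, a constant cost is still within $\mathcal{O}(a(n))$ because $a(n)\ge1$. The invariant maintained is that, after the $j$th update, persistent memory represents exactly the configuration of $\automaton$ after reading $\mathsf{Pre}_n(d)\mathsf{Upd}_{n,1}(u_1)\cdots\mathsf{Upd}_{n,j}(u_j)$. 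This invariant follows by induction on $j$, since the simulation is exact and the node bound $M$ is never exceeded.

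\textbf{Query.} On query $q$, the structure computes $y=\mathsf{Qry}_n(q)$ and runs the read-only simulation of \cref{lem:constant-probe-sca-simulation} on $y$, starting from the stored configuration. This costs $\mathcal{O}(|y|+1)=\mathcal{O}(b(n))$ probes, again using $b(n)\ge1$. The structure returns $1$ exactly when the final state lies in $F$. By the invariant, this final state is the state of $\automaton$ after the full serialization. Hence the answer is $1$ iff the serialization lies in $K$, which by (E1) holds iff $f_n(d,u_1,\ldots,u_{\ell(n)},q)=1$. The structure is therefore exact.

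\textbf{Main obstacle.} The only delicate points are bookkeeping. First, the node bound must hold for every intermediate configuration, which (E3) ensures since prefixes are shorter than the full word. Second, the query must not corrupt memory, which the read-only overlay guarantees. Third, $w$ must be uniform in $n$ rather than depend on the particular input, which is why $M$ is chosen as a function of $n$ alone. Everything else is a direct invocation of \cref{lem:constant-probe-sca-simulation}.
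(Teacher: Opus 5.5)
Your proposal is correct and follows the paper's proof essentially step for step: fix the node bound $M$ from \textup{(E3)}, store configurations via \cref{lem:constant-probe-sca-simulation}, simulate $\mathsf{Pre}_n(d)$ for free, extend persistently on each update block at cost $\mathcal{O}(a(n))$ by \textup{(E2)}, run the query suffix in the read-only overlay at cost $\mathcal{O}(b(n))$, and obtain exactness from the simulation invariant and \textup{(E1)}. The only differences are cosmetic: the paper takes $M(n)=\lceil c_0n^{c_1}\rceil+1$ so that $M$ is an integer, and your extra bookkeeping (padding $w$, and using $a(n),b(n)\geq1$ to cover the additive constant) makes explicit what the paper leaves implicit.
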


\begin{proof}
Fix an SCA $\automaton$ with $L(\automaton)=K$.
Let $c_0,c_1>0$ be the constants from condition \textup{(E3)}, and set
$M(n)=\lceil c_0n^{c_1}\rceil+1$.
Condition \textup{(E3)} bounds every legal run by $M(n)$ scaffold nodes, so
\cref{lem:constant-probe-sca-simulation} represents its configurations in
$n^{\mathcal{O}(1)}$ cells of word size $\Theta(\log n)$.

On input $d\in\mathcal{D}_n$, preprocess by simulating $\automaton$ on $\mathsf{Pre}_n(d)$ and storing the resulting
configuration.
For each update $u_j$, compute $\mathsf{Upd}_{n,j}(u_j)$ for free and extend that configuration persistently.
Induction on $j$ gives the configuration of $\automaton$ on the encoded prefix through the $j$th update, while
\textup{(E2)} and \cref{lem:constant-probe-sca-simulation} bound the update cost by
\[
\mathcal{O}\bigl(|\mathsf{Upd}_{n,j}(u_j)|+1\bigr)
=\mathcal{O}(a(n)).
\]

After all updates, simulate $\mathsf{Qry}_n(q)$ from the stored configuration in the read-only overlay of
\cref{lem:constant-probe-sca-simulation}.
Its cost is
\[
\mathcal{O}\bigl(|\mathsf{Qry}_n(q)|+1\bigr)
=\mathcal{O}(b(n)).
\]
The simulation invariant and \textup{(E1)} show that the returned bit is the required value of $f_n$.
Thus the data structure is exact and has all the claimed bounds.
\end{proof}

\subsection{From Cell-Probe Lower Bounds to PEG Lower Bounds}
\label{sec:framework-consequence}

With the transfer theorem in place, the remaining step is a direct combination
with the PEG--SCA characterization.  We record the resulting lower-bound
principle in a form that applies to both of our data-structure problems.

\begin{corollary}[cell-probe lower bounds imply PEG lower bounds]
\label{cor:cell-probe-to-peg}
Let $K$ be an $(a,b)$-$\mathsf{SCA}$ encoding of a three-stage Boolean
data-structure problem.  Suppose that no exact deterministic cell-probe data
structure for this problem simultaneously has
\[
w=\Theta(\log n),\qquad
t_u=\mathcal{O}(a(n)),\qquad
t_q=\mathcal{O}(b(n)),\qquad
S=n^{\mathcal{O}(1)}.
\]
Then $K^R\notin\mathsf{PEG}$.
\end{corollary}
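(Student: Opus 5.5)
The plan is to argue by contraposition, combining \cref{thm:transfer} with the necessary direction of \cref{thm:peg-sca-characterization}. First, assume for contradiction that $K^R\in\mathsf{PEG}$. Apply \cref{thm:peg-sca-characterization} to the language $L=K^R$. This gives $L^R=(K^R)^R=K\in\mathsf{SCA}$, using only that reversal on words is an involution, so $(K^R)^R=K$ holds as languages.

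Next, since $K$ is by hypothesis an $(a,b)$-$\mathsf{SCA}$ encoding of the given three-stage problem and $K\in\mathsf{SCA}$, \cref{thm:transfer} applies verbatim. It yields an exact deterministic cell-probe data structure for that problem with
\[
w=\Theta(\log n),\qquad t_u=\mathcal{O}(a(n)),\qquad t_q=\mathcal{O}(b(n)),\qquad S=n^{\mathcal{O}(1)}.
\]
These are simultaneously the four parameters whose joint existence the hypothesis of the corollary excludes, so we reach a contradiction. Hence $K\notin\mathsf{SCA}$, and therefore $K^R\notin\mathsf{PEG}$. This last implication is also exactly the displayed consequence stated right after \cref{thm:peg-sca-characterization}.

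No step is substantive beyond invoking the two earlier results; the only point needing care is bookkeeping. The alphabet of $K$ is a fixed finite $\Sigma$, so the characterization applies to $K$ and $K^R\subseteq\Sigma^*$. In addition, the asymptotic quantifiers in \cref{thm:transfer} must match those in the hypothesis: the same problem, the same functions $a,b$, and constants hidden in the $\mathcal{O}$'s that may depend on the SCA but not on $n$. I would state this matching explicitly so that the contradiction is literal rather than up to reinterpretation of the bounds.
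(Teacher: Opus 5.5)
Your proposal is correct and follows the paper's proof: assuming $K^R\in\mathsf{PEG}$, the PEG--SCA characterization gives $K=(K^R)^R\in\mathsf{SCA}$, and the transfer theorem then produces a data structure with exactly the four excluded bounds, a contradiction. Your extra bookkeeping remarks (the fixed alphabet, and matching the problem, $a$, $b$, and the hidden constants) are harmless but add nothing the paper needed.
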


\begin{proof}
If $K^R\in\mathsf{PEG}$, then the PEG--SCA characterization in
\cref{thm:peg-sca-characterization} gives $K\in\mathsf{SCA}$.
The transfer theorem would then produce an exact deterministic cell-probe data
structure with all four displayed bounds, contradicting the hypothesis.
\end{proof}

The corollary becomes useful once a hard data-structure problem is realized by a language with short update and query
blocks.
In the following section, we construct such an encoding for Multiphase Inner Product, verify its parameters against
the known cell-probe lower bound, and establish the language-theoretic properties needed for our main results.

\section{Separation via Multiphase Inner Product}
\label{sec:multiphase}

In this section, we apply the transfer framework to Multiphase Inner Product
and construct a language $C$ satisfying
\[
C\in\mathsf{LIN}\cap\mathsf{PEG}
\qquad\text{and}\qquad
C^R\in\mathsf{LIN}\setminus\mathsf{PEG}.
\]
The non-membership result follows by combining an SCA encoding of Multiphase
Inner Product with Ko's cell-probe lower bound~\cite{Ko2026CellProbe}, while
explicit grammars establish the required membership results.  Consequently,
$\mathsf{LIN}\not\subseteq\mathsf{PEG}$, and the class of PEG languages is
not closed under reversal.

\subsection{The Original Multiphase Problem}
\label{sec:original-multiphase}

P\u{a}tra\c{s}cu introduced the original Multiphase Problem as a dynamic
form of set disjointness~\cite{Patrascu2010Multiphase}.  Phase I preprocesses
sets $S_1,\ldots,S_k\subseteq[n]$.  Phase II receives another set
$T\subseteq[n]$ and updates the memory constructed in Phase I.  Only in
Phase III is an index $i\in\{1,\ldots,k\}$ revealed, at which point the data structure
must decide whether $S_i\cap T=\varnothing$.  In the original word-RAM
formulation, the three phases are allowed
\[
\mathcal{O}(nk\tau),\qquad
\mathcal{O}(n\tau),\qquad
\mathcal{O}(\tau)
\]
time, respectively, for a cost parameter $\tau$.  The Phase II bound is a total budget for processing $T$.
If this processing is decomposed into $n$ coordinate updates, it gives an
amortized budget of $\mathcal{O}(\tau)$ per coordinate, followed by one query
with cost $\mathcal{O}(\tau)$.
Progress on the original problem includes the lower bounds of Ko and
Weinstein under a restricted query-adaptivity
condition~\cite{KoWeinstein2020}; separately, Larsen, Weinstein, and Yu gave
the first superlogarithmic cell-probe lower bounds for dynamic Boolean
problems~\cite{LarsenWeinsteinYu2020}.

\subsection{Multiphase Inner Product and Its Lower Bound}
\label{sec:multiphase-problem}

Ko retains the same three-phase structure but replaces set disjointness with
Inner Product over $\mathbb{F}_2$, and proves an unconditional lower bound
for the resulting problem~\cite{Ko2026CellProbe}.  We use this instantiation
because its lower bound crosses the query-time threshold required by our
transfer theorem.

Multiphase Inner Product separates the two arguments of an inner product
across three stages.  A collection of vectors is available during
preprocessing, a second vector arrives one coordinate at a time, and only
after these updates is the index of the queried vector revealed.

\begin{definition}[Multiphase Inner Product]
\label{def:multiphase-ip}
Fix a function $N\colon\mathbb{N}_{>0}\to\mathbb{N}_{>0}$.
For each $n\geq1$, let
\[
\mathcal{D}_n=(\{0,1\}^n)^{N(n)},
\qquad
\mathcal{U}_n=\{0,1\},
\qquad
\mathcal{Q}_n=\{1,\ldots,N(n)\},
\qquad
\ell(n)=n.
\]
For preprocessing data
$\boldsymbol{S}=(S_1,\ldots,S_{N(n)})\in\mathcal{D}_n$, updates
$x_1,\ldots,x_n\in\mathcal{U}_n$, and a query $q\in\mathcal{Q}_n$, define
\[
f_n(\boldsymbol{S},x_1,\ldots,x_n,q)
=
\langle S_q,x\rangle_{\mathbb{F}_2}
=
\bigoplus_{j=1}^{n}(S_q)_j x_j,
\qquad
x=(x_1,\ldots,x_n).
\]
The resulting three-stage Boolean data-structure problem is
\emph{Multiphase Inner Product over $\mathbb{F}_2$}.
\end{definition}

Thus the $j$th update supplies the coordinate $x_j$, and the final query asks
for the inner product of the completed vector $x$ with one of the
preprocessed vectors.
An update is therefore a single coordinate rather than the whole vector $x$,
and this distinction matters for the hypothesis of
\cref{thm:ko-multiphase} below: where the Phase II bound of the original
problem is a total budget for processing $T$, the update cost $t_u$ is
charged once per coordinate.  This is also the convention of
\cref{thm:transfer}, whose $t_u$ counts the probes of a single update.
We use the regime $N(n)=n^2$.
This choice satisfies the condition on the number of preprocessed vectors in
Ko's lower bound while keeping an index representable with
$\Theta(\log n)$ bits.

Ko proved the following lower bound, stated here using $N(n)$ for the number
of preprocessed vectors.

\begin{theorem}[Ko~\cite{Ko2026CellProbe}, Theorem 1.1]
\label{thm:ko-multiphase}
For Multiphase Inner Product with
$N(n)=n^{1+\Omega(1)}$ and word size $w=\Theta(\log n)$, there is a
constant $\kappa>0$ such that
\[
t_u\leq\log^\kappa n
\qquad\Longrightarrow\qquad
t_{\mathrm{tot}}
=
\widetilde{\Omega}\left(
\left(\frac{\log n}{\log\log n}\right)^2
\right).
\]
Here $t_{\mathrm{tot}}$ is the total number of probes used after the final
query index is revealed, $t_u$ is the number of probes used to process a
\emph{single} update, and $\widetilde{\Omega}$ hides a
$\operatorname{poly}(\log\log n)$ factor in the denominator.  Explicitly, the
conclusion is that there is a constant $c>0$, independent of $n$, with
\[
t_{\mathrm{tot}}
=
\Omega\left(
\frac{1}{(\log\log n)^{c}}
\left(\frac{\log n}{\log\log n}\right)^{2}
\right).
\]
\end{theorem}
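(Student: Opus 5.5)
This statement is imported from Ko~\cite{Ko2026CellProbe}, so I would not reprove the lower bound. The plan is to check that the restatement agrees with Ko's Theorem~1.1 under the conventions of this paper, so that it can later be combined with \cref{thm:transfer}.

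\textbf{Step 1: match the problem instance.} I would first check that Ko's problem is exactly \cref{def:multiphase-ip}. That means: $N$ vectors in $\{0,1\}^n$ are given in Phase~I; the vector $x$ arrives in Phase~II as $n$ single-coordinate updates; and in Phase~III an index $q$ is revealed and the answer is $\langle S_q,x\rangle$ over $\mathbb{F}_2$.

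\textbf{Step 2: match the cost model.} Ko's data structure should be exact, deterministic, and dynamic, with word size $\Theta(\log n)$. This is the model of our cell-probe definition, in which preprocessing is free and only probes are charged. The point needing most care is how the update cost is measured. If Ko's hypothesis is stated per single update, as in our displayed hypothesis $t_u\le\log^\kappa n$, there is nothing to do. If it is instead an amortized bound, $n$ times the per-update budget over the whole phase, then a worst-case per-update bound $t_u$ still implies it. In either direction our per-update reading yields a statement at least as strong as what \cref{thm:transfer} needs.

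\textbf{Step 3: match the query-side cost and the notation.} The quantity $t_{\mathrm{tot}}$ counts all probes after the query index is revealed. In the transfer theorem this is just the query time $t_q$, since nothing else happens after the query. Next, I would unfold the $\widetilde{\Omega}$ notation into the explicit $(\log\log n)^{-c}$ form. Finally, I would check that $N(n)=n^{1+\Omega(1)}$ covers the regime $N(n)=n^2$ used here.

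\textbf{Main obstacle.} The only real risk is a mismatch in quantifiers or conventions, for example:
\begin{itemize}
\item whether Ko allows randomization or error, which would only strengthen the bound for exact deterministic structures;
\item whether space is restricted, since our structures use $n^{\mathcal{O}(1)}$ cells;
\item whether $\kappa$ depends on the exponent in $N(n)$.
\end{itemize}
I would resolve these by reading Ko's formal statement. The intended downstream use only needs one fact: in our regime the query time must be $\omega(\log n)$ whenever $t_u=\mathcal{O}(1)$. Because of this, constant-factor or $\operatorname{poly}(\log\log n)$ discrepancies do not matter, and any restatement with a $(\log n/\log\log n)^2/\operatorname{polylog}\log n$ bound suffices.
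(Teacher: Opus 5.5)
Your proposal is correct and matches the paper's treatment: the statement is imported from Ko without proof. The paper makes the same convention checks you list, namely per-coordinate charging of $t_u$, $t_{\mathrm{tot}}=t_q$ because there is a single query, $N(n)=n^2=n^{1+\Omega(1)}$, and no space restriction, in the discussion after \cref{def:multiphase-ip}, in \cref{rem:omega-log-suffices} (which is exactly your observation that only an $\omega(\log n)$ query bound is needed downstream), and in the proof of \cref{cor:no-fast-multiphase}.
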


\begin{remark}[what the separation needs from \cref{thm:ko-multiphase}]
\label{rem:omega-log-suffices}
The proof of \cref{cor:no-fast-multiphase} uses only that Multiphase Inner
Product admits \emph{some} $\omega(\log n)$ query lower bound under a
polylogarithmic update bound; the exponent $2$ and the factor hidden by
$\widetilde{\Omega}$ are immaterial.  The threshold is essential, though:
an $\Omega(\log n)$ bound would \emph{not} suffice, since the data structure
of \cref{thm:transfer} attains $t_q=\mathcal{O}(\log n)$; the logarithmic
bounds of P\u{a}tra\c{s}cu and Demaine~\cite{PatrascuDemaine2006} are of
exactly the strength that does not.  Crossing that barrier for dynamic
Boolean problems~\cite{LarsenWeinsteinYu2020} is what the argument
consumes.
\end{remark}

For our purposes, the relevant consequence is the following exclusion.

\begin{corollary}
\label{cor:no-fast-multiphase}
For Multiphase Inner Product with $N(n)=n^2$, no exact deterministic
cell-probe data structure simultaneously has
\[
w=\Theta(\log n),\qquad
t_u=\mathcal{O}(1),\qquad
t_q=\mathcal{O}(\log n),\qquad
S=n^{\mathcal{O}(1)}.
\]
\end{corollary}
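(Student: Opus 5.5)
The plan is to argue by contradiction and apply \cref{thm:ko-multiphase} directly. Suppose there were an exact deterministic cell-probe data structure for Multiphase Inner Product with $N(n)=n^2$ satisfying $w=\Theta(\log n)$, $t_u=\mathcal{O}(1)$, $t_q=\mathcal{O}(\log n)$, and $S=n^{\mathcal{O}(1)}$.

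First, I will check the hypotheses of \cref{thm:ko-multiphase}. The regime $N(n)=n^2$ is of the form $n^{1+\Omega(1)}$, with exponent $1+1$. The word size is $\Theta(\log n)$ by assumption. Since $t_u\le C$ for some constant $C$, and $\log^\kappa n\ge C$ for all sufficiently large $n$ (as $\kappa>0$ is fixed), the update condition $t_u\le\log^\kappa n$ holds for all large $n$.

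Next, I identify $t_{\mathrm{tot}}$ with the query cost. In the three-stage model of \cref{def:three-stage-problem}, the query index is revealed only in the final stage. The probes made after it is revealed are therefore exactly the probes of the single query operation, so $t_{\mathrm{tot}}=t_q$. The theorem then gives
\[
t_q=\Omega\Bigl((\log\log n)^{-c}(\log n)^2(\log\log n)^{-2}\Bigr)
\]
for some constant $c>0$.

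Finally, I compare this with the assumed bound $t_q\le C'\log n$. Dividing, the lower bound yields $\log n\le C''(\log\log n)^{c+2}$ for all large $n$. This fails because every fixed power of $\log\log n$ is $o(\log n)$. So the lower bound is $\omega(\log n)$, which gives the contradiction.

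The only delicate point is matching models. Ko's theorem as quoted is stated for word size $\Theta(\log n)$ without any space restriction, so polynomial space only helps. The quantities $t_u$ and $t_{\mathrm{tot}}$ must also be interpreted per single update and after the query reveal, exactly as in \cref{thm:transfer}. Once this bookkeeping is confirmed, the rest is the asymptotic comparison above.
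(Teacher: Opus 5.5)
Your proof is correct and follows the paper's argument essentially step for step. Both check $N(n)=n^2=n^{1+\Omega(1)}$ and $t_u=\mathcal{O}(1)\le\log^\kappa n$ for large $n$, identify Ko's $t_{\mathrm{tot}}$ with the single query's $t_q$, show the lower bound exceeds $\mathcal{O}(\log n)$ because $\log n/(\log\log n)^{c+2}\to\infty$, and note that the requirement $S=n^{\mathcal{O}(1)}$ only narrows the excluded class.
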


\begin{proof}
The number of preprocessed vectors satisfies $N(n)=n^2=n^{1+\Omega(1)}$, and $t_u=\mathcal{O}(1)$ is at most
$\log^\kappa n$ for all sufficiently
large $n$.  Since there is one final query, Ko's $t_{\mathrm{tot}}$ is our
$t_q$.  Let $c$ be the constant supplied by \cref{thm:ko-multiphase}.  The
lower bound is superlogarithmic because, for this fixed $c$,
\[
\frac{1}{\log n}
\cdot
\frac{1}{(\log\log n)^{c}}
\left(\frac{\log n}{\log\log n}\right)^{2}
=
\frac{\log n}{(\log\log n)^{c+2}}
\longrightarrow\infty.
\]
It therefore contradicts $t_q=\mathcal{O}(\log n)$, and the argument is
insensitive to the value of $c$.
Finally, \cref{thm:ko-multiphase} places no restriction on the number of
memory cells, so adding the requirement $S=n^{\mathcal{O}(1)}$ only narrows
the class of data structures being excluded.
\end{proof}

\subsection{The Witness Language}
\label{sec:multiphase-language}

To apply the transfer framework, we realize Multiphase Inner Product as a
$(1,\mathcal{O}(\log n))$-$\mathsf{SCA}$ encoding using the record language below.
The same language also supports the linear CFG and PEG constructions needed
for the positive results.

Let $B=\{0,1\}$ and
$\Sigma=B\cup\{\mathtt{<},\mathtt{:},\mathtt{>},
\mathtt{\$},\mathtt{\#}\}$.
For $k,s\in B^+$, the word
$\mathord{\mathtt{<}}k\mathtt{:}s\mathord{\mathtt{>}}$
is a record with key $k$ and payload $s$.

\begin{definition}[witness languages]
\label{def:multiphase-witness}
The \emph{record-first} language $C\subseteq\Sigma^*$ is
\begin{equation}
\label{eq:multiphase-C}
\begin{aligned}
C=\bigl\{\,
&\mathord{\mathtt{<}}k_1\mathtt{:}s_1\mathord{\mathtt{>}}\cdots
 \mathord{\mathtt{<}}k_m\mathtt{:}s_m\mathord{\mathtt{>}}\,
 \mathtt{\$}\,x\,\mathtt{\#}\,q
\ \bigm|\
m\geq1,\ k_i,s_i,x,q\in B^+,\\
&\exists i\in\{1,\ldots,m\}\
\bigl[
k_i^R=q,\quad
|s_i|=|x|,\quad
\langle s_i^R,x\rangle_{\mathbb{F}_2}=1
\bigr]
\,\bigr\}.
\end{aligned}
\end{equation}
The second witness language is its reversal,
\begin{equation}
\label{eq:multiphase-L}
L=C^R.
\end{equation}
\end{definition}

A word of $C$ lists the preprocessed records first, followed by the update
vector $x$ and the query key $q$.  A record witnesses membership when its
reversed key equals $q$ and its reversed payload has odd inner product with
$x$.  Keys and payloads may here have arbitrary nonempty lengths; the
fixed-width encoding used for the lower bound will be imposed only on the
canonical words defined in \cref{sec:multiphase-separation}.

For example,
\[
\mathord{\mathtt{<}}10\mathtt{:}11\mathord{\mathtt{>}}
\mathord{\mathtt{<}}01\mathtt{:}101\mathord{\mathtt{>}}
\mathtt{\$}110\mathtt{\#}10
\]
belongs to $C$.  Its second record satisfies $(01)^R=10$ and
$\langle(101)^R,110\rangle_{\mathbb{F}_2}=1$.

\begin{figure}[ht]
\centering
\begin{tikzpicture}[
  block/.style={
    draw,
    rounded corners=1.5pt,
    minimum height=7.5mm,
    inner sep=3pt,
    font=\small
  },
  label/.style={font=\small}
]
\node[block] (r1) at (0,0)
  {$\mathord{\mathtt{<}}k_1\mathtt{:}s_1\mathord{\mathtt{>}}$};
\node[label] (d1) at (1.55,0) {$\cdots$};
\node[block, fill=blue!8] (ri) at (3.25,0)
  {$\mathord{\mathtt{<}}k_i\mathtt{:}s_i\mathord{\mathtt{>}}$};
\node[label] (d2) at (4.95,0) {$\cdots$};
\node[block] (rm) at (6.65,0)
  {$\mathord{\mathtt{<}}k_m\mathtt{:}s_m\mathord{\mathtt{>}}$};
\node[block] (x) at (8.55,0) {$\mathtt{\$}\,x$};
\node[block] (q) at (10.20,0) {$\mathtt{\#}\,q$};

\draw[<->, blue!60!black, line width=0.7pt]
  (ri.north) to[out=25,in=155,looseness=0.75]
  node[above, label, pos=0.5] {$k_i^R=q$} (q.north);
\draw[<->, red!65!black, line width=0.7pt]
  (ri.south) to[out=-20,in=-160,looseness=0.68]
  node[below, label, pos=0.48]
  {$|s_i|=|x|,\ \langle s_i^R,x\rangle_{\mathbb{F}_2}=1$}
  (x.south);

\draw[black!65]
  ([yshift=-9.5mm]r1.south west) --
  ([yshift=-11mm]r1.south west) --
  ([yshift=-11mm]rm.south east) --
  ([yshift=-9.5mm]rm.south east);
\path
  ([yshift=-11mm]r1.south west) --
  ([yshift=-11mm]rm.south east)
  node[midway, below=2mm, label] {Phase I};
\draw[black!65]
  ([yshift=-9.5mm]x.south west) --
  ([yshift=-11mm]x.south west) --
  ([yshift=-11mm]x.south east) --
  ([yshift=-9.5mm]x.south east);
\path
  ([yshift=-11mm]x.south west) --
  ([yshift=-11mm]x.south east)
  node[midway, below=2mm, label] {Phase II};
\draw[black!65]
  ([yshift=-9.5mm]q.south west) --
  ([yshift=-11mm]q.south west) --
  ([yshift=-11mm]q.south east) --
  ([yshift=-9.5mm]q.south east);
\path
  ([yshift=-11mm]q.south west) --
  ([yshift=-11mm]q.south east)
  node[midway, below=2mm, label] {Phase III};
\end{tikzpicture}
\caption{The three-stage serialization of the record language.}
\label{fig:multiphase-nesting}
\end{figure}

The order displayed in \cref{fig:multiphase-nesting} places the
payload--update comparison inside the key--query comparison.  After choosing
one record, a linear grammar can generate both pairs from outside inward while
retaining only one nonterminal.  We make this argument formal below.

By \cref{eq:multiphase-L}, the words of $L$ are the reversals of the words of
$C$ and therefore have the form
\[
q^R\,\mathtt{\#}\,x^R\,\mathtt{\$}\,
\mathord{\mathtt{>}}s_m^R\,\mathtt{:}\,k_m^R\mathord{\mathtt{<}}
\cdots
\mathord{\mathtt{>}}s_1^R\,\mathtt{:}\,k_1^R\mathord{\mathtt{<}},
\]
in which the query and the update vector now precede all records.

\paragraph{Linear context-freeness.}
Let $G$ be the context-free grammar with start symbol $A$ and nonterminals
$A,D,E,R,D',E',K,P_0,P_1$.  Its productions are
\begin{equation}
\label{eq:multiphase-linear-cfg}
\begin{array}{rcl}
A
&\to&
  \mathord{\mathtt{<}}0D
  \mid \mathord{\mathtt{<}}1D
  \mid \mathord{\mathtt{<}}0K0
  \mid \mathord{\mathtt{<}}1K1,
\\[1mm]
D
&\to&
  0D\mid1D\mid\mathtt{:}0E\mid\mathtt{:}1E,
\\
E
&\to&
  0E\mid1E\mid\mathord{\mathtt{>}}A,
\\[1mm]
R
&\to&
  \mathtt{\$}
  \mid\mathord{\mathtt{<}}0D'
  \mid\mathord{\mathtt{<}}1D',
\\
D'
&\to&
  0D'\mid1D'\mid\mathtt{:}0E'\mid\mathtt{:}1E',
\\
E'
&\to&
  0E'\mid1E'\mid\mathord{\mathtt{>}}R,
\\[1mm]
K
&\to&
  0K0\mid1K1\mid\mathtt{:}P_1\mathtt{\#},
\\[1mm]
P_0
&\to&
  \mathord{\mathtt{>}}R
  \mid0P_00\mid0P_01\mid1P_00\mid1P_11,
\\
P_1
&\to&
  0P_10\mid0P_11\mid1P_10\mid1P_01.
\end{array}
\end{equation}
Every right-hand side contains at most one nonterminal, so $G$ is linear.
The nonterminals $A,D,E$ generate the records preceding a chosen witness,
while $R,D',E'$ generate those following it.  The nonterminal $K$ matches the
witness key with the final query, and $P_0,P_1$ match the witness payload with
the update vector while recording their inner-product parity.

For a nonterminal $N$, write $L_G(N)=\{\,w\in\Sigma^*\mid N\cfgderive{G}^{*}w\,\}$.

To see what remains to be proved, suppose that $A$ chooses the $i$th record
as the witness and that $a$ is the first bit of its key.  The relevant part
of a derivation then has the schematic form
\[
\begin{aligned}
A
&\cfgderive{G}^{*}
u\,\mathord{\mathtt{<}}aKa\\
&\cfgderive{G}^{*}
u\,\mathord{\mathtt{<}}av\mathtt{:}P_1\mathtt{\#}v^Ra\\
&\cfgderive{G}^{*}
u\,\mathord{\mathtt{<}}av\mathtt{:}s
\mathord{\mathtt{>}}d\,x\mathtt{\#}v^Ra .
\end{aligned}
\]
Here $u$ should be the sequence of records before the witness, while $d$
should consist of the records after it followed by $\mathtt{\$}$.  This
schematic form can now be compared directly with
\cref{eq:multiphase-C}.  The key condition is visible since
$v^Ra=(av)^R$, and the remaining payload conditions are isolated in the
derivation from $P_1$.  It must enforce $|s|=|x|$ and
$\langle s^R,x\rangle_{\mathbb{F}_2}=1$.
The next lemma verifies precisely these two nested mechanisms.  Its
$P_p$ invariant tracks the payload length and parity, and its $K$ invariant
wraps the reversed key--query match around that comparison.

\begin{lemma}[grammar invariants]
\label{lem:multiphase-cfg-invariants}
Let $\mathcal{R}=L_G(R)=
(\mathord{\mathtt{<}}B^+\mathtt{:}B^+\mathord{\mathtt{>}})^*\mathtt{\$}$.
For each $p\in\{0,1\}$,
\begin{equation}
\label{eq:multiphase-parity-invariant}
L_G(P_p)=\bigl\{\,s\,\mathord{\mathtt{>}}\,d\,x
\bigm| s,x\in B^*,\ d\in\mathcal{R},\ |s|=|x|,
\langle s^R,x\rangle_{\mathbb{F}_2}=p\,\bigr\}.
\end{equation}
The corresponding key invariant is
\begin{equation}
\label{eq:multiphase-key-invariant}
L_G(K)
=
\{\,v\mathtt{:}y\mathtt{\#}v^R
\mid v\in B^*,\ y\in L_G(P_1)\,\}.
\end{equation}
\end{lemma}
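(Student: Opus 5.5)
The plan is to establish the three grammar invariants in dependency order: first $\mathcal{R}$, then both $P_p$ simultaneously, and finally $K$. Each step is a routine two-inclusion argument for a linear grammar.

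\emph{Step 1: the language of $R$.} The nonterminals $R,D',E'$ form a right-linear (regular) subgrammar that refers to no other nonterminal. By induction on derivation length one shows
\[
L_G(D')=B^*\mathtt{:}B^+\mathord{\mathtt{>}}\mathcal{R},\qquad
L_G(E')=B^*\mathord{\mathtt{>}}\mathcal{R},\qquad
L_G(R)=\mathcal{R}.
\]
The converse inclusion goes by induction on the number of records and the lengths of their fields. The point to note is that $\mathtt{<}0D'\mid\mathtt{<}1D'$ together with $D'\to\cdots$ forces the key to lie in $B^+$, and $\mathtt{:}0E'\mid\mathtt{:}1E'$ forces the payload to lie in $B^+$.

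\emph{Step 2: the parity invariant, proved jointly for $p\in\{0,1\}$.} For ``$\subseteq$'' I argue by induction on the length of a derivation from $P_p$. The base production $P_0\to\mathord{\mathtt{>}}R$ yields $s=x=\varepsilon$ with inner product $0$, using Step 1. For a production $P_p\to bP_{p'}c$, the index satisfies $p=p'\oplus(b\wedge c)$; this is read off the four productions for each $p$. By induction the inner word is $s'\mathord{\mathtt{>}}d\,x'$ with $|s'|=|x'|$ and $\langle s'^R,x'\rangle=p'$. The new word is $s=bs'$ and $x=x'c$. Hence $s^R=s'^Rb$, and the inner product $\langle s'^Rb,\,x'c\rangle=\langle s'^R,x'\rangle\oplus bc=p$. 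This index alignment is the one point needing care: the outermost payload bit pairs with the last update bit, which is exactly what $s^R$ against $x$ requires.

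For ``$\supseteq$'' I induct on $|s|$. If $|s|=0$, then $p=0$, and the word $\mathord{\mathtt{>}}d$ is derived via $P_0\to\mathord{\mathtt{>}}R$; this forces $P_1$ to generate no words with $|s|=0$, matching the fact that an empty inner product is $0$. Otherwise write $s=bs'$ and $x=x'c$, set $p'=p\oplus bc$, and apply the induction hypothesis together with the production $P_p\to bP_{p'}c$, which exists for every $(p,b,c)$ by inspection.

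\emph{Step 3: the key invariant.} Again by induction on derivation length or on $|v|$, the productions $K\to aKa$ for $a\in B$ wrap $v$ and $v^R$ around a core. The terminating production $K\to\mathtt{:}P_1\mathtt{\#}$ supplies $\mathtt{:}y\mathtt{\#}$ with $y\in L_G(P_1)$. Both inclusions follow directly.

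I expect no real obstacle. The main risk is bookkeeping in Step 2: the parity update rule must be checked against all eight productions, and the reversal convention must be handled correctly.
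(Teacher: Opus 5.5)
Your proposal is correct and follows essentially the same route as the paper. You first establish $L_G(R)=\mathcal{R}$, then prove the parity invariant by induction on derivations for $\subseteq$ and on $|s|$ for $\supseteq$, using the rule $p=r\oplus ab$ for the eight productions $P_p\to aP_rb$, and finally wrap the key comparison around it by induction on $|v|$. (Your remark that $P_1$ derives no word with $|s|=0$ belongs to the $\subseteq$ direction, where it follows from $P_1$ having no base production, but this does not affect correctness.)
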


\begin{proof}
The productions for $R,D',E'$ generate the language of record sequences with
nonempty binary fields followed by $\mathtt{\$}$, which gives the displayed
description of $\mathcal{R}$.  The production
$P_0\to\mathord{\mathtt{>}}R$ gives the base case of
\cref{eq:multiphase-parity-invariant} with $s=x=\varepsilon$ and parity zero.
Each recursive production has the form $P_p\to aP_rb$ with $a,b\in B$ and $p=r\oplus ab$.
Indeed, extending $(s,x)$ to $(as,xb)$ changes the parity according to
\[
\langle(as)^R,xb\rangle_{\mathbb{F}_2}
=
\langle s^R,x\rangle_{\mathbb{F}_2}\oplus ab.
\]
The eight productions for $P_0,P_1$ implement these transitions.  Induction
on a derivation proves one inclusion in
\cref{eq:multiphase-parity-invariant}, and induction on $|s|$ proves the
other.

The base production $K\to\mathtt{:}P_1\mathtt{\#}$ gives
\cref{eq:multiphase-key-invariant} for $v=\varepsilon$.  Each production
$K\to aKa$ adds the same bit to the left of $v$ and the right of $v^R$.
Induction on $|v|$ therefore proves the second identity.
\end{proof}

The lemma accounts for every constraint on the selected witness record.  The
remaining nonterminals generate only the records surrounding it, so combining
their roles identifies the full language of $G$.

\begin{proposition}[linear context-freeness]
\label{prop:multiphase-linear}
The grammar $G$ generates $C$.  Consequently,
\[
C\in\mathsf{LIN}
\qquad\text{and}\qquad
L=C^R\in\mathsf{LIN}.
\]
\end{proposition}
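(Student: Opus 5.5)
We prove $L(G)=C$ by describing $L_G(A)$, $L_G(D)$ and $L_G(E)$ explicitly, the same way \cref{lem:multiphase-cfg-invariants} describes $L_G(P_p)$ and $L_G(K)$. Write $\mathcal{W}=\mathord{\mathtt{<}}B^+\mathtt{:}B^+\mathord{\mathtt{>}}$ for the set of single records.

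\textbf{Step 1 (the witness tail).} Let $T$ be the set of words $\mathord{\mathtt{<}}k\mathtt{:}s\mathord{\mathtt{>}}\,d\,x\,\mathtt{\#}\,q$ with $k,s,x,q\in B^+$, $d\in\mathcal{R}$, $k^R=q$, $|s|=|x|$ and $\langle s^R,x\rangle_{\mathbb{F}_2}=1$. We show that the two productions $A\to\mathord{\mathtt{<}}aKa$ ($a\in B$) generate exactly $T$.

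By the key invariant, $\mathord{\mathtt{<}}aKa$ generates the words $\mathord{\mathtt{<}}av\mathtt{:}y\mathtt{\#}v^Ra$ with $y\in L_G(P_1)$. By the parity invariant, $y=s\mathord{\mathtt{>}}dx$ with the required length and parity conditions. We then set $k=av$, which is nonempty, and $q=v^Ra=k^R$.

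Nonemptiness of $s$ and $x$ comes for free. Since $\langle s^R,x\rangle=1$, we have $s\neq\varepsilon$, and $|x|=|s|$ gives $x\neq\varepsilon$. Conversely, any $k=av$ decomposes uniquely this way, so $T$ is contained in the language generated.

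\textbf{Step 2 (the prefix records).} We claim
\[
L_G(A)=\mathcal{W}^*T,\qquad L_G(D)=B^*\mathtt{:}B^+\mathord{\mathtt{>}}L_G(A),\qquad L_G(E)=B^*\mathord{\mathtt{>}}L_G(A).
\]
The inclusion $\subseteq$ is by simultaneous induction on derivation length. The inclusion $\supseteq$ is by induction on the number of prefix records, then on field lengths.

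The productions $A\to\mathord{\mathtt{<}}aD$, $D\to aD\mid\mathtt{:}aE$ and $E\to aE\mid\mathord{\mathtt{>}}A$ read exactly one record $\mathord{\mathtt{<}}k\mathtt{:}s\mathord{\mathtt{>}}$ with $k,s\in B^+$ and return to $A$. The first bits are forced by the productions $\mathord{\mathtt{<}}a$ and $\mathtt{:}a$, which guarantees nonemptiness.

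\textbf{Step 3 (matching with $C$).} We show $\mathcal{W}^*T=C$. A word in $\mathcal{W}^*T$ has the form $r_1\cdots r_{i-1}\,r_i\,r_{i+1}\cdots r_m\,\mathtt{\$}\,x\mathtt{\#}q$, where $r_i$ is the witness record from $T$ and $r_{i+1}\cdots r_m\mathtt{\$}=d\in\mathcal{R}$. Here $m\ge i\ge1$, and $r_i$ satisfies the condition in \cref{eq:multiphase-C}.

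Conversely, let $w\in C$ have a witness index $i$. Split $w$ at the $i$th record, take $d=r_{i+1}\cdots r_m\mathtt{\$}$, and $w$ lies in $\mathcal{W}^*T$.

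The step needing the most care is parsing the decomposition in this converse. It is unambiguous because the record delimiters and $\mathtt{\$},\mathtt{\#}$ occur only where expected, so the records, $x$ and $q$ are determined by the word. The rest is bookkeeping.

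\textbf{Step 4 (consequences).} $G$ is linear, so $C\in\mathsf{LIN}$. For the reversal, replace each production $N\to\alpha$ by $N\to\alpha^R$. The result is again a linear grammar, and it generates $C^R=L$. An induction on derivations shows that reversing right-hand sides reverses every sentential form. Hence $L\in\mathsf{LIN}$.
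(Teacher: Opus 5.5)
Your proof is correct and follows the paper's argument: the $A,D,E$ loop generates the records before the witness, $\mathord{\mathtt{<}}aKa$ together with the key and parity invariants generates the witness record and everything after it, and reversing right-hand sides gives a linear grammar for $L$. Your identity $L_G(A)=\mathcal{W}^*T$ makes the paper's informal description precise, and deriving $s,x\neq\varepsilon$ from the odd parity is a clean substitute for the paper's remark that a derivation from $P_1$ must take a recursive step; the unique parsing you flag in Step 3 is not actually needed, since membership in $C$ only asks for some decomposition and both inclusions just carry a given decomposition across.
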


\begin{proof}
The first two alternatives for $A$ generate a record with nonempty key and
payload through $D,E$ and then return to $A$.  After generating any number of
such records, one of the other two alternatives chooses a witness record and
produces $\mathord{\mathtt{<}}aKa$ for some $a\in B$.

Write the selected key as $av$.  Substituting
\cref{eq:multiphase-key-invariant,eq:multiphase-parity-invariant} into this
part of the derivation produces
\[
\mathord{\mathtt{<}}\,
\underbrace{av}_{k_i}\,
\mathtt{:}\,
\underbrace{s}_{s_i}\,
\mathord{\mathtt{>}}\,
d\,x\,\mathtt{\#}\,
\underbrace{v^Ra}_{q},
\]
where $d\in\mathcal{R}$ contains the records following the witness and the
delimiter $\mathtt{\$}$.  Hence $q=(av)^R=k_i^R$, $|s_i|=|x|$, and
$\langle s_i^R,x\rangle_{\mathbb{F}_2}=1$.
The selected key is nonempty because of its first bit $a$.  The selected
payload and $x$ are nonempty because a derivation from $P_1$ must use at
least one recursive production before reaching the only base production in
$P_0$.  Thus every word generated by $G$ belongs to $C$.

Conversely, take a word in $C$ and choose a record satisfying the three
conditions in \cref{eq:multiphase-C}.  The productions for $A,D,E$ generate
the preceding records.  Splitting the nonempty witness key as $av$, the
productions for $K$ match it with $q=(av)^R$.  The parity invariant generates
the witness payload, all following records, $\mathtt{\$}$, and $x$.
Therefore $G$ generates every word in $C$, and hence $L(G)=C$.

Finally, reversal preserves linearity.  For any linear grammar, replace each
production $N\to uMv$ by $N\to v^RMu^R$ and each terminal-only production
$N\to u$ by $N\to u^R$.  The resulting linear grammar generates the reversal
of the original language.  Since $C\in\mathsf{LIN}$, it follows that
$L=C^R\in\mathsf{LIN}$.
\end{proof}

By \cref{prop:multiphase-linear}, both witness languages are linear
context-free.  It remains to distinguish them at the level of PEGs.  We first
give a PEG for the record-first language $C$, while the hard encoding in
\cref{sec:multiphase-separation} will establish
$L=C^R\notin\mathsf{PEG}$.

\subsection{A PEG for the Record-First Language}
\label{sec:multiphase-peg}

To prove non-closure under reversal, we construct an explicit total PEG for
$C$, complementing the lower bound $L=C^R\notin\mathsf{PEG}$ proved later.
Unlike the linear grammar above, which can choose a witness record
nondeterministically, this PEG searches the records from left to right and
commits to the first one that satisfies the witness conditions.

\paragraph{The grammar.}
Let $G_C$ be the PEG over $\Sigma$ whose start nonterminal is
$\mathsf{Search}$ and whose rules are
\begin{equation}
\label{eq:multiphase-direct-peg}
\begin{array}{rcl@{\qquad\qquad}rcl}
\mathsf{Bit}
&\leftarrow&
  \mathtt{0}/\mathtt{1}
&
\mathsf{M}_0
&\leftarrow&
  \mathtt{0}\,\mathsf{M}_0\,\mathsf{Bit}
  /\mathtt{1}\,
   (\mathsf{M}_0\,\mathtt{0}/\mathsf{M}_1\,\mathtt{1})
  /\mathord{\mathtt{>}}\,\mathsf{Rest}\,\mathtt{\$}
\\
\mathsf{Bits}
&\leftarrow&
  \mathsf{Bit}\,\mathsf{Bits}/\mathsf{Bit}
&
\mathsf{M}_1
&\leftarrow&
  \mathtt{0}\,\mathsf{M}_1\,\mathsf{Bit}
  /\mathtt{1}\,
   (\mathsf{M}_1\,\mathtt{0}/\mathsf{M}_0\,\mathtt{1})
\\
\mathsf{Entry}
&\leftarrow&
  \mathord{\mathtt{<}}\,\mathsf{Bits}\,\mathtt{:}\,
  \mathsf{Bits}\,\mathord{\mathtt{>}}
&
\mathsf{K}
&\leftarrow&
  \mathtt{0}\,\mathsf{K}\,\mathtt{0}
  /\mathtt{1}\,\mathsf{K}\,\mathtt{1}
  /\mathtt{:}\,\mathsf{M}_1\,\mathtt{\#}
\\
\mathsf{Rest}
&\leftarrow&
  \mathsf{Entry}\,\mathsf{Rest}/\varepsilon
&
\mathsf{Hit}
&\leftarrow&
  \mathord{\mathtt{<}}\,
  (\mathtt{0}\,\mathsf{K}\,\mathtt{0}
   /\mathtt{1}\,\mathsf{K}\,\mathtt{1})\,
  \mathsf{End}
\\
\mathsf{End}
&\leftarrow&
  !.
&
\mathsf{Search}
&\leftarrow&
  \mathsf{Hit}/\mathsf{Entry}\,\mathsf{Search}
\end{array}
\end{equation}
The expression $\mathsf{Search}$ first tries to use the current record as a
witness through $\mathsf{Hit}$.  If that attempt fails, it consumes one record
through $\mathsf{Entry}$ and continues.  Inside $\mathsf{Hit}$, the expression
$\mathsf{K}$ matches the selected key with the final query, while
$\mathsf{M}_p$ matches the selected payload with the update vector and tracks
the required inner-product parity in the subscript $p$.

Since prioritized choice commits to the first successful alternative,
language equations alone do not establish correctness.  We therefore
characterize, for each expression, both the inputs on which it succeeds and
the suffix it leaves unconsumed.

\paragraph{Residual-suffix invariants.}
Let $\mathcal{R}_0=(\mathord{\mathtt{<}}B^+\mathtt{:}B^+\mathord{\mathtt{>}})^*$ be the language of complete record
sequences, so that the language
$\mathcal{R}$ of \cref{lem:multiphase-cfg-invariants} equals
$\mathcal{R}_0\,\mathtt{\$}$.

\begin{lemma}[record scanners]
\label{lem:multiphase-peg-records}
For every $k,s\in B^+$, $d\in\mathcal{R}_0$, and $z\in\Sigma^*$,
\[
\pegsem{G_C}
  {\mathsf{Entry}}
  {\mathord{\mathtt{<}}k\mathtt{:}s\mathord{\mathtt{>}}z}
  {z}
\qquad\text{and}\qquad
\pegsem{G_C}
  {\mathsf{Rest}}
  {d\mathtt{\$}z}
  {\mathtt{\$}z}.
\]
Conversely, every successful evaluation of $\mathsf{Entry}$ consumes
one record with nonempty binary fields.  If $\mathsf{Rest}$ succeeds and
leaves $\mathtt{\$}z$, then it has consumed a word in $\mathcal{R}_0$.
\end{lemma}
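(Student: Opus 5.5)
We will prove the statement by first establishing the behaviour of the greedy scanner $\mathsf{Bits}$, then lifting it to $\mathsf{Entry}$, and finally handling $\mathsf{Rest}$ by induction on the number of records.

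\textbf{Step 1: the scanner $\mathsf{Bits}$.} We claim that for every $u\in B^+$ and every $z$ whose first symbol (if any) is not in $B$,
\[
\pegsem{G_C}{\mathsf{Bits}}{uz}{z},
\]
and conversely that any success of $\mathsf{Bits}$ consumes a maximal nonempty binary prefix. The forward direction goes by induction on $|u|$. If $|u|\ge2$, the first alternative $\mathsf{Bit}\,\mathsf{Bits}$ succeeds by the induction hypothesis, and (\textsc{Choice-L}) commits to it. If $|u|=1$, the first alternative fails, because after one bit the next symbol is not a bit, so the inner $\mathsf{Bits}$ fails by (\textsc{Term-F}) in both alternatives of $\mathsf{Bit}$. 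Then (\textsc{Choice-R}) uses the second alternative, which consumes exactly the one bit. The converse is an induction on the height of the derivation: every success consumes at least one bit, and the first alternative is chosen exactly when a second bit follows.

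\textbf{Step 2: $\mathsf{Entry}$.} On input $\mathord{\mathtt{<}}k\mathtt{:}s\mathord{\mathtt{>}}z$ we apply (\textsc{Seq}) to $\mathord{\mathtt{<}}$, to $\mathsf{Bits}$ on $k\mathtt{:}\cdots$ (Step 1 applies because $\mathtt{:}\notin B$), to $\mathtt{:}$, to $\mathsf{Bits}$ on $s\mathord{\mathtt{>}}z$, and to $\mathord{\mathtt{>}}$. For the converse, any successful derivation of $\mathsf{Entry}$ decomposes along the same five (\textsc{Seq}) steps, and Step 1 forces the two consumed fields to lie in $B^+$. This shows that the consumed word is one record with nonempty binary fields.

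\textbf{Step 3: $\mathsf{Rest}$.} For the forward direction we induct on the number of records in $d$. If $d=\varepsilon$, the input is $\mathtt{\$}z$; then $\mathsf{Entry}$ fails since the first symbol is not $\mathord{\mathtt{<}}$, the sequence fails by (\textsc{Seq-F}), and (\textsc{Choice-R}) with $\varepsilon$ returns $\mathtt{\$}z$. If $d=\mathord{\mathtt{<}}k\mathtt{:}s\mathord{\mathtt{>}}d'$, Step 2 followed by the induction hypothesis makes $\mathsf{Entry}\,\mathsf{Rest}$ succeed with residue $\mathtt{\$}z$, and (\textsc{Choice-L}) commits to it.

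For the converse we induct on the height of the derivation. If the second alternative was used, nothing was consumed, and $\varepsilon\in\mathcal{R}_0$. Otherwise $\mathsf{Entry}$ consumed one record by Step 2, and the recursive $\mathsf{Rest}$ succeeded leaving $\mathtt{\$}z$, so by the induction hypothesis it consumed a word of $\mathcal{R}_0$. Concatenating the two gives a word of $\mathcal{R}_0$.

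\textbf{Main obstacle.} The delicate point is Step 1. Prioritized choice means the forward claim must show that the \emph{first} alternative fails exactly at the last bit, and that it is not merely the case that some derivation exists. This relies on the delimiter following each field not being a bit. Determinism of the semantics guarantees that the residues obtained this way are the unique ones. Everything after Step 1 is routine bookkeeping with (\textsc{Seq}) and (\textsc{Choice}).
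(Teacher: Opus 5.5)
Your proposal is correct and follows the paper's proof: induction on block length for $\mathsf{Bits}$, the non-bit delimiters $\mathtt{:}$ and $\mathord{\mathtt{>}}$ to pin down $\mathsf{Entry}$, and induction on the number of records for $\mathsf{Rest}$. You additionally spell out the prioritized-choice details that the paper leaves implicit, namely why $\mathsf{Bit}\,\mathsf{Bits}$ fails at the last bit and why $\mathsf{Entry}$ fails on $\mathtt{\$}$.
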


\begin{proof}
An induction on the block length shows that $\mathsf{Bits}$ consumes
one nonempty binary block.  The delimiters then give the claims for
$\mathsf{Entry}$, and an induction on the number of records gives those for
$\mathsf{Rest}$.
\end{proof}

The record scanners isolate the unconstrained records following the witness.
We can now characterize the nested payload computation.

\begin{lemma}[parity scanner]
\label{lem:multiphase-peg-parity}
For $p\in B$, $s,x\in B^*$, $d\in\mathcal{R}_0$, and $z\in\Sigma^*$,
\begin{equation}
\label{eq:multiphase-peg-parity}
\begin{aligned}
&\pegsem{G_C}
  {\mathsf{M}_p}
  {s\mathord{\mathtt{>}}d\mathtt{\$}xz}
  {z}
\\
&\hspace{20mm}\Longleftrightarrow\quad
|s|=|x|
\quad\text{and}\quad
\langle s^R,x\rangle_{\mathbb{F}_2}=p.
\end{aligned}
\end{equation}
\end{lemma}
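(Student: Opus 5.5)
We prove a stronger, suffix-generic form of the lemma by induction on $|s|$. For $p\in B$, $s\in B^*$, $d\in\mathcal{R}_0$ and an arbitrary $y\in\Sigma^*$, we claim:
\[
\pegsem{G_C}{\mathsf{M}_p}{s\mathord{\mathtt{>}}d\mathtt{\$}y}{y'}
\]
holds exactly when $y=xy'$ with $x\in B^*$, $|x|=|s|$ and $\langle s^R,x\rangle_{\mathbb{F}_2}=p$. Otherwise the result is $\pegfail$. In every case some derivation exists.

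\Cref{eq:multiphase-peg-parity} then follows from the determinism of the semantics. Given $xz$ with $x\in B^*$, there is at most one prefix of length $|s|$. Two different lengths $|x|$ would leave suffixes of different lengths, and hence different residuals. So the result is $z$ exactly when $|x|=|s|$ and the parity condition holds.

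\emph{Base case $s=\varepsilon$.} The input starts with $\mathord{\mathtt{>}}$, so every alternative of $\mathsf{M}_p$ that begins with $\mathtt{0}$ or $\mathtt{1}$ fails by (\textsc{Term-F}) and (\textsc{Seq-F}). For $\mathsf{M}_1$ this exhausts the alternatives, so $\mathsf{M}_1$ fails. This is correct, since the empty inner product is $0\neq1$. For $\mathsf{M}_0$, the third alternative consumes $\mathord{\mathtt{>}}$. By \cref{lem:multiphase-peg-records}, $\mathsf{Rest}$ applied to $d\mathtt{\$}y$ leaves exactly $\mathtt{\$}y$. Then $\mathtt{\$}$ is consumed and the residual is $y$, which matches $x=\varepsilon$ with parity $0$.

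\emph{Inductive step, $s=0s'$.} The first alternative $\mathtt{0}\,\mathsf{M}_p\,\mathsf{Bit}$ applies. By the induction hypothesis, the inner $\mathsf{M}_p$ succeeds exactly when $y=x'y''$ with $|x'|=|s'|$ and parity $p$. $\mathsf{Bit}$ then succeeds exactly when $y''=by'$ with $b\in B$. Since
\[
\langle (0s')^R,x'b\rangle_{\mathbb{F}_2}=\langle s'^R,x'\rangle_{\mathbb{F}_2}\oplus 0\cdot b,
\]
the parity is unchanged, which gives the claim with $x=x'b$. If the first alternative fails, the remaining alternatives (for $\mathsf{M}_0$: the $\mathtt{1}$-branch and the $\mathord{\mathtt{>}}$-branch) also fail on the leading $\mathtt{0}$. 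So overall failure coincides with failure of the claimed condition.

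\emph{Inductive step, $s=1s'$.} Here
\[
\langle (1s')^R,x'b\rangle_{\mathbb{F}_2}=\langle s'^R,x'\rangle_{\mathbb{F}_2}\oplus b.
\]
Inside $\mathtt{1}(\mathsf{M}_p\,\mathtt{0}/\mathsf{M}_{1-p}\,\mathtt{1})$, the first branch succeeds exactly when the inner parity is $p$ and the next bit is $0$, by the induction hypothesis. Prioritized choice reaches the second branch only when the first fails. The second branch succeeds exactly when the inner parity is $1-p$ and the next bit is $1$. These two cases are disjoint and together are exactly the decompositions with total parity $p$. Moreover, any success of the first branch is already correct, so committing to it loses nothing.

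The main obstacle is this $\mathtt{1}$-case with prioritized choice. We must check that the first branch cannot succeed on an input whose correct parity needs the second branch. The suffix-generic induction hypothesis handles this: it pins down exactly when the inner $\mathsf{M}$ succeeds and which residual it leaves, for every continuation $y$. A related point is that $\mathsf{Rest}$ must stop precisely at the $\mathtt{\$}$, without backtracking. This is supplied by \cref{lem:multiphase-peg-records}, because $\mathtt{\$}$ cannot begin a record.
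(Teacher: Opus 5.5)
Your proof is correct. It follows the paper's skeleton: induction on $|s|$, the base case from \cref{lem:multiphase-peg-records}, the recurrence $p=r\oplus ab$, and a separate treatment of prioritized choice when the leading bit is $\mathtt{1}$.

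What differs is the induction hypothesis. The paper inducts directly on the stated biconditional, with $z$ universally quantified so that the recursive call is handled with $bz$ in place of $z$. It settles the $\mathtt{1}$-case by remarking that both inner alternatives test their terminal at the same position after the recursive call. Read literally, that biconditional only speaks about residuals reached by consuming a binary word after $\mathtt{\$}$. It therefore leaves two things open:
\begin{itemize}
\item it does not exclude the call $\mathsf{M}_p$ in the first branch stopping somewhere else where a $\mathtt{0}$ happens to follow, for instance inside $d$;
\item it does not supply the failure derivation that (\textsc{Choice-R}) needs; the paper leaves this to the independent totality argument of \cref{prop:multiphase-peg-total}.
\end{itemize}

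Your suffix-generic invariant says that on every continuation $y$, the call $\mathsf{M}_p$ either consumes exactly $|s|$ bits after $\mathtt{\$}$ with the right parity or returns $\pegfail$, and that a derivation always exists. This is exactly the fact behind the paper's ``same position'' remark. Your version therefore makes that step rigorous and obtains termination within the same induction. The price is the final specialization back to the stated form, which you handle correctly by comparing residual lengths.
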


\begin{proof}
We use simultaneous induction on $|s|$.  If $s=\varepsilon$, the first input
symbol is $\mathtt{>}$.  Only $\mathsf{M}_0$ has a matching base alternative,
which consumes $\mathord{\mathtt{>}}d\mathtt{\$}$ by
\cref{lem:multiphase-peg-records}.  It leaves $z$ if and only if
$x=\varepsilon$, in agreement with the parity of the empty inner product.

Suppose $s=as'$ with $a\in B$.  A successful recursive call leaves one final
bit $b$ to be consumed after the call, so $x=x'b$.  The induction hypothesis
identifies the state $r$ of the recursive call with
\(\langle(s')^R,x'\rangle_{\mathbb{F}_2}\), while
\[
\langle(as')^R,x'b\rangle_{\mathbb{F}_2}
=
\langle(s')^R,x'\rangle_{\mathbb{F}_2}\oplus ab.
\]
The alternatives for $\mathsf{M}_0,\mathsf{M}_1$ implement the transitions
$p=r\oplus ab$.  When $a=1$, the two inner alternatives inspect
$\mathtt{0}$ and $\mathtt{1}$ at the same position after the recursive call.
Thus the first alternative cannot succeed at a different position and
prioritized choice preserves the displayed equivalence.
\end{proof}

The key scanner wraps this parity computation in the reversed key--query
comparison.

\begin{lemma}[key scanner]
\label{lem:multiphase-peg-key}
For $v,q,s,x\in B^*$, $d\in\mathcal{R}_0$, and $z\in\Sigma^*$,
\begin{equation}
\label{eq:multiphase-peg-key}
\begin{aligned}
&\pegsem{G_C}
  {\mathsf{K}}
  {v\mathtt{:}s\mathord{\mathtt{>}}d
   \mathtt{\$}x\mathtt{\#}qz}
  {z}
\\
&\hspace{12mm}\Longleftrightarrow\quad
q=v^R,\quad
|s|=|x|,\quad
\langle s^R,x\rangle_{\mathbb{F}_2}=1.
\end{aligned}
\end{equation}
\end{lemma}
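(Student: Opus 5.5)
We prove the equivalence by induction on $|v|$, mirroring the proof of \cref{lem:multiphase-peg-parity}. The parity scanner \cref{eq:multiphase-peg-parity} is used as a black box for the innermost comparison. Fix $s,x,d,z$ and write the input as $w_v=v\mathtt{:}s\mathord{\mathtt{>}}d\mathtt{\$}x\mathtt{\#}qz$.

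\emph{Base case $v=\varepsilon$.} The first input symbol is $\mathtt{:}$, so the alternatives $\mathtt{0}\,\mathsf{K}\,\mathtt{0}$ and $\mathtt{1}\,\mathsf{K}\,\mathtt{1}$ fail at once. By prioritized choice the result is therefore that of $\mathtt{:}\,\mathsf{M}_1\,\mathtt{\#}$ on $w_\varepsilon$.

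We now run $\mathsf{M}_1$ on $s\mathord{\mathtt{>}}d\mathtt{\$}x\mathtt{\#}qz$. Applying \cref{eq:multiphase-peg-parity} with residual $\mathtt{\#}qz$ in place of $z$, it leaves $\mathtt{\#}qz$ if and only if $|s|=|x|$ and $\langle s^R,x\rangle_{\mathbb{F}_2}=1$. The following $\mathtt{\#}$ then leaves $qz$. The claimed right-hand side requires $q=\varepsilon=v^R$, so the residual must be $z$; this matches the residual $qz$ exactly when $q=\varepsilon$.

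\emph{Inductive step $v=av'$ with $a\in B$.} Only the alternative beginning with $a$ can start. By the induction hypothesis applied to $v'$, with the quantity $q$ there ranging over splits of the remaining bits, $\mathsf{K}$ on $w_{v'}$ succeeds with some residual. The trailing terminal $a$ must then match. Note that when the $a$-alternative fails (for instance because the trailing bit differs), the other bit-alternative also fails immediately on its first symbol. The $\mathtt{:}$-alternative fails too, since the first symbol is a bit. So prioritized choice introduces no spurious success.

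\emph{Main obstacle.} The right-hand side of \cref{eq:multiphase-peg-key} fixes $q$ and $z$, whereas the residual $z$ is not uniquely determined by the input: a given string $qz$ admits many decompositions into $q$ and $z$. The statement must therefore be read as ``the evaluation yields residual $z$,'' where $z$ is the suffix after exactly $|q|$ bits.

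To handle this, I would first prove a functional strengthening. For input $v\mathtt{:}s\mathord{\mathtt{>}}d\mathtt{\$}x\mathtt{\#}y$ with $y$ arbitrary, $\mathsf{K}$ succeeds if and only if $y$ begins with $v^R$ and the two payload conditions hold, and in that case it leaves $y$ with the prefix $v^R$ removed. This strengthening is proved by the same induction. The stated lemma then follows by taking $y=qz$ and using determinism of the semantics.

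With the strengthening, the bits are handled as follows. The recursive call on $v'$ consumes exactly $v'^R$ from $y$. The trailing $a$ then consumes the next bit, so overall $\mathsf{K}$ consumes $v'^Ra=v^R$. This yields $q=v^R$ precisely, provided we pick $q$ of length $|v|$; the lengths are consistent because the residual determines $|q|$.
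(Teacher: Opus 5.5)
\textbf{Verdict: there is a gap.} Your induction on $|v|$, using the parity scanner as a black box, is the paper's approach. However, the ``main obstacle'' you identify is not real, and the strengthening you introduce to get around it leaves a hole.

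\textbf{The statement is not ambiguous.} For each split of the tail into $qz$, both sides are fixed propositions, and the paper simply inducts with $q$ and $z$ universally quantified. In the step, if $\mathsf{K}$ leaves $z$, the recursive call must leave $az$. Residuals are suffixes of the input, and when $q=\varepsilon$ the suffix of length $|z|+1$ is $\mathtt{\#}z$, which cannot equal $az$. So $q\neq\varepsilon$ and $q=q'a$. The hypothesis at $(q',az)$ then gives $q'=(v')^R$ and the payload conditions. The same length argument supplies the forward direction of your base case, which you left unjustified: if $\mathsf{K}$ leaves $z$, then $\mathsf{M}_1$ left $\mathtt{\#}z$. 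That forces $q=\varepsilon$, and only then does the parity scanner apply with residual $\mathtt{\#}z$.

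\textbf{The gap is in the base case of your strengthening.} There you need: if $\mathsf{M}_1\,\mathtt{\#}$ succeeds at all on $s\mathord{\mathtt{>}}d\mathtt{\$}x\mathtt{\#}y$, then $|s|=|x|$ and the parity is $1$. \Cref{lem:multiphase-peg-parity} only characterizes when $\mathsf{M}_1$ leaves a \emph{prescribed} residual after a binary $x$. Your $y=qz$ has $z\in\Sigma^*$, so it may contain further occurrences of $\mathtt{\#}$. The lemma does not exclude $\mathsf{M}_1$ stopping just before one of them. Determinism does not help either in the direction you need, where the conditions fail: the lemma then only says that $\mathsf{M}_1$ does not leave $\mathtt{\#}y$.

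The missing fact is that no rule reachable from $\mathsf{M}_1$ contains the terminal $\mathtt{\#}$. Hence the consumed prefix contains no $\mathtt{\#}$, and a residual beginning with $\mathtt{\#}$ must be exactly $\mathtt{\#}y$. With this observation added, your route works. Its inductive step is then cleaner than the direct one, and it also says more, since it characterizes failure on arbitrary tails. Without the observation, drop the strengthening and argue directly as above.
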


\begin{proof}
Induct on $|v|$.  For $v=\varepsilon$, the base alternative invokes
$\mathsf{M}_1$, and \cref{lem:multiphase-peg-parity} followed by
$\mathtt{\#}$ gives the result.  If $v=av'$, the first bit selects the
recursive alternative and its final terminal forces $q=q'a$.  The induction
hypothesis gives $q'=(v')^R$ and preserves the payload conditions, hence
$q=v^R$.  The recursive alternatives begin with different terminals, so
prioritized choice introduces no additional case.
\end{proof}

Adding the outer key bit and the end-of-input test gives the complete witness
test used by $\mathsf{Search}$.

\begin{lemma}[witness test]
\label{lem:multiphase-peg-hit}
For every $y\in\Sigma^*$, the expression $\mathsf{Hit}$ terminates on $y$.  Moreover,
\(\pegsem{G_C}{\mathsf{Hit}}{y}{\varepsilon}\) holds if and only if there exist $k,s,x,q\in B^+$ and
$d\in\mathcal{R}_0$ such that
\[
\begin{gathered}
y=
\mathord{\mathtt{<}}k\mathtt{:}s\mathord{\mathtt{>}}
d\mathtt{\$}x\mathtt{\#}q,\\
k^R=q,\qquad
|s|=|x|,\qquad
\langle s^R,x\rangle_{\mathbb{F}_2}=1.
\end{gathered}
\]
On every other input, $\mathsf{Hit}$ returns $\pegfail$.
\end{lemma}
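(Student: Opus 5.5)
We prove the two parts of the statement separately: termination on every input, and then the exact characterization of success with empty residual.

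\emph{Termination.} We show first that each auxiliary nonterminal terminates on every input; totality of $\mathsf{Hit}$ then follows because its body calls only $\mathsf{K}$ and $\mathsf{End}$. We use a well-founded measure, the length of the remaining input. Every recursive call in $\mathsf{Bits}$, $\mathsf{Rest}$, $\mathsf{M}_0$, $\mathsf{M}_1$ and $\mathsf{K}$ is guarded by at least one consumed terminal. In $\mathsf{Rest}$ the guard is $\mathsf{Entry}$, which begins with $\mathtt{<}$; in $\mathsf{M}_0$ the call to $\mathsf{Rest}$ follows $\mathtt{>}$. Hence strong induction on input length shows that each of these expressions derives some result, either a suffix or $\pegfail$. $\mathsf{Bit}$ and $\mathsf{End}$ are non-recursive, and $\mathsf{Entry}$ is a sequence of terminating pieces. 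So $\mathsf{Hit}=\mathtt{<}(\mathtt{0}\mathsf{K}\mathtt{0}/\mathtt{1}\mathsf{K}\mathtt{1})\mathsf{End}$ terminates.

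\emph{Characterization, ``if'' direction.} Let $y$ have the displayed form, and write $k=av$ with $a\in B$. After $\mathtt{<}$ is consumed, the first bit $a$ selects the matching alternative, since the two alternatives begin with different terminals. The remaining input for $\mathsf{K}$ is $v\mathtt{:}s\mathtt{>}d\mathtt{\$}x\mathtt{\#}q'$ followed by the last bit of $q$, where $q=q'a$. This holds because $q=k^R=v^Ra$. By \cref{lem:multiphase-peg-key} with $z=a$, $\mathsf{K}$ leaves exactly $a$. The trailing terminal $a$ then consumes it, and $\mathsf{End}$ succeeds on the empty residual. Hence $\mathsf{Hit}$ leaves $\varepsilon$.

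\emph{Characterization, ``only if'' direction.} Suppose $\mathsf{Hit}$ leaves $\varepsilon$. Then $y=\mathtt{<}ay_1$ and $\mathsf{K}$ succeeds on $y_1$ with some residual $az$. Since $\mathsf{End}$ then succeeds, $z=\varepsilon$. The main obstacle is that \cref{lem:multiphase-peg-key} is stated only for inputs already of the form $v\mathtt{:}s\mathtt{>}d\mathtt{\$}x\mathtt{\#}qz$. We therefore need a structural converse: any successful run of $\mathsf{K}$ consumes a prefix of exactly that shape. We prove it by induction on derivations, in the same style as the converse parts of \cref{lem:multiphase-peg-records}:
\begin{itemize}
\item The $\mathsf{K}$ recursion consumes a bit string $v$, then $\mathtt{:}$, and on the way out bits forming $v^R$.
\item A successful $\mathsf{M}_p$ must bottom out in the base alternative $\mathtt{>}\mathsf{Rest}\,\mathtt{\$}$. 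It consumes bits $s$ before $\mathtt{>}$ and the same number of trailing bits afterwards, each trailing bit via $\mathsf{Bit}$ or a literal $\mathtt{0}$/$\mathtt{1}$.
\item $\mathsf{Rest}$ consumes a word of $\mathcal{R}_0$, by the converse in \cref{lem:multiphase-peg-records}.
\item $\mathsf{K}$'s base alternative then forces $\mathtt{\#}$.
\end{itemize}
Once this shape is established, \cref{lem:multiphase-peg-key} gives $q=v^R$, $|s|=|x|$ and odd parity. Setting $k=av$ and including the final $a$ in $q$ yields $k^R=q$.

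\emph{Nonemptiness of the fields.} It remains to check that every field is nonempty. $k$ is nonempty because it contains $a$, and $q$ is nonempty because it ends in $a$. For $s$ and $x$: $\mathsf{M}_1$ has no base alternative, so $s\neq\varepsilon$, and then $|x|=|s|\ge1$.

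\emph{Failure on all other inputs.} On every other input, $\mathsf{Hit}$ cannot leave $\varepsilon$. If it left a nonempty residual, that would contradict the success of $\mathsf{End}$. By totality and determinism, $\mathsf{Hit}$ therefore returns $\pegfail$.
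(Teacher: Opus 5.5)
Your proof is correct and follows the same route as the paper: you extend the key scanner by the outer matching bit, use $\mathsf{End}$ to force the empty residual, and get termination and the failure claim from guarded recursion with induction on the remaining input length. Your explicit structural converse for $\mathsf{K}$ and $\mathsf{M}_p$ spells out a step the paper's three-sentence sketch leaves implicit, namely that the key scanner only covers inputs already of the displayed shape. Deriving $s\neq\varepsilon$ from the fact that $\mathsf{M}_1$ has no base alternative is an equivalent substitute for the paper's appeal to odd parity.
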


\begin{proof}
The outer pair of matching bits extends
\cref{lem:multiphase-peg-key} to nonempty $k$ and $q$, while odd parity makes
$s$ and $x$ nonempty.  The final $\mathsf{End}$ requires the residual suffix
to be empty.  Every recursive cycle reached from $\mathsf{Hit}$ consumes a
bit or a complete record, so induction on the remaining input length gives
termination and the failure claim.
\end{proof}

It remains to verify that the complete grammar $G_C$ recognizes $C$ and is total.

\begin{proposition}[correctness]
\label{prop:multiphase-peg-correct}
\[
L(G_C)=C.
\]
\end{proposition}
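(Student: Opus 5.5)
The plan is to characterize the behaviour of $\mathsf{Search}$ on every input by induction on the number of leading records. The witness test of \cref{lem:multiphase-peg-hit} decides whether the current record is a witness, and the record scanner of \cref{lem:multiphase-peg-records} skips a record when it is not. Concretely, I would prove the following statement by induction on $|y|$: for every $y\in\Sigma^*$, $\pegsem{G_C}{\mathsf{Search}}{y}{\varepsilon}$ holds if and only if $y\in C$. At each step, $\mathsf{Search}$ either returns $\varepsilon$ or returns $\pegfail$.

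The first step is the inclusion $L(G_C)\subseteq C$. Suppose $\mathsf{Search}$ returns $\varepsilon$ on $y$; there are two cases.
\begin{itemize}
\item If the $\mathsf{Hit}$ alternative succeeds, then by \cref{lem:multiphase-peg-hit} the input is $y=\mathord{\mathtt{<}}k\mathtt{:}s\mathord{\mathtt{>}}d\mathtt{\$}x\mathtt{\#}q$ with $d\in\mathcal{R}_0$ and the three witness conditions holding. This is a word of $C$, with the first record as witness.
\item If $\mathsf{Hit}$ fails, the second alternative runs. By \cref{lem:multiphase-peg-records}, $\mathsf{Entry}$ consumes exactly one record $\mathord{\mathtt{<}}k\mathtt{:}s\mathord{\mathtt{>}}$ with nonempty binary fields, leaving a strictly shorter suffix $y'$ on which $\mathsf{Search}$ returns $\varepsilon$. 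The induction hypothesis gives $y'\in C$. Prepending one well-formed record to a word of $C$ stays in $C$: the witness index shifts by one, and $m\geq1$ is preserved.
\end{itemize}

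The second step is the converse inclusion. Take $y=r_1\cdots r_m\mathtt{\$}x\mathtt{\#}q\in C$, and let $i$ be some witness index. Again there are two cases.
\begin{itemize}
\item If the first record $r_1$ is itself a witness, then $\mathsf{Hit}$ succeeds by \cref{lem:multiphase-peg-hit}, taking $d=r_2\cdots r_m\in\mathcal{R}_0$. Prioritized choice then returns $\varepsilon$.
\item Otherwise $i\geq2$, so $m\geq2$. On this input, $\mathsf{Hit}$ returns $\pegfail$, because the only decomposition of $y$ into the shape required by \cref{lem:multiphase-peg-hit} uses $r_1$ as the tested record, and $r_1$ is not a witness. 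Then $\mathsf{Entry}$ consumes $r_1$ and leaves $r_2\cdots r_m\mathtt{\$}x\mathtt{\#}q$, which lies in $C$ with witness $i-1$. The induction hypothesis finishes this case.
\end{itemize}

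I expect the main obstacle to be the uniqueness point in the second case of the converse. We must argue that $\mathsf{Hit}$ cannot succeed by parsing the tested record differently, and also that $\mathsf{Hit}$ does not diverge, so that $\mathsf{Search}$ really falls through to its second alternative. Divergence is excluded by the termination clause of \cref{lem:multiphase-peg-hit}. Uniqueness holds because the delimiters $\mathtt{<},\mathtt{:},\mathtt{>},\mathtt{\$},\mathtt{\#}$ occur in fixed positions: the key $k$ is the maximal bit block after the first $\mathtt{<}$, and $s$ is the block before the first $\mathtt{>}$. Hence the decomposition of $y$ in \cref{lem:multiphase-peg-hit} is unique.

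A similar remark is needed for inputs that are not well formed, where we must show that $\mathsf{Search}$ fails rather than loops. Here $\mathsf{Entry}$ either fails outright or consumes at least four symbols before the recursive call, so the induction on $|y|$ is well founded. This yields $\mathsf{Search}$ returning $\pegfail$ whenever $y\notin C$, and with it the equality $L(G_C)=C$.
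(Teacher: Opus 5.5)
Your proposal is correct and follows essentially the same route as the paper. Both directions rest on the witness test (\cref{lem:multiphase-peg-hit}) and the record scanner (\cref{lem:multiphase-peg-records}), with $\mathsf{Search}$ peeling off non-witness records one at a time; your induction on $|y|$ with an arbitrary witness index is a formalized version of the paper's iteration up to the first witness. Your explicit check that the decomposition in the witness test is unique, which is what forces $\mathsf{Hit}$ to fail at a non-witness record, fills in a step the paper leaves implicit.
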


\begin{proof}
Let $w\in C$ and choose its first witness record.  At every earlier record,
$\mathsf{Hit}$ fails by \cref{lem:multiphase-peg-hit}, after which
$\mathsf{Entry}$ consumes that record and $\mathsf{Search}$ continues.
At the selected record, $\mathsf{Hit}$ consumes the entire remaining input.
Thus $\mathsf{Search}$ succeeds and leaves the empty suffix.

Conversely, suppose $\mathsf{Search}$ consumes an entire input.  Every use of
its recursive alternative consumes one complete record through
$\mathsf{Entry}$.  The final alternative must be a successful
$\mathsf{Hit}$, and \cref{lem:multiphase-peg-hit} shows that its current
record witnesses membership in $C$.  Hence the entire input belongs to $C$.
\end{proof}

\begin{proposition}[totality]
\label{prop:multiphase-peg-total}
The PEG $G_C$ is total.
\end{proposition}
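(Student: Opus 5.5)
To show that $G_C$ is total, we have to exhibit a derivation $\pegsem{G_C}{e}{w}{r}$ for every expression $e$ and every input $w$. The standard route is well-founded induction on the pair $(|w|,\,\mathrm{rank}(e))$, ordered lexicographically. Here $\mathrm{rank}$ is a measure on expressions that strictly decreases whenever evaluation passes to a subexpression or expands a nonterminal without first consuming input. To define it, build the graph of \emph{non-consuming calls}: put an edge $A\to B$ whenever $B$ occurs in $\pegrelation(A)$ in a position reached before any terminal of the current alternative has been consumed. The key step is to check that this graph is acyclic. We then take $\mathrm{rank}$ to be the syntactic size of $e$ plus a weight on each nonterminal given by a topological order of the graph.

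The acyclicity check goes rule by rule over \cref{eq:multiphase-direct-peg}. The rules $\mathsf{M}_0$, $\mathsf{M}_1$ and $\mathsf{K}$ make each recursive call only after a terminal ($\mathtt{0}$, $\mathtt{1}$, $\mathtt{>}$ or $\mathtt{:}$) has been consumed. $\mathsf{Hit}$ and $\mathsf{Entry}$ begin with $\mathtt{<}$. $\mathsf{Bits}$ calls itself only after $\mathsf{Bit}$, which consumes a symbol whenever it succeeds. $\mathsf{Rest}$ and $\mathsf{Search}$ recurse only after $\mathsf{Entry}$ has succeeded, and $\mathsf{Entry}$ consumes at least $\mathtt{<}$. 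So the only non-consuming edges are the leading calls: $\mathsf{Search}\to\mathsf{Hit},\mathsf{Entry}$; $\mathsf{Rest}\to\mathsf{Entry}$; $\mathsf{Bits}\to\mathsf{Bit}$; $\mathsf{Hit}\to\mathsf{End}$ (this edge arises only through paths that consumed symbols, but we include it conservatively). None of these edges closes a cycle. This is the well-formedness property in Ford's sense: every recursive call on an unchanged input position is excluded.

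The inductive step is then a case analysis on the rules of the semantics. For $\varepsilon$, $\pegfail$ and terminals, a derivation exists outright. For a nonterminal $A$, apply the hypothesis to $\pegrelation(A)$ on the same input; its rank is smaller because $A$'s weight exceeds that of every nonterminal it calls without consuming input. For $e_1e_2$, the hypothesis gives a result for $e_1$ on $w$. If that result is $\pegfail$, rule Seq-F applies. Otherwise $e_1$ leaves a suffix $y$. If $y$ is a proper suffix, $|y|<|w|$, so the hypothesis applies to $e_2$ on $y$. If $y=w$, then by the acyclicity analysis $e_2$ contains no nonterminal that recurses non-consumingly back to the current rule, so $e_2$ has smaller rank and the hypothesis again applies. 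Choice and $!e$ follow from the hypothesis on their immediate subexpressions. Finally, the lexicographic order on $(|w|,\mathrm{rank})$ is well-founded.

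I expect the main obstacle to be making the rank argument rigorous in the sequence case where $e_1$ succeeds without consuming input. There, progress must come from the rank and not from $|w|$, so the rank has to be defined so that it drops in that situation. I would handle this by defining rank on the \emph{continuation} of each rule body, not just on whole expressions: within $\pegrelation(A)$, the subexpression reached after a consuming prefix need not have lower rank, because the input length has already dropped. This matches how \cref{lem:multiphase-peg-hit} already argued termination, since every recursive cycle consumes a bit or a complete record, and that argument could be cited for the nonterminals reachable from $\mathsf{Hit}$.
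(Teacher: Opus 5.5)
Your strategy, a single lexicographic induction on $(|w|,\mathrm{rank}(e))$ over an acyclic graph of non-consuming calls, is Ford's ``well-formed implies complete'' argument, and your acyclicity check is correct. However, the measure is never pinned down, and the version you write fails. Take $\mathrm{rank}$ to be syntactic size plus weights of \emph{all} nonterminal occurrences. Then the sequence case is harmless, since $e_2$ is a proper subterm. What breaks is the unfolding step $A\mapsto\pegrelation(A)$. The bodies of $\mathsf{M}_0$, $\mathsf{M}_1$, $\mathsf{K}$, $\mathsf{Bits}$, $\mathsf{Rest}$ and $\mathsf{Search}$ contain the nonterminal itself, so $\mathrm{rank}(\pegrelation(A))>\mathrm{rank}(A)$ whatever the weights. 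Even for the genuinely non-consuming calls, a topological numbering does not make the weight of $\mathsf{Search}$ exceed the size of its body plus the weights of both $\mathsf{Hit}$ and $\mathsf{Entry}$; ``exceeds each callee'' is not enough. You need $\mathrm{wt}(A)\coloneqq 1+\mathrm{rank}(\pegrelation(A))$ computed bottom-up along the graph, or a lexicographic rank led by the maximal weight.

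Your repair, ignoring whatever follows a consuming prefix, is the right one, but it swaps the roles of the two cases. Totality quantifies over every $e\in\parsingexpressions(\nonterminalset,\symset)$, so the repaired rank must be compositional, say $\mathrm{rank}(e_1e_2)=1+\mathrm{rank}(e_1)+[\nu(e_1)]\,\mathrm{rank}(e_2)$ for a static predicate $\nu$ meaning ``may succeed without consuming''. The sequence case with $y=w$ then needs a soundness lemma: if $\pegsem{G_C}{e_1}{w}{w}$ then $\nu(e_1)$. It is proved by induction on derivations, using that results are suffixes of the input. Acyclicity of the call graph alone does not give $\mathrm{rank}(e_2)<\mathrm{rank}(e_1e_2)$. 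With these two fixes your argument goes through.

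The paper sidesteps the issue by splitting the proof into two stages. First it shows that every nonterminal terminates on every input. It takes the recursive components in dependency order and inducts on remaining input length inside each, using only that terminals, $\mathsf{Bit}$ and $\mathsf{Entry}$ consume when they succeed, which are the same facts you use. It then obtains arbitrary expressions by plain structural induction with the input universally quantified, which needs neither a nullability analysis nor a combined measure.
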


\begin{proof}
Every recursive call of $\mathsf{Bits}$, $\mathsf{M}_0$,
$\mathsf{M}_1$, or $\mathsf{K}$ follows a consumed bit, while every recursive
call of $\mathsf{Rest}$ or $\mathsf{Search}$ follows a consumed record.  Thus
ordering the recursive components by their dependencies and then inducting on
the remaining input length proves termination of every nonterminal call, with
simultaneous induction for $\mathsf{M}_0$ and $\mathsf{M}_1$.  Structural
induction then extends termination to every parsing expression, since each
operator makes only finitely many calls to terminating subexpressions.  Hence
$G_C$ is total.
\end{proof}

\begin{theorem}[membership results]
\label{thm:multiphase-positive}
\[
C\in\mathsf{PEG}
\qquad\text{and}\qquad
L=C^R\in\mathsf{SCA}.
\]
\end{theorem}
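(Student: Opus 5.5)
The first claim is assembled from the two preceding propositions. \cref{prop:multiphase-peg-total} shows that $G_C$ is a total PEG, and \cref{prop:multiphase-peg-correct} gives $L(G_C)=C$. By the definition of $\mathsf{PEG}$ as the class of languages recognized by total PEGs under the full-consumption convention, $C\in\mathsf{PEG}$. The convention matters here because $\mathsf{Hit}$ ends with $\mathsf{End}$ and $L(G_C)$ is defined by consumption of the whole input. No translation via \cref{rem:peg-acceptance-conventions} is needed, but that remark confirms the class would be the same under Ford's convention.

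For the second claim I will apply the sufficient direction of \cref{thm:peg-sca-characterization} to the language $C$ itself. Taking $L:=C$ in that theorem, $C\in\mathsf{PEG}$ implies $C^R\in\mathsf{SCA}$. Since $L=C^R$ by \cref{eq:multiphase-L}, this is exactly $L\in\mathsf{SCA}$. The only point to check is the alphabet: the characterization is stated for languages over a fixed finite alphabet $\symset$, and both $C$ and $C^R$ live over the same $\Sigma$, so the statement applies verbatim.

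The argument has no real obstacle. All substantive work sits in the totality and correctness propositions, in particular the residual-suffix analysis that makes prioritized choice harmless. It is worth flagging, as \cref{rem:loff-repair} does, that this step uses the direction PEG $\Rightarrow$ SCA-of-reversal of the Loff--Moreira--Reis characterization. The separation later uses the other direction, so together the two halves show $C\in\mathsf{PEG}$ while $C^R\notin\mathsf{PEG}$.
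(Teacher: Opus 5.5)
Your proof is correct and is exactly the paper's argument: \cref{prop:multiphase-peg-correct,prop:multiphase-peg-total} show that $G_C$ is a total PEG for $C$, and applying \cref{thm:peg-sca-characterization} to $C$ then yields $C^R\in\mathsf{SCA}$. One slip in your commentary, which does not affect validity: the implication $C\in\mathsf{PEG}\Rightarrow C^R\in\mathsf{SCA}$ is what \cref{rem:loff-repair} calls the \emph{necessary} direction, not the sufficient one, and \cref{thm:multiphase-negative} uses this same direction (applied to $C^R$), while the sufficient direction appears only in \cref{lem:closure-basic,lem:closure-pref-peg}.
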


\begin{proof}
\Cref{prop:multiphase-peg-correct,prop:multiphase-peg-total} show that $G_C$
is a total PEG recognizing $C$.  The PEG--SCA characterization in
\cref{thm:peg-sca-characterization} then gives $L=C^R\in\mathsf{SCA}$.
\end{proof}

This establishes the required membership results.  It remains to exclude $C$ from
$\mathsf{SCA}$ and $L$ from $\mathsf{PEG}$.

\subsection{The Hard Encoding and Separation}
\label{sec:multiphase-separation}

We now establish the complementary non-membership results by reducing Multiphase Inner Product to membership in
$C$.
The reduction encodes each instance as a word with one-symbol update blocks and a logarithmic-length query suffix,
matching the bounds required by the transfer theorem.
Thus an SCA for $C$ would yield a cell-probe data structure excluded by Ko's lower bound.

\paragraph{The canonical hard words.}
For $n\geq1$, set
\[
N(n)=n^2
\qquad\text{and}\qquad
b(n)=\left\lceil\log_2\bigl(N(n)+1\bigr)\right\rceil.
\]
Let $e_n\colon\{1,\ldots,N(n)\}\longrightarrow B^{b(n)}$ map each index to its zero-padded binary representation of
length $b(n)$.
In particular, $e_n$ is injective.

\begin{definition}[canonical hard word]
\label{def:multiphase-hard-word}
For $\boldsymbol{S}=(S_1,\ldots,S_{N(n)})\in(B^n)^{N(n)}$, $x\in B^n$, and
$q\in\{1,\ldots,N(n)\}$, define
\begin{equation}
\label{eq:multiphase-hard-word}
W_n(\boldsymbol{S},x,q)
=
\left(
\prod_{i=1}^{N(n)}
\mathord{\mathtt{<}}e_n(i)^R\mathtt{:}S_i^R\mathord{\mathtt{>}}
\right)
\mathtt{\$}\,x\,\mathtt{\#}\,e_n(q),
\end{equation}
where the product denotes concatenation in increasing order of $i$.
\end{definition}

The reversals make the fields examined by the witness conditions equal to the original key and payload.

\begin{lemma}[hard-word equivalence]
\label{lem:multiphase-hard-word}
For every $\boldsymbol{S}\in(B^n)^{N(n)}$, $x\in B^n$, and
$q\in\{1,\ldots,N(n)\}$,
\[
W_n(\boldsymbol{S},x,q)\in C
\quad\Longleftrightarrow\quad
\langle S_q,x\rangle_{\mathbb{F}_2}=1.
\]
\end{lemma}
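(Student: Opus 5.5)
The plan is to unfold the definition of $C$ in \cref{eq:multiphase-C} on the specific word $W_n(\boldsymbol{S},x,q)$ and check that the witness condition can only be met by the $q$th record. First observe that $W_n(\boldsymbol{S},x,q)$ has the syntactic shape required by $C$. It has $m=N(n)\geq1$ records. The record keys are $k_i=e_n(i)^R$ and the payloads are $s_i=S_i^R$, and both are nonempty binary words because $b(n)\geq1$ and $n\geq1$. The update field is $x\in B^n$, which is nonempty. The query field is $e_n(q)\in B^{b(n)}$, also nonempty. The decomposition of the word into records, $\mathtt{\$}$, $x$, $\mathtt{\#}$, and the query is unique, since the delimiters $\mathtt{<},\mathtt{:},\mathtt{>},\mathtt{\$},\mathtt{\#}$ do not occur inside binary fields. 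Hence membership in $C$ is equivalent to the existence of an index $i\in\{1,\ldots,N(n)\}$ with $k_i^R=e_n(q)$, $|s_i|=|x|$, and $\langle s_i^R,x\rangle_{\mathbb{F}_2}=1$.

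Next, simplify each condition. We have $k_i^R=(e_n(i)^R)^R=e_n(i)$, so the key condition reads $e_n(i)=e_n(q)$. By injectivity of $e_n$, this holds exactly when $i=q$. The length condition $|S_i^R|=n=|x|$ holds automatically. Finally, $s_i^R=S_i$, so the parity condition becomes $\langle S_i,x\rangle_{\mathbb{F}_2}=1$.

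Combining these, the existential over $i$ collapses to the single index $i=q$. Therefore $W_n(\boldsymbol{S},x,q)\in C$ holds if and only if $\langle S_q,x\rangle_{\mathbb{F}_2}=1$, which is the claim.

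The only step needing care is the uniqueness of the parse of $W_n$ as a word of the form in \cref{eq:multiphase-C}. Without it, a different choice of the record fields, $x$, or $q$ could in principle witness membership. I expect this step to be routine: every field is a maximal block of bits between fixed delimiters, and the symbol $\mathtt{\$}$ occurs exactly once, so the parse is forced.
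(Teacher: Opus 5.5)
Your proposal is correct and follows essentially the same route as the paper: both note that all fields of $W_n(\boldsymbol{S},x,q)$ are nonempty, reduce the witness conditions at the $i$th record to $e_n(i)=e_n(q)$, $|S_i|=|x|=n$, and $\langle S_i,x\rangle_{\mathbb{F}_2}=1$, and use injectivity of $e_n$ to force $i=q$. Your explicit argument that $W_n$ parses uniquely fills in a step the paper leaves implicit at this point; the paper only notes later, in the closure section, that delimiters cannot occur inside binary fields.
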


\begin{proof}
All fields of $W_n(\boldsymbol{S},x,q)$ are nonempty, and at the $i$th record the three witness conditions reduce to
$e_n(i)=e_n(q)$, $|S_i|=|x|=n$, and $\langle S_i,x\rangle_{\mathbb{F}_2}=1$.
Injectivity of $e_n$ forces $i=q$, which proves the equivalence.
\end{proof}

\paragraph{The SCA encoding.}
The three stages of the hard word define the required encoding maps.
For $1\leq j\leq n$, let
\begin{equation}
\label{eq:multiphase-encoders}
\begin{aligned}
\mathsf{Pre}_n(\boldsymbol{S})
&=
\left(
\prod_{i=1}^{N(n)}
\mathord{\mathtt{<}}e_n(i)^R\mathtt{:}S_i^R\mathord{\mathtt{>}}
\right)
\mathtt{\$},
\\
\mathsf{Upd}_{n,j}(u)
&=u,
\\
\mathsf{Qry}_n(q)
&=\mathtt{\#}\,e_n(q).
\end{aligned}
\end{equation}
Consequently, for $x=(x_1,\ldots,x_n)$, the complete encoding satisfies
\[
\mathsf{Pre}_n(\boldsymbol{S})\,
\mathsf{Upd}_{n,1}(x_1)\cdots
\mathsf{Upd}_{n,n}(x_n)\,
\mathsf{Qry}_n(q)
=
W_n(\boldsymbol{S},x,q)
\]

\begin{proposition}[encoding parameters]
\label{prop:multiphase-sca-encoding}
The language $C$ is an $(a,\beta)$-$\mathsf{SCA}$ encoding of Multiphase Inner Product with $N(n)=n^2$, where
\[
a(n)=1
\qquad\text{and}\qquad
\beta(n)=b(n)+1=\mathcal{O}(\log n).
\]
\end{proposition}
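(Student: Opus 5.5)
The plan is to check the three conditions of \cref{def:sca-encoding} for the encoders in \cref{eq:multiphase-encoders}, with the fixed alphabet $\Sigma$ of \cref{sec:multiphase-language} and constants $c_0,c_1$ chosen explicitly. All the substantive work is already done by \cref{lem:multiphase-hard-word}, so this is mainly bookkeeping.

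\textbf{(E1).} We argue in two steps.
\begin{itemize}
\item By the displayed identity preceding the proposition, the concatenation $\mathsf{Pre}_n(\boldsymbol{S})\,\mathsf{Upd}_{n,1}(x_1)\cdots\mathsf{Upd}_{n,n}(x_n)\,\mathsf{Qry}_n(q)$ equals $W_n(\boldsymbol{S},x,q)$. This is immediate, since each update block is the single bit $x_j$ and their concatenation is $x$.
\item \cref{lem:multiphase-hard-word} then gives membership in $C$ if and only if $\langle S_q,x\rangle_{\mathbb{F}_2}=1$, which is exactly $f_n(\boldsymbol{S},x_1,\ldots,x_n,q)$ from \cref{def:multiphase-ip}.
\end{itemize}

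\textbf{(E2).} We have $|\mathsf{Upd}_{n,j}(u)|=|u|=1=a(n)$ for every $u\in B$. Likewise $|\mathsf{Qry}_n(q)|=1+|e_n(q)|=1+b(n)=\beta(n)$. For the asymptotic claim, $b(n)=\lceil\log_2(n^2+1)\rceil\le 2\log_2 n+2$, so $\beta(n)=\mathcal{O}(\log n)$. Both $a$ and $\beta$ map $\mathbb{N}_{>0}$ to $\mathbb{N}_{>0}$, as the definition requires.

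\textbf{(E3).} Each record $\mathord{\mathtt{<}}e_n(i)^R\mathtt{:}S_i^R\mathord{\mathtt{>}}$ has length $b(n)+n+3$. The total length is therefore
\[
N(n)\bigl(b(n)+n+3\bigr)+1+n+1+b(n).
\]
Using $N(n)=n^2$ and $b(n)\le 2\log_2 n+2\le 3n$, this is at most $n^2(4n+3)+4n+2\le 13n^3$. Hence $c_0=13$ and $c_1=3$ work, and both are independent of $n$.

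\textbf{Main obstacle.} There is no real obstacle. The only care needed is in (E3): producing explicit $n$-independent constants valid for all $n\ge1$, including the small cases where the logarithm bound must be checked directly. We also need that the encoders depend only on the permitted data. $\mathsf{Upd}_{n,j}$ depends only on $u$ (trivially on $n,j$), and $\mathsf{Qry}_n$ depends only on $q$ and $n$ through $e_n$.
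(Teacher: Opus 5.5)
Your proposal is correct and takes essentially the same route as the paper: (E1) from the concatenation identity together with \cref{lem:multiphase-hard-word}, (E2) from the one-symbol update blocks and $|\mathsf{Qry}_n(q)|=b(n)+1$, and (E3) from the record length $n+b(n)+3$. The only difference is that you make the polynomial bound explicit ($c_0=13$, $c_1=3$, via $b(n)\le 2\log_2 n+2\le 3n$), whereas the paper just writes $\mathcal{O}(n^3)$.
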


\begin{proof}
Condition \textup{(E1)} follows from \cref{lem:multiphase-hard-word} and the displayed identity between the complete
encoding and $W_n(\boldsymbol{S},x,q)$.
For condition \textup{(E2)}, every update block has length one, while
$|\mathsf{Qry}_n(q)|=b(n)+1=\mathcal{O}(\log n)$.
Finally, every record has length $n+b(n)+3$, and hence the complete encoding has length
\[
\begin{aligned}
|W_n(\boldsymbol{S},x,q)|
&=
N(n)\bigl(n+b(n)+3\bigr)+n+b(n)+2
\\
&=
\mathcal{O}(n^3).
\end{aligned}
\]
This verifies condition \textup{(E3)}.
\end{proof}

\paragraph{The non-membership results.}
We now combine the encoding proposition with Ko's lower bound.

\begin{theorem}[non-membership]
\label{thm:multiphase-negative}
\[
C\notin\mathsf{SCA}
\qquad\text{and}\qquad
L=C^R\notin\mathsf{PEG}.
\]
\end{theorem}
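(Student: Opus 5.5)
The plan is to obtain the first claim by contradiction from the transfer theorem, and then derive the second claim from the PEG--SCA characterization.

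Suppose, for contradiction, that $C\in\mathsf{SCA}$. By \cref{prop:multiphase-sca-encoding}, $C$ is an $(a,\beta)$-$\mathsf{SCA}$ encoding of Multiphase Inner Product with $N(n)=n^2$, $a(n)=1$, and $\beta(n)=\mathcal{O}(\log n)$. Applying \cref{thm:transfer} then yields an exact deterministic cell-probe data structure for this problem with
\[
w=\Theta(\log n),\qquad t_u=\mathcal{O}(1),\qquad t_q=\mathcal{O}(\log n),\qquad S=n^{\mathcal{O}(1)}.
\]
This is exactly the combination of parameters excluded by \cref{cor:no-fast-multiphase}. Hence $C\notin\mathsf{SCA}$.

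For the second claim, use the necessary direction of \cref{thm:peg-sca-characterization}. If $L=C^R$ were in $\mathsf{PEG}$, then $(C^R)^R=C$ would lie in $\mathsf{SCA}$, contradicting the first claim. Equivalently, one can apply \cref{cor:cell-probe-to-peg} directly to $K=C$ with the bounds $a=1$ and $\beta=\mathcal{O}(\log n)$, using \cref{cor:no-fast-multiphase} to supply its hypothesis. Either way we obtain $L\notin\mathsf{PEG}$.

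The only step needing care is matching parameters between the transfer theorem and the excluded data structure. The constants $\mathcal{O}(a(n))=\mathcal{O}(1)$ and $\mathcal{O}(\beta(n))=\mathcal{O}(\log n)$ must fall inside the class that \cref{cor:no-fast-multiphase} rules out. Both the encoding and the corollary use the same regime $N(n)=n^2$ and the same notion of a single-coordinate update, so the match is immediate. No further obstacle is expected, since all of the substantive work is already contained in the cited results.
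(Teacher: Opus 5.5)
Your proposal is correct and follows the paper's proof essentially verbatim: you assume $C\in\mathsf{SCA}$, combine \cref{prop:multiphase-sca-encoding} with \cref{thm:transfer} to obtain a data structure with $t_u=\mathcal{O}(1)$ and $t_q=\mathcal{O}(\log n)$, contradict \cref{cor:no-fast-multiphase}, and then conclude $L=C^R\notin\mathsf{PEG}$ via \cref{thm:peg-sca-characterization}. Your alternative through \cref{cor:cell-probe-to-peg} is the same argument packaged as a single corollary.
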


\begin{proof}
Suppose $C\in\mathsf{SCA}$.
By \cref{prop:multiphase-sca-encoding} the encoding has $a(n)=1$, each update
block being the single symbol $\mathsf{Upd}_{n,j}(u)=u$ of
\eqref{eq:multiphase-encoders}, and $\beta(n)=\mathcal{O}(\log n)$.
\Cref{thm:transfer} would then give an exact deterministic cell-probe data
structure for Multiphase Inner Product with $w=\Theta(\log n)$,
$t_u=\mathcal{O}(1)$, $t_q=\mathcal{O}(\log n)$, and $S=n^{\mathcal{O}(1)}$,
which is the combination excluded by \cref{cor:no-fast-multiphase}.
Thus $C\notin\mathsf{SCA}$, and \cref{thm:peg-sca-characterization} then gives
$L=C^R\notin\mathsf{PEG}$.
\end{proof}

Combining these non-membership results with the linear grammar and the direct PEG from the preceding subsections gives
the complete separation.

\begin{theorem}[main separation]
\label{thm:multiphase-separation}
\[
C\in\mathsf{LIN}\cap\mathsf{PEG}\setminus\mathsf{SCA},
\qquad
L=C^R\in\mathsf{LIN}\cap\mathsf{SCA}\setminus\mathsf{PEG}.
\]
\end{theorem}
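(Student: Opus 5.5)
The main separation theorem is an assembly of results already established, so the plan is to collect the eight membership and non-membership facts and check that each has been proved. For $C$, three facts are needed. \Cref{prop:multiphase-linear} gives $C\in\mathsf{LIN}$, because the linear grammar $G$ generates exactly $C$. \Cref{thm:multiphase-positive} gives $C\in\mathsf{PEG}$ through the total PEG $G_C$, relying on \cref{prop:multiphase-peg-correct,prop:multiphase-peg-total}. \Cref{thm:multiphase-negative} gives $C\notin\mathsf{SCA}$.

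For $L=C^R$, the same three sources give the mirror-image facts. \Cref{prop:multiphase-linear} gives $L\in\mathsf{LIN}$, since reversing each production of a linear grammar yields a linear grammar for the reversed language. \Cref{thm:multiphase-positive} gives $L\in\mathsf{SCA}$, obtained from $C\in\mathsf{PEG}$ via the characterization in \cref{thm:peg-sca-characterization}. \Cref{thm:multiphase-negative} gives $L\notin\mathsf{PEG}$.

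I would write out the set-theoretic reading explicitly. The expression $\mathsf{LIN}\cap\mathsf{PEG}\setminus\mathsf{SCA}$ means $(\mathsf{LIN}\cap\mathsf{PEG})\setminus\mathsf{SCA}$, and likewise for the second display. Each display is therefore exactly a conjunction of three of the facts above.

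There is no real obstacle at this stage, because all the substantive work has already been done. The nontrivial inputs are the negative facts. They rest on the transfer theorem (\cref{thm:transfer}) combined with the encoding parameters of \cref{prop:multiphase-sca-encoding}: $a(n)=1$ and $\beta(n)=\mathcal{O}(\log n)$. Together these would give a cell-probe data structure with $t_u=\mathcal{O}(1)$ and $t_q=\mathcal{O}(\log n)$, which Ko's superlogarithmic bound in \cref{cor:no-fast-multiphase} rules out.

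The one point I would double-check is that both directions of \cref{thm:peg-sca-characterization} are actually used here. The forward direction turns $C\in\mathsf{PEG}$ into $L\in\mathsf{SCA}$. The other direction turns $C\notin\mathsf{SCA}$ into $L\notin\mathsf{PEG}$, by applying the characterization to the language $L$ and using $L^R=C$. I would state this explicitly so that the symmetric form of the theorem is justified.
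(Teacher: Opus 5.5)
Your proposal is correct and is exactly the paper's proof, which assembles the theorem from \cref{prop:multiphase-linear,thm:multiphase-positive,thm:multiphase-negative}; there are six facts in all, not eight. One correction to your last paragraph: both uses of \cref{thm:peg-sca-characterization} are the \emph{same} direction $H\in\mathsf{PEG}\Rightarrow H^R\in\mathsf{SCA}$, applied to $H=C$ to get $L\in\mathsf{SCA}$ and, in contrapositive form, to $H=L$ (with $L^R=C$) to get $L\notin\mathsf{PEG}$ from $C\notin\mathsf{SCA}$, so the converse direction plays no role in this theorem.
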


\begin{proof}
The claim follows from \cref{prop:multiphase-linear,thm:multiphase-positive,thm:multiphase-negative}.
\end{proof}

\begin{corollary}
\label{cor:multiphase-main-consequences}
\[
\mathsf{LIN}\not\subseteq\mathsf{PEG},
\qquad
\mathsf{CFL}\not\subseteq\mathsf{PEG},
\qquad
\mathsf{PEG}\neq\mathsf{PEG}^R.
\]
\end{corollary}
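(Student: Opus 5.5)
The three claims follow directly from \cref{thm:multiphase-separation}, with no new combinatorial or cell-probe content; I would derive each in turn from the witness pair $C$ and $L=C^R$.

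For $\mathsf{LIN}\not\subseteq\mathsf{PEG}$, I take the language $L$. \Cref{thm:multiphase-separation} (via \cref{prop:multiphase-linear}) gives $L\in\mathsf{LIN}$, while \cref{thm:multiphase-negative} gives $L\notin\mathsf{PEG}$. So $L$ lies in $\mathsf{LIN}\setminus\mathsf{PEG}$.

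The second claim, $\mathsf{CFL}\not\subseteq\mathsf{PEG}$, follows because every linear grammar is in particular a context-free grammar. Hence $\mathsf{LIN}\subseteq\mathsf{CFL}$, so $L\in\mathsf{CFL}\setminus\mathsf{PEG}$.

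For $\mathsf{PEG}\neq\mathsf{PEG}^R$, I note that $C\in\mathsf{PEG}$ by \cref{thm:multiphase-positive}. By the definition of $\mathcal{C}^R$, this gives $L=C^R\in\mathsf{PEG}^R$. Since $L\notin\mathsf{PEG}$, the two classes differ.

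I expect no real obstacle. The only point needing care is unwinding the definition of $\mathsf{PEG}^R$ correctly: one shows $C^R\in\mathsf{PEG}^R\setminus\mathsf{PEG}$, rather than arguing through the complement or anything else. All substantive work, namely the transfer theorem and Ko's bound, is already packaged in \cref{thm:multiphase-negative}.
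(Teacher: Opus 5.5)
Your proposal is correct and matches the paper's proof: $L\in\mathsf{LIN}\setminus\mathsf{PEG}$ gives the first two claims via $\mathsf{LIN}\subseteq\mathsf{CFL}$, and $C\in\mathsf{PEG}$ with $C^R=L\notin\mathsf{PEG}$ gives the third. Your explicit unwinding to $L\in\mathsf{PEG}^R\setminus\mathsf{PEG}$ is exactly the content of the paper's final sentence.
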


\begin{proof}
By \cref{thm:multiphase-separation}, $L\in\mathsf{LIN}\setminus\mathsf{PEG}$, and the first two conclusions follow
from $\mathsf{LIN}\subseteq\mathsf{CFL}$.
Moreover, $C\in\mathsf{PEG}$ but $C^R=L\notin\mathsf{PEG}$, proving the third conclusion.
\end{proof}

This confirms the conjecture of Loff, Moreira, and Reis that reversal does not preserve PEG
languages~\cite{LoffMoreiraReis2020}.

\section{Closure Properties of PEG Languages}
\label{sec:closure}

The main separation of the previous section already shows that PEG languages are not closed under reversal, since
its witness satisfies $C\in\mathsf{PEG}$ but $C^R=L\notin\mathsf{PEG}$.
The status of other standard operations was mixed.
Positive closure was known for Boolean operations, inverse homomorphisms, and left concatenation by languages in
the regular closure of the deterministic context-free
languages~\cite{Ford2004PEG,BirmanUllman1973,RubtsovChudinov2024}.
Unrestricted concatenation and Kleene star, however, remained open.
Rubtsov and Chudinov explicitly posed the former as an open question~\cite{RubtsovChudinov2024}, while the later
survey of Lucas, Anantharaman, and Smith recorded both as open and conjectured closure under
concatenation~\cite{LucasAnantharamanSmith2026}.

We prove that $\mathsf{PEG}$ is closed under neither unrestricted concatenation nor Kleene star.
The same witness also gives non-closure under homomorphisms and substitutions.
Writing $\mathsf{REG}$ for the class of regular languages, the failure of concatenation takes the strong form
$\mathsf{PEG}\cdot\mathsf{REG}\nsubseteq\mathsf{PEG}$.
It stands in contrast with the inclusion $\mathsf{REG}\cdot\mathsf{PEG}\subseteq\mathsf{PEG}$ of
\cref{thm:closure-asymmetry}, so concatenation with regular languages exhibits a left--right asymmetry.

The central idea behind all four new non-closure results is to split every word of $L$ immediately after a record
witnessing membership in $L$.
The resulting prefixes form a PEG language $\mathsf{Pref}$, while the remaining records form the regular language
$\mathcal{R}_0^R$, yielding the factorization $L=\mathsf{Pref}\cdot\mathcal{R}_0^R$.
This factorization gives the concatenation counterexample directly, yields the star counterexample through a regular envelope, and
yields the homomorphism and substitution counterexamples through a marked product followed by erasure.

\subsection{The Prefix Language and the Factorization}
\label{sec:closure-factorization}

To establish the common factorization underlying these non-closure results, we now define the prefix language
$\mathsf{Pref}$ and show that appending a regular suffix recovers $L$.
Recall from \cref{def:multiphase-witness} that the words of $L=C^R$ have the form
\[
q^R\,\mathtt{\#}\,x^R\,\mathtt{\$}\,
\mathord{\mathtt{>}}s_m^R\,\mathtt{:}\,k_m^R\mathord{\mathtt{<}}
\cdots
\mathord{\mathtt{>}}s_1^R\,\mathtt{:}\,k_1^R\mathord{\mathtt{<}},
\]
where $m\geq1$, all fields lie in $B^+$, and at least one record satisfies the witness conditions of
\cref{eq:multiphase-C}.
The prefix language keeps the part of such a word up to and including a witness.

\begin{definition}[witness prefix language]
\label{def:closure-pref}
The language $\mathsf{Pref}\subseteq\Sigma^*$ consists of the words
\[
q^R\,\mathtt{\#}\,x^R\,\mathtt{\$}\,
(\mathord{\mathtt{<}}k_m\mathtt{:}s_m\mathord{\mathtt{>}})^R\cdots
(\mathord{\mathtt{<}}k_i\mathtt{:}s_i\mathord{\mathtt{>}})^R,
\qquad
m\geq i\geq1,
\]
with all fields in $B^+$, such that the \emph{last} record satisfies the witness conditions of
\cref{eq:multiphase-C}, namely
\[
k_i^R=q,
\qquad
|s_i|=|x|,
\qquad
\langle s_i^R,x\rangle_{\mathbb{F}_2}=1.
\]
\end{definition}

The phrase ``last record'' is unambiguous.
The delimiters $\mathord{\mathtt{>}},\mathtt{:},\mathord{\mathtt{<}}$ cannot occur inside the binary fields, so a
word determines its decomposition into reversed records uniquely.
We also reuse the language $\mathcal{R}_0=(\mathord{\mathtt{<}}B^+\mathtt{:}B^+\mathord{\mathtt{>}})^*$ of complete
record sequences from \cref{sec:multiphase-peg}; its reversal $\mathcal{R}_0^R$ is regular, because $\mathcal{R}_0$
is given by a regular expression and $\mathsf{REG}$ is closed under reversal.

\begin{lemma}[factorization]
\label{lem:closure-factorization}
$L=\mathsf{Pref}\cdot\mathcal{R}_0^R$.
\end{lemma}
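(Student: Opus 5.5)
We prove the two inclusions directly from the definitions of $L=C^R$, $\mathsf{Pref}$, and $\mathcal{R}_0$, using only the unique decomposition of words into reversed records noted after \cref{def:closure-pref}. We work throughout with the reversed form of words: by \cref{eq:multiphase-C}, a word lies in $L$ exactly when it has the form
\[
q^R\,\mathtt{\#}\,x^R\,\mathtt{\$}\,
(\mathord{\mathtt{<}}k_m\mathtt{:}s_m\mathord{\mathtt{>}})^R\cdots
(\mathord{\mathtt{<}}k_1\mathtt{:}s_1\mathord{\mathtt{>}})^R
\]
with $m\geq1$, all fields in $B^+$, and some index $i$ satisfying the three witness conditions. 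Since $(uv)^R=v^Ru^R$, we have $\mathcal{R}_0^R=\{\,(\mathord{\mathtt{<}}k_{i-1}\mathtt{:}s_{i-1}\mathord{\mathtt{>}})^R\cdots(\mathord{\mathtt{<}}k_1\mathtt{:}s_1\mathord{\mathtt{>}})^R : i\geq1,\ k_j,s_j\in B^+\,\}$, the case $i=1$ giving $\varepsilon$.

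\emph{Inclusion $L\subseteq\mathsf{Pref}\cdot\mathcal{R}_0^R$.} Take $w\in L$ in the form above and fix any witness index $i$; choosing the smallest index, i.e.\ the rightmost witness in $w$, is convenient but not necessary. Split $w$ immediately after the reversed record $(\mathord{\mathtt{<}}k_i\mathtt{:}s_i\mathord{\mathtt{>}})^R$. The left factor has the shape required in \cref{def:closure-pref}, with $m\geq i\geq1$ and its last record a witness, so it lies in $\mathsf{Pref}$. The right factor is the reversal of $\mathord{\mathtt{<}}k_1\mathtt{:}s_1\mathord{\mathtt{>}}\cdots\mathord{\mathtt{<}}k_{i-1}\mathtt{:}s_{i-1}\mathord{\mathtt{>}}\in\mathcal{R}_0$, so it lies in $\mathcal{R}_0^R$.

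\emph{Inclusion $\mathsf{Pref}\cdot\mathcal{R}_0^R\subseteq L$.} Take $u\in\mathsf{Pref}$ with last record indexed $i$, and $r\in\mathcal{R}_0^R$. Then $r$ is a (possibly empty) concatenation of reversed records with nonempty binary fields. Relabel its records as indices $i-1,\ldots,1$, shifting the indices of $u$ if necessary so that the total count is $m$. The word $ur$ then has exactly the displayed form with $m\geq1$, and the record originating from $u$'s last record satisfies $k^R=q$, $|s|=|x|$, $\langle s^R,x\rangle_{\mathbb{F}_2}=1$. Since $\mathcal{R}_0^R$ contributes no symbols $\mathtt{\#}$ or $\mathtt{\$}$, the query and update fields of $ur$ are those of $u$, so $ur\in L$.

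The only potential obstacle is the bookkeeping in the second inclusion: confirming that concatenation introduces no ambiguity, so that the $q$ and $x$ seen by the witness in $ur$ are the same as in $u$. This follows from the delimiter structure, since $\mathtt{\#}$ and $\mathtt{\$}$ occur once each, at the front, and the records are delimited by symbols not occurring in the binary fields. No further machinery is needed.
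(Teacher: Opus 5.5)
Your proof is correct and is essentially the paper's argument. For one inclusion you cut a word of $L$ immediately after a witnessing reversed record, and for the other you observe that appending complete reversed records leaves $x$, $q$, and the witness record of the $\mathsf{Pref}$-factor unchanged; the closing delimiter-uniqueness discussion is harmless but unnecessary, since showing $ur\in L$ only requires exhibiting one decomposition of the required form, which you construct explicitly.
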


\begin{proof}
For $w\in L$, choose a witnessing record $r_i$.
Cutting $w$ immediately after the reversed encoding of $r_i$ yields a prefix in $\mathsf{Pref}$, while the
remaining reversed records form a word of $\mathcal{R}_0^R$.

Conversely, let $w=uv$ with $u\in\mathsf{Pref}$ and $v\in\mathcal{R}_0^R$.
Appending $v$ only adds complete records and does not change $(x,q)$ or the witnessing record of $u$.
That record therefore remains a witness in $w$, so $w\in L$.
\end{proof}

The regularity of $\mathcal{R}_0^R$ was noted above, so it remains to establish the less immediate claim
$\mathsf{Pref}\in\mathsf{PEG}$.

\begin{lemma}[prefix language is PEG]
\label{lem:closure-pref-peg}
$\mathsf{Pref}\in\mathsf{PEG}$.
\end{lemma}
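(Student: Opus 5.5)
The most direct route is to build a total PEG for $\mathsf{Pref}$ explicitly, in the same style as $G_C$. Alternatively, one could build an SCA for $\mathsf{Pref}^R$ and invoke the sufficient direction of \cref{thm:peg-sca-characterization}; \cref{rem:loff-repair} indicates the authors use that direction here. Observe that
\[
\mathsf{Pref}^R=\bigl\{\,\mathord{\mathtt{<}}k_i\mathtt{:}s_i\mathord{\mathtt{>}}\cdots\mathord{\mathtt{<}}k_m\mathtt{:}s_m\mathord{\mathtt{>}}\,\mathtt{\$}\,x\,\mathtt{\#}\,q \ \bigm|\ \text{the \emph{first} record is a witness}\,\bigr\}.
\]
This is simpler than $C$, because the witness is the first record and no search is needed. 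Still, I will write the PEG for $\mathsf{Pref}$ itself, since the wanted word $\mathsf{Pref}$ reads $q^R\mathtt{\#}x^R\mathtt{\$}$ followed by reversed records ending with the witness.

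The plan is to match outside-in, as the grammar $G$ does. The word has the shape $q^R\mathtt{\#}x^R\mathtt{\$}\,u\,\mathord{\mathtt{>}}s_i^R\mathtt{:}k_i^R\mathord{\mathtt{<}}$, where $u\in\mathcal{R}_0^R$. The outer layer compares $q^R$ (at the front) with $k_i^R$ (just before the final $\mathord{\mathtt{<}}$), and the inner layer compares $x^R$ with $s_i^R$. Reversing the roles in \cref{eq:multiphase-direct-peg}, I define the following rules.
\[
\mathsf{K}'\leftarrow \mathtt{0}\,\mathsf{K}'\,\mathtt{0}/\mathtt{1}\,\mathsf{K}'\,\mathtt{1}/\mathtt{\#}\,\mathsf{M}'_1\,\mathtt{:}
\]
Here $\mathsf{M}'_p$ consumes a bit $a$ of $x^R$, recurses, and then consumes a bit $b$ of $s_i^R$ from the right, following the same parity transitions $p=r\oplus ab$ as $\mathsf{M}_p$. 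Its base alternative is $\mathtt{\$}\,\mathsf{Rest}'\,\mathord{\mathtt{>}}$. The top rule is $\mathsf{Start}\leftarrow(\mathtt{0}\mathsf{K}'\mathtt{0}/\mathtt{1}\mathsf{K}'\mathtt{1})\,\mathord{\mathtt{<}}\,!.$, with the outer bits making $k_i$ and $q$ nonempty.

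The crucial difference from $G_C$ is the middle part $\mathsf{Rest}'$. It must consume reversed records $\mathord{\mathtt{>}}B^+\mathtt{:}B^+\mathord{\mathtt{<}}$ and stop exactly before the witness record's $\mathord{\mathtt{>}}$. A greedy $\mathsf{Rest}'\leftarrow\mathsf{Entry}'\,\mathsf{Rest}'/\varepsilon$ would swallow the witness record too, and PEG does not backtrack into it. I would fix this with lookahead: define $\mathsf{Rest}'\leftarrow\mathsf{Entry}'\,\mathbin{\&}\mathord{\mathtt{>}}\,\mathsf{Rest}'/\varepsilon$, wrapping the iteration so that a record is consumed only if another record follows. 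More precisely, I would use $\mathsf{Rest}'\leftarrow \mathsf{Entry}'\,\mathsf{Rest}'\,\mathbin{\&}(\mathord{\mathtt{>}})\,/\,\varepsilon$, or equivalently make $\mathsf{Rest}'$ consume all records but the last. This works because every continuation after $\mathsf{Rest}'$ is forced to be a single record followed by end of input.

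The main obstacle is proving a residual-suffix invariant for $\mathsf{M}'_p$ analogous to \cref{lem:multiphase-peg-parity}. It should state that on input $a\,y'\,\mathtt{\$}\,u\,\mathord{\mathtt{>}}\,t\,b\,z$ the rule leaves exactly $z$, given that $\mathsf{Rest}'$ deterministically stops before the last record. Here the right end of $s_i^R$ lies after the unconstrained records $u$, so $\mathsf{Rest}'$ must stop at the correct record boundary regardless of the outer context. I would establish first that $\mathsf{Rest}'$ on $u\,r\,z$ (with $u\in\mathcal{R}_0^R$, $r$ a single reversed record, and $z$ not starting with $\mathord{\mathtt{>}}$) leaves $r\,z$. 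Then, since after the witness only $\mathtt{:}k_i^R\mathord{\mathtt{<}}$ remains and no $\mathord{\mathtt{>}}$ follows, the inductions for $\mathsf{M}'_p$ and $\mathsf{K}'$ go through as in \cref{lem:multiphase-peg-parity,lem:multiphase-peg-key}. Totality follows as in \cref{prop:multiphase-peg-total}, since every recursion follows a consumed bit or record. If the direct PEG bookkeeping becomes messy, the fallback is an SCA for $\mathsf{Pref}^R$. That automaton can use a left-to-right scaffold: it builds a chain of nodes along the first record's key and payload and walks back along them while reading $x$ and $q$. This walk is possible because the scaffold's back-pointers allow traversing earlier nodes in order, and then \cref{thm:peg-sca-characterization} gives $\mathsf{Pref}\in\mathsf{PEG}$.
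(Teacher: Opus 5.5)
Your primary construction is correct, but it takes a different route from the paper.

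\textbf{The paper's route.} The paper never writes a PEG for $\mathsf{Pref}$. It recognizes $\mathsf{Pref}^R$, whose first record is the witness, by a strictly real-time two-tape Turing machine. That machine copies the first record's key and payload onto two tapes and scans them backwards while reading $x$ and $q$. The paper then asserts, with only a one-line sketch, that such machines compile into a scaffolding automaton of degree $6$ and radius $2$. It concludes with the sufficient direction of \cref{thm:peg-sca-characterization}. Your fallback is essentially this argument.

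\textbf{Your route.} You give an explicit total PEG for $\mathsf{Pref}$ by mirroring $G_C$.
\begin{itemize}
\item $\mathsf{K}'$ matches $q^R$ against $k_i^R$ outside-in.
\item $\mathsf{M}'_p$ pairs the $j$th bit of $x^R$ with the $j$th bit of $s_i^R$ counted from the right. This indeed computes $\langle s_i^R,x\rangle_{\mathbb{F}_2}$.
\item The lookahead in $\mathsf{Rest}'$ makes it consume every reversed record except the last.
\end{itemize}
Both of your variants of $\mathsf{Rest}'$ leave $r\,z$ on input $u\,r\,z$ whenever $z$ does not begin with $\mathord{\mathtt{>}}$. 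The choice $\mathsf{M}'_0\mathtt{0}/\mathsf{M}'_1\mathtt{1}$ is harmless, because at most one of $\mathsf{M}'_0,\mathsf{M}'_1$ succeeds on a given input. That input determines the consumed $\mathsf{Rest}'$ segment, the number of unwinding bits, and hence the parity. For the converse inclusion, acceptance forces the tail after $\mathsf{Rest}'$ to be $\mathord{\mathtt{>}}t\mathtt{:}k'\mathord{\mathtt{<}}$ followed by end of input. Here $|t|=|x|\geq1$ because $\mathsf{M}'_1$ has no base case, and $k'\neq\varepsilon$ because of the outer bit in $\mathsf{Start}$. So the tail is a well-formed witness record and the word lies in $\mathsf{Pref}$. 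Totality follows as you say.

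\textbf{What each buys.} Your approach is self-contained. It uses neither the sufficient direction of the Loff--Moreira--Reis characterization nor the real-time-machine-to-SCA compilation. It also shows plainly why $\mathsf{Pref}$ is easy while $L$ is not: the witness sits at a position fixed by the end of the input, so the deterministic PEG never has to guess where $\mathsf{Rest}'$ should stop. The price is the residual-suffix bookkeeping analogous to \cref{lem:multiphase-peg-parity,lem:multiphase-peg-key}, which you correctly identify. The paper's route is shorter to state and reuses a general compilation, at the cost of depending on that external machinery.
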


\begin{proof}
By \cref{thm:peg-sca-characterization} it suffices to give a scaffolding automaton for the reversal
$\mathsf{Pref}^R$, the language of words
\[
\mathord{\mathtt{<}}k_i\mathtt{:}s_i\mathord{\mathtt{>}}
\cdots
\mathord{\mathtt{<}}k_m\mathtt{:}s_m\mathord{\mathtt{>}}\,
\mathtt{\$}\,x\,\mathtt{\#}\,q
\]
whose \emph{first} record satisfies the witness conditions.
Because of the reversal, this first record corresponds to the witness $r_i$, not to the first record $r_1$ of the
underlying $C$-word.

We recognize $\mathsf{Pref}^R$ by a two-tape Turing machine that is \emph{strictly real time}: it
consumes one input symbol per step and never pauses.
Reading the first record, the machine copies its key onto one tape and its payload onto the other;
later records are checked only for well-formedness, which the finite control does.
While reading $x$ after $\mathtt{\$}$, it scans back along the payload tape by one cell per input
symbol, checking $|s_i|=|x|$ and accumulating the parity of
$\langle s_i^R,x\rangle_{\mathbb{F}_2}$ in its control; while reading $q$ after $\mathtt{\#}$ it does
the same on the key tape to check $q=k_i^R$.
Each pass consumes one stored symbol per input symbol, so no step pauses.

A strictly real-time machine with $t$ tapes compiles into a scaffolding automaton of degree $3t$ and
radius $2$, three pointer roles being reserved for each tape and a transition inspecting only a
radius-$2$ neighbourhood of the current top.
For $t=2$ this gives degree $6$ and radius $2$.
Hence $\mathsf{Pref}^R\in\mathsf{SCA}$, and the sufficient direction of
\cref{thm:peg-sca-characterization} yields $\mathsf{Pref}\in\mathsf{PEG}$.
\end{proof}

\subsection{Concatenation, Kleene Star, and Homomorphisms}
\label{sec:closure-concat-star-hom}

The factorization $L=\mathsf{Pref}\cdot\mathcal{R}_0^R$ directly yields a concatenation counterexample and, through
a regular envelope, a Kleene-star counterexample.
Marking the product before erasing a delimiter also gives homomorphism and substitution counterexamples, whereas a
classical theorem gives inverse-homomorphism closure.
Together with known left-regular closure, the concatenation result gives a left--right asymmetry and reflects the
contrast between existential language concatenation and deterministic PEG sequencing.
We begin with the standard facts $\mathsf{REG}\subseteq\mathsf{PEG}$ and Boolean closure.

\begin{lemma}[basic closure properties]
\label{lem:closure-basic}
The class $\mathsf{PEG}$ contains $\mathsf{REG}$ and is closed under union, intersection, and complement.
\end{lemma}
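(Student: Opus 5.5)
The plan is to prove both parts of \cref{lem:closure-basic} on the SCA side and transfer them through the sufficient direction of \cref{thm:peg-sca-characterization}, as \cref{rem:loff-repair} indicates. That direction turns SCA membership of $H^R$ into PEG membership of $H$. Since $\mathsf{REG}$ is closed under reversal, and reversal commutes with union, intersection and complement, every claim reduces to the corresponding claim for $\mathsf{SCA}$.

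\textbf{Regular languages.} Let $H\in\mathsf{REG}$. Then $H^R$ is regular, so it is recognized by a DFA $(\stateset',\symset,\delta',q_0',F')$. We build an SCA with this state set, degree $d=1$, radius $k=0$, and a one-letter working alphabet. Its transition ignores the neighbourhood, sets the new node's edge to $\scamissing$, and updates the state by $\delta'$. The accepting set is $F'$. After reading $w$, the final state is exactly the DFA state, so the SCA decides $H^R$. Hence $H^R\in\mathsf{SCA}$, and therefore $H\in\mathsf{PEG}$.

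\textbf{Complement.} Given an SCA $\automaton$ for $H^R$, replace $F$ by $\stateset\setminus F$. The run of an SCA is deterministic, since the scaffold and the state are determined by the input. The modified automaton therefore decides $\symset^*\setminus H^R=(\symset^*\setminus H)^R$.

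\textbf{Intersection.} This is the main step. Given SCAs $\automaton_1,\automaton_2$ with degrees $d_1,d_2$, radii $k_1,k_2$ and working alphabets $\workalphabet_1,\workalphabet_2$, we run them in parallel on a single scaffold.
\begin{itemize}
\item Use degree $d_1+d_2$, so that edges $0,\ldots,d_1-1$ serve $\automaton_1$ and the remaining $d_2$ edges serve $\automaton_2$.
\item Use labels in $(\workalphabet_1\cup\{\scamissing\})\times(\workalphabet_2\cup\{\scamissing\})$, radius $\max(k_1,k_2)$, and states $\stateset_1\times\stateset_2$.
\end{itemize}
The key invariant is that, after each step, restricting the combined scaffold to the $\automaton_\iota$ edges and the $\iota$-th label component gives exactly the scaffold of $\automaton_\iota$ on the same input.
\begin{itemize}
\item \emph{Same node sets.} Both automata append exactly one node per symbol, so their node sets agree.
\item \emph{Views can be recovered.} The projected radius-$k_\iota$ view of the top can be computed from the combined radius-$\max(k_1,k_2)$ view. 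One keeps only the $\iota$-edges and truncates the depth to $k_\iota$.
\item \emph{Paths can be translated.} A path in $\scapaths{d_\iota}{k_\iota}$ is rewritten with shifted edge indices, and it reaches the same endpoint, or $\scamissing$, in the combined scaffold. The instructions $\scaself$ and $\scamissing$ carry over unchanged.
\item \emph{Initial node.} Its label and edges are all $\scamissing$ in every component, so the invariant holds at the start.
\end{itemize}
With accepting set $F_1\times F_2$, the combined automaton decides $H_1^R\cap H_2^R$. Checking this invariant by induction is the only nontrivial part.

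\textbf{Union.} This follows from intersection and complement by De Morgan's laws.
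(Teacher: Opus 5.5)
Your proof is correct, and your argument for $\mathsf{REG}\subseteq\mathsf{PEG}$ (a DFA for $D^R$ viewed as a radius-$0$ SCA, then the characterization) is exactly the paper's.

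For the Boolean part you take a different route. The paper does not pass through SCAs. It cites Ford's direct PEG constructions, which are valid for total grammars under the success convention: $e_1/e_2$ for union, $\mathbin{\&}e_1\,e_2$ for intersection, and $!e$ for complement. It then uses \cref{rem:peg-acceptance-conventions} to convert between Ford's success-based acceptance and the paper's full-consumption acceptance, and both conversions preserve totality.

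You instead prove Boolean closure of $\mathsf{SCA}$ with a product scaffold (degree $d_1+d_2$, radius $\max(k_1,k_2)$, and a projection invariant), and then transport it through \cref{thm:peg-sca-characterization}. Contrary to your opening sentence, this uses both directions of the characterization, not only the sufficient one. Starting from $H\in\mathsf{PEG}$, you obtain an SCA for $H^R$ by the necessary direction. This is harmless because the theorem is an equivalence.

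The trade-off is this. Your argument rests on the full Loff--Moreira--Reis characterization for a fact that has a one-line, characterization-free proof via lookahead. It also requires checking the neighbourhood-projection and path-translation invariant. In exchange, your treatment is uniform, and it yields Boolean closure of $\mathsf{SCA}$ as a by-product, which the paper's route does not give.
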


\begin{proof}
Boolean closure follows from Ford's constructions~\cite{Ford2004PEG} together with
\cref{rem:peg-acceptance-conventions}.
If $D\in\mathsf{REG}$, a deterministic finite automaton for $D^R$ is a radius-$0$ scaffolding automaton, so
\cref{thm:peg-sca-characterization} gives $D\in\mathsf{PEG}$.
\end{proof}

\begin{theorem}[non-closure under concatenation]
\label{thm:closure-concat}
$\mathsf{Pref}\in\mathsf{PEG}$ and $\mathcal{R}_0^R\in\mathsf{REG}$, yet
\[
\mathsf{Pref}\cdot\mathcal{R}_0^R=L\notin\mathsf{PEG}.
\]
Consequently $\mathsf{PEG}\cdot\mathsf{REG}\nsubseteq\mathsf{PEG}$, and since
$\mathsf{REG}\subseteq\mathsf{PEG}$ the class $\mathsf{PEG}$ is not closed under concatenation.
\end{theorem}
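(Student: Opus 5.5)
This theorem is an assembly of results already established, so the plan is to cite each ingredient in turn and then derive the two consequences.

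First, $\mathsf{Pref}\in\mathsf{PEG}$ is exactly \cref{lem:closure-pref-peg}, which is proved via a strictly real-time two-tape machine for $\mathsf{Pref}^R$, its compilation into a scaffolding automaton, and the sufficient direction of \cref{thm:peg-sca-characterization}.

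Second, $\mathcal{R}_0^R\in\mathsf{REG}$ holds because $\mathcal{R}_0=(\mathord{\mathtt{<}}B^+\mathtt{:}B^+\mathord{\mathtt{>}})^*$ is given by a regular expression and $\mathsf{REG}$ is closed under reversal. Concretely, $\mathcal{R}_0^R=(\mathord{\mathtt{>}}B^+\mathtt{:}B^+\mathord{\mathtt{<}})^*$.

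Third, the identity $\mathsf{Pref}\cdot\mathcal{R}_0^R=L$ is \cref{lem:closure-factorization}. The only delicate point there is already handled in that lemma: appending complete reversed records neither alters the leading $q^R\mathtt{\#}x^R\mathtt{\$}$ block nor destroys the witness. The unique parsing into records guaranteed by the delimiters ensures that the concatenation is a well-formed word of $L$.

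Fourth, $L\notin\mathsf{PEG}$ is \cref{thm:multiphase-negative}, which combines the transfer theorem with Ko's lower bound.

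For the consequences, the pair $(\mathsf{Pref},\mathcal{R}_0^R)$ witnesses $\mathsf{PEG}\cdot\mathsf{REG}\nsubseteq\mathsf{PEG}$. By \cref{lem:closure-basic}, $\mathcal{R}_0^R\in\mathsf{REG}\subseteq\mathsf{PEG}$, so both factors are PEG languages while their product is not. Hence $\mathsf{PEG}$ is not closed under concatenation.

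I expect no real obstacle, since all the substantive work is in \cref{lem:closure-pref-peg} and \cref{thm:multiphase-negative}. The only care needed is to check that the factorization lemma uses the same $\mathcal{R}_0$ (records with nonempty fields, without the $\mathtt{\$}$) as in \cref{sec:multiphase-peg}.
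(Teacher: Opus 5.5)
Your proposal is correct and follows the paper's proof: $\mathsf{Pref}\in\mathsf{PEG}$ from \cref{lem:closure-pref-peg}, regularity of $\mathcal{R}_0^R$ by closure of $\mathsf{REG}$ under reversal, $\mathsf{Pref}\cdot\mathcal{R}_0^R=L$ from \cref{lem:closure-factorization}, $L\notin\mathsf{PEG}$ from \cref{thm:multiphase-negative}, and then $\mathsf{REG}\subseteq\mathsf{PEG}$ from \cref{lem:closure-basic} for non-closure under concatenation. Your explicit form $\mathcal{R}_0^R=(\mathord{\mathtt{>}}B^+\mathtt{:}B^+\mathord{\mathtt{<}})^*$ is also correct.
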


\begin{proof}
Membership $\mathsf{Pref}\in\mathsf{PEG}$ is \cref{lem:closure-pref-peg}, and $\mathcal{R}_0^R$ is regular as noted
in \cref{sec:closure-factorization}.
By \cref{lem:closure-factorization} the concatenation of these two languages is $L$, which is not in $\mathsf{PEG}$ by
\cref{thm:multiphase-negative}, so $\mathsf{PEG}\cdot\mathsf{REG}\nsubseteq\mathsf{PEG}$.
Since $\mathsf{REG}\subseteq\mathsf{PEG}$ by \cref{lem:closure-basic}, both factors lie in $\mathsf{PEG}$, and the
final claim follows.
\end{proof}

The failure is specific to the order of the factors.
A regular language on the \emph{left} preserves membership.

\begin{theorem}[left--right asymmetry]
\label{thm:closure-asymmetry}
$\mathsf{REG}\cdot\mathsf{PEG}\subseteq\mathsf{PEG}$.
\end{theorem}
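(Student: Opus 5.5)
Let $D\in\mathsf{REG}$ and $H\in\mathsf{PEG}$. We will show $D\cdot H\in\mathsf{PEG}$ by the PEG--SCA characterization. By \cref{thm:peg-sca-characterization} it suffices to show $(D\cdot H)^R=H^R\cdot D^R\in\mathsf{SCA}$. Here $H^R\in\mathsf{SCA}$, say via an SCA $\automaton$, and $D^R$ is regular, say via a DFA $\mathcal{F}$. So the task is to show that $\mathsf{SCA}$ is closed under \emph{right} concatenation by regular languages. This direction is favourable because an SCA runs online and deterministically: at every prefix it knows whether that prefix lies in $H^R$.

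The construction runs $\automaton$ on the input exactly as before, keeping its scaffold, its degree $d$ and its radius $k$ unchanged. In the finite control we additionally maintain the \emph{set} $T\subseteq Q_{\mathcal{F}}$ of states that $\mathcal{F}$ can be in after reading some suffix $y$ with $w=xy$ and $x\in H^R$. This is the usual subset construction for concatenation with a DFA. After reading each symbol $a$, we update $T\gets\{\delta_{\mathcal{F}}(p,a):p\in T\}$. If the new state of $\automaton$ is accepting, we also add the initial state of $\mathcal{F}$, since the current prefix then lies in $H^R$. We must also handle the empty prefix: initially $T$ contains the initial state of $\mathcal{F}$ exactly when $q_0\in F$, meaning $\varepsilon\in H^R$. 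The new state set is $\stateset\times 2^{Q_{\mathcal{F}}}$, which is finite. The transition function reads the same radius-$k$ neighbourhood that $\automaton$ reads, writes the same label and pointers, and updates the second component as described. A state $(q,T)$ is accepting iff $T$ meets the final states of $\mathcal{F}$.

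Correctness comes from an induction on the input length. The invariant is that after reading $w$, the first component equals $\automaton$'s state on $w$, and $T=\{\hat\delta_{\mathcal{F}}(y): w=xy,\ x\in H^R\}$. The step case splits according to whether the cut point is the new prefix itself, which requires the acceptance test of $\automaton$ just performed, or an earlier one, which requires advancing the old set. Since $\mathsf{SCA}$ acceptance is by final state after the whole word, the new automaton accepts exactly $H^R\cdot D^R$. This gives $H^R D^R\in\mathsf{SCA}$, and the sufficient direction of \cref{thm:peg-sca-characterization} then yields $DH\in\mathsf{PEG}$.

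The main obstacle I expect is only bookkeeping. The transition of $\automaton$ depends on the scaffold neighbourhood, while the DFA component depends only on the symbol and on the acceptance status of the new state. I therefore need to check that acceptance is decided by the state produced by the current transition, which is available inside the same step because $\delta$ outputs $q'$. The empty-prefix initialization also needs care. Beyond these two points no scaffold manipulation is required, so the proof is a direct product construction.
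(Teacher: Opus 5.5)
Your proposal is correct and takes essentially the same route as the paper. You reduce via the PEG--SCA characterization to showing $H^R\cdot D^R\in\mathsf{SCA}$, keep the scaffold dynamics unchanged, and enrich the finite control with the set of DFA states reached on suffixes whose complementary prefix lies in $H^R$. That set is seeded when $\varepsilon\in H^R$ and receives the initial DFA state again whenever the SCA component accepts, exactly as in the paper's proof.
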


\begin{proof}
Let $R$ be regular and $Y\in\mathsf{PEG}$.
By \cref{thm:peg-sca-characterization} it suffices to recognize $(R\cdot Y)^R=Y^R\cdot R^R$ by a
scaffolding automaton; the same characterization gives one, say $\automaton$, for $Y^R$, and we fix a
deterministic finite automaton $D$ for the regular language $R^R$, which reversal has turned into a
suffix that $D$ can read online.
Keep the scaffold dynamics of $\automaton$ and enrich its control by the set of states $D$ reaches on
some suffix of the input read so far whose complementary prefix lies in $Y^R$; there are finitely many
such sets, so degree and radius are unchanged.
Seed the set with the initial state of $D$ when $\varepsilon\in Y^R$, and reinsert that state whenever
the $\automaton$-component accepts, which covers the splits with an empty left and an empty right
factor.
Accepting when the set meets the final states of $D$ is then exactly acceptance of $Y^R\cdot R^R$.
\end{proof}

Rubtsov and Chudinov prove a more general statement, allowing any left factor in the regular closure
of the deterministic context-free languages~\cite{RubtsovChudinov2024}.

The same factorization also yields non-closure under Kleene star by placing the regular tail inside an iterated
language and recovering $L$ through a regular envelope.
Define
\[
\begin{aligned}
U
&=\mathsf{Pref}\cup
\bigl(\mathord{\mathtt{>}}B^+\mathtt{:}B^+\mathord{\mathtt{<}}\bigr),
\\
\mathsf{Shape}
&=B^+\,\mathtt{\#}\,B^+\,\mathtt{\$}\,
\bigl(\mathord{\mathtt{>}}B^+\mathtt{:}B^+\mathord{\mathtt{<}}\bigr)^*.
\end{aligned}
\]
Here $U$ is the union of the prefix language and the regular language of single reversed records, while
$\mathsf{Shape}$ is a regular envelope.

\begin{lemma}[star envelope]
\label{lem:closure-star-shape}
$U^*\cap\mathsf{Shape}=L$.
\end{lemma}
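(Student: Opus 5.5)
We prove the two inclusions separately, using the factorization $L=\mathsf{Pref}\cdot\mathcal{R}_0^R$ of \cref{lem:closure-factorization} throughout.

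\emph{Inclusion $L\subseteq U^*\cap\mathsf{Shape}$.} Take $w\in L$ and factor it as $w=uv$ with $u\in\mathsf{Pref}$ and $v\in\mathcal{R}_0^R$. The word $v$ is a concatenation of reversed records $\mathord{\mathtt{>}}s^R\mathtt{:}k^R\mathord{\mathtt{<}}$ with nonempty binary fields. Each such record lies in the second part of $U$, so $w\in U\cdot U^*\subseteq U^*$. The word $w$ also has the displayed form of a word of $L$: it starts with $q^R\mathtt{\#}x^R\mathtt{\$}$ followed by $m\ge1$ reversed records with fields in $B^+$. Hence $w\in\mathsf{Shape}$.

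\emph{Inclusion $U^*\cap\mathsf{Shape}\subseteq L$.} Let $w=u_1\cdots u_r$ with each $u_j\in U$, and suppose $w\in\mathsf{Shape}$.
\begin{itemize}
\item Every word of $\mathsf{Shape}$ is nonempty, so $r\ge1$.
\item The symbols $\mathtt{\#}$ and $\mathtt{\$}$ each occur exactly once in $w$. Every word of $\mathsf{Pref}$ contains both, whereas a single reversed record contains neither. Hence exactly one factor $u_j$ lies in $\mathsf{Pref}$, and all other factors are single reversed records.
\item No factor can precede $u_j$. Otherwise $w$ would begin with $\mathord{\mathtt{>}}$, while every word of $\mathsf{Shape}$ begins with a bit.
\end{itemize}
Thus $w=u_1u_2\cdots u_r$ with $u_1\in\mathsf{Pref}$ and $u_2\cdots u_r\in\mathcal{R}_0^R$. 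By \cref{lem:closure-factorization}, $w\in L$.

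\emph{Delicate point.} The case analysis relies on the factorization into $U$-words respecting the record boundaries, and the counting argument above guarantees this. A single reversed record contains no $\mathtt{\#}$ or $\mathtt{\$}$, so it cannot straddle the $\mathsf{Pref}$ factor. Two cases need no separate treatment:
\begin{itemize}
\item The empty word lies in $U^*$, but it is not in $\mathsf{Shape}$.
\item A word of $U^*$ containing two $\mathsf{Pref}$ factors has two occurrences of $\mathtt{\#}$, so it is excluded by $\mathsf{Shape}$.
\end{itemize}
The envelope $\mathsf{Shape}$ is needed only to force the $\mathsf{Pref}$ factor to come first and to occur exactly once. Beyond these remarks the argument is routine.
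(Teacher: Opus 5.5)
Your proof is correct and follows essentially the same route as the paper. The reverse inclusion uses the single occurrences of $\mathtt{\#}$ and $\mathtt{\$}$ to force exactly one $\mathsf{Pref}$-factor, and the leading bit of $\mathsf{Shape}$ words to place it first; the forward inclusion combines \cref{lem:closure-factorization} with the defining form of words of $L$.
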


\begin{proof}
Let $w\in U^*\cap\mathsf{Shape}$ and factor it into words of $U$.
There is a unique $\mathsf{Pref}$-factor, since such a factor contains one $\mathtt{\#}$ and one
$\mathtt{\$}$, reversed records contain neither, and $\mathsf{Shape}$ contains each of these symbols once.
This factor comes first because words of $\mathsf{Shape}$ begin with a bit, whereas reversed records begin with
$\mathord{\mathtt{>}}$.
Thus $w\in\mathsf{Pref}\cdot\mathcal{R}_0^R=L$.
Conversely, \cref{lem:closure-factorization} places every word of $L$ in $U^*$, and its defining form places it in
$\mathsf{Shape}$.
\end{proof}

\begin{theorem}[non-closure under Kleene star]
\label{thm:closure-star}
$U\in\mathsf{PEG}$ and $U^*\notin\mathsf{PEG}$.
\end{theorem}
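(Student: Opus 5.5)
The plan has two halves: a membership argument for $U$ and a non-membership argument for $U^*$, each assembled from closure facts already established.

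For $U\in\mathsf{PEG}$, I write $U=\mathsf{Pref}\cup(\mathord{\mathtt{>}}B^+\mathtt{:}B^+\mathord{\mathtt{<}})$. The first component is in $\mathsf{PEG}$ by \cref{lem:closure-pref-peg}. The second is given by a regular expression, so it is regular, and \cref{lem:closure-basic} places it in $\mathsf{PEG}$. The same lemma gives closure of $\mathsf{PEG}$ under union, so $U\in\mathsf{PEG}$.

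For $U^*\notin\mathsf{PEG}$, I argue by contradiction. Suppose $U^*\in\mathsf{PEG}$. The language $\mathsf{Shape}$ is regular, being a concatenation of regular pieces together with a star of a regular piece. By \cref{lem:closure-basic} it therefore lies in $\mathsf{PEG}$, and $\mathsf{PEG}$ is closed under intersection. Hence $U^*\cap\mathsf{Shape}\in\mathsf{PEG}$. By \cref{lem:closure-star-shape} this intersection equals $L$, which contradicts $L\notin\mathsf{PEG}$ from \cref{thm:multiphase-negative}.

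I do not expect a real obstacle. The only step that needs care is the union and intersection closure, and \cref{lem:closure-basic} supplies it via Ford's constructions and \cref{rem:peg-acceptance-conventions}. Everything else is a direct citation of earlier results.
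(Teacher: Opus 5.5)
Your proposal is correct and matches the paper's proof essentially verbatim. Membership of $U$ comes from \cref{lem:closure-pref-peg} together with regularity and union closure from \cref{lem:closure-basic}, and $U^*\notin\mathsf{PEG}$ follows by intersecting with the regular envelope $\mathsf{Shape}$, then invoking \cref{lem:closure-star-shape} and \cref{thm:multiphase-negative}.
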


\begin{proof}
Membership $U\in\mathsf{PEG}$ follows from \cref{lem:closure-pref-peg,lem:closure-basic}, since
$U$ is the union of $\mathsf{Pref}$ and a regular language.
Suppose that $U^*\in\mathsf{PEG}$.
Since $\mathsf{Shape}\in\mathsf{REG}\subseteq\mathsf{PEG}$ by \cref{lem:closure-basic}, closure under intersection
in the same lemma would give $U^*\cap\mathsf{Shape}\in\mathsf{PEG}$; by \cref{lem:closure-star-shape}
this language is $L$, contradicting \cref{thm:multiphase-negative}.
\end{proof}

\begin{corollary}[non-closure under homomorphisms]
\label{cor:closure-hom}
$\mathsf{PEG}$ is closed under neither homomorphisms nor substitutions.
\end{corollary}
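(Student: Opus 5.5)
To prove \cref{cor:closure-hom}, I will build a PEG language over an enlarged alphabet whose image under a letter-to-letter erasing homomorphism is $L$. Since $L\notin\mathsf{PEG}$ by \cref{thm:multiphase-negative}, this gives non-closure under homomorphisms. Substitutions then follow immediately, because every homomorphism is a substitution that maps each letter to a singleton language.

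Following the hint of a ``marked product followed by erasure'', I introduce a fresh marker symbol $\mathtt{|}\notin\Sigma$ and set $\Sigma'=\Sigma\cup\{\mathtt{|}\}$. The candidate language is
\[
L'=\mathsf{Pref}\,\mathtt{|}\,\mathcal{R}_0^R\subseteq\Sigma'^*,
\]
and the homomorphism $h$ is the identity on $\Sigma$ with $h(\mathtt{|})=\varepsilon$. By \cref{lem:closure-factorization}, $h(L')=\mathsf{Pref}\cdot\mathcal{R}_0^R=L$.

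The remaining step, and the only one needing care, is showing $L'\in\mathsf{PEG}$. The marker pins down the factorization point, so nondeterministic splitting is no longer needed. I see two routes. The direct route takes the total PEG $G'$ for $\mathsf{Pref}$ from \cref{lem:closure-pref-peg}, adapted to Ford's success convention via \cref{rem:peg-acceptance-conventions}. The problem is that $G'$ was built for full-input acceptance, so it cannot simply be run on the prefix before $\mathtt{|}$.

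I would therefore argue through \cref{thm:peg-sca-characterization}. It suffices to show $L'^R=\mathcal{R}_0\,\mathtt{|}\,\mathsf{Pref}^R\in\mathsf{SCA}$, and this is an instance of $\mathsf{REG}\cdot\mathsf{PEG}$ on the PEG side, since $L'=\mathsf{Pref}\cdot(\mathtt{|}\mathcal{R}_0^R)$ reversed. The cleaner route is to apply \cref{thm:closure-asymmetry} in mirrored form; I will check which orientation matches. On the SCA side, an automaton reads $\mathcal{R}_0$ with its finite control while creating dummy nodes with missing edges. When it reads $\mathtt{|}$, it must start simulating the SCA for $\mathsf{Pref}^R$ as if on an empty scaffold.

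This restart is the main obstacle. The simulated automaton must not see the dummy nodes through its radius-$k$ view. I handle this by giving the node created at $\mathtt{|}$ missing edges and a distinguished label. Every node created after it has pointers only to nodes reachable from the top, so nothing can reach below the restart node. Treating the distinguished label as $\scamissing$ in the simulated view then reproduces the initial scaffold $\scaffold_0$ exactly, up to an index shift. An alternative route is to note that the real-time two-tape machine from \cref{lem:closure-pref-peg} can be prefixed by a finite-control scan of $\mathcal{R}_0\mathtt{|}$ before it begins copying. The machine stays strictly real time, so the same compilation to an SCA yields $L'^R\in\mathsf{SCA}$ and hence $L'\in\mathsf{PEG}$.
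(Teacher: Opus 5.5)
Your reduction matches the paper's. You use the marked product $\mathsf{Pref}\,\mathtt{|}\,\mathcal{R}_0^R$ (the paper's $\mathsf{Mark}$, with $\dagger$ as the marker), erase the marker to recover $L$ by \cref{lem:closure-factorization}, and treat substitutions as singleton-valued homomorphisms. The difference is in how you show that the marked product is a PEG language.

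The paper stays on the PEG side and takes the direct route you set aside. It renames apart the nonterminals of total PEGs for $\mathsf{Pref}$ and $\mathcal{R}_0^R$, then adds a start rule that runs the first grammar, consumes the marker, and runs the second. Your objection to this route does not hold. No terminal of the first grammar matches the fresh marker. Its end-of-input test $!.$ is a choice over $\Sigma$ only, taken before the alphabet is enlarged; the paper's lifting $\iota$ makes this explicit. So an induction on derivations shows that every expression behaves on $u\mathtt{|}v$ exactly as on $u$, leaving $\mathtt{|}v$ untouched. Full consumption of $u$ simply becomes ``stop exactly at the marker''.

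Your SCA restart construction is nonetheless sound. New edges only reach $t\cdot p$ or the new node itself, so nothing created after the edgeless restart node can reach below it. Relabelling that node's distinguished label to $\scamissing$ then makes every simulated view coincide with the view of the original automaton started from $\scaffold_0$; you should also add a rejecting sink for a second marker. Your real-time Turing-machine variant works too, given the compilation claim in \cref{lem:closure-pref-peg}. The paper's argument is more elementary, needing only totality and a syntactic observation. Yours invokes the sufficient direction of \cref{thm:peg-sca-characterization} once more, but it shows concretely, on the automaton side, how the marker removes the nondeterministic choice of split point.

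Drop the idea of deriving $L'\in\mathsf{PEG}$ from \cref{thm:closure-asymmetry}; neither orientation matches. On the PEG side, $L'=\mathsf{Pref}\cdot(\mathtt{|}\mathcal{R}_0^R)$ is a $\mathsf{PEG}\cdot\mathsf{REG}$ product. On the SCA side, $L'^R=(\mathcal{R}_0\mathtt{|})\cdot\mathsf{Pref}^R$ is a $\mathsf{REG}\cdot\mathsf{SCA}$ product. \Cref{thm:closure-asymmetry} covers only $\mathsf{REG}\cdot\mathsf{PEG}$, equivalently $\mathsf{SCA}\cdot\mathsf{REG}$. A general closure principle of the shape you need would also put $L=\mathsf{Pref}\cdot\mathcal{R}_0^R$ in $\mathsf{PEG}$, contradicting \cref{thm:closure-concat}. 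What makes $L'$ a PEG language is the marker itself, which both your restart construction and the paper's grammar actually exploit.
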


\begin{proof}
Let $\dagger\notin\symset$ be a fresh symbol and let
$\iota\colon\symset^*\to(\symset\cup\{\dagger\})^*$ be the injective
letter-to-letter homomorphism.
Consider the marked product
\[
\mathsf{Mark}
=
\{\,\iota(u)\dagger\iota(v)
\mid
u\in\mathsf{Pref},\ v\in\mathcal R_0^R
\,\}.
\]
This language lies in $\mathsf{PEG}$.
To see this, take total PEGs for $\mathsf{Pref}$ and $\mathcal R_0^R$ and rename their nonterminals apart.
Lift every terminal of both grammars through $\iota$.
The fresh symbol $\dagger$ cannot be consumed by a lifted terminal, so a lifted expression evaluated immediately
before $\dagger$ has the same behavior as the original expression evaluated at the end of its input.
Combining the lifted grammars under a new start rule that runs the first grammar, consumes $\dagger$, and then runs
the second grammar therefore recognizes $\mathsf{Mark}$.
The lifting and the new start rule preserve totality.

Now define the erasing homomorphism
$e\colon(\symset\cup\{\dagger\})^*\to\symset^*$ by
$e(\iota(\sym))=\sym$ and $e(\dagger)=\varepsilon$.
The image of $\mathsf{Mark}$ under $e$ is
\[
e(\mathsf{Mark})
=
\mathsf{Pref}\cdot\mathcal R_0^R
=
L
\notin
\mathsf{PEG}
\]
by \cref{lem:closure-factorization,thm:multiphase-negative}.
Thus closure under homomorphic images would contradict the marked product's membership in $\mathsf{PEG}$.

Finally, view $e$ as the substitution sending each letter to the singleton language containing its image word.
Every component language is regular and hence belongs to $\mathsf{PEG}$ by \cref{lem:closure-basic}, while the
substitution image of $\mathsf{Mark}$ is again $L$.
Therefore $\mathsf{PEG}$ is not closed under substitutions, even when every component language is itself in
$\mathsf{PEG}$.
\end{proof}

This negative result contrasts with the known positive closure result for inverse homomorphisms.

\begin{theorem}[closure under inverse homomorphisms]
\label{thm:closure-invhom}
If $L\in\mathsf{PEG}$ and $h\colon\Delta^*\to\symset^*$ is a homomorphism, possibly erasing, then
$h^{-1}(L)\in\mathsf{PEG}$.
\end{theorem}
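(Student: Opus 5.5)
We prove closure under inverse homomorphisms on the automaton side, using \cref{thm:peg-sca-characterization}. First observe that reversal commutes with inverse homomorphism in the following sense. Define the reversed homomorphism $h^R\colon\Delta^*\to\symset^*$ by $h^R(c)=h(c)^R$ for each letter $c\in\Delta$. Then
\[
\bigl(h^{-1}(L)\bigr)^R=(h^R)^{-1}(L^R),
\]
since for $y=c_1\cdots c_n$ we have
\[
h(y)^R=h(c_n)^R\cdots h(c_1)^R=h^R(y^R).
\]
Because $L\in\mathsf{PEG}$, the characterization gives an SCA $\automaton$ with $L(\automaton)=L^R$. It therefore suffices to build an SCA $\automaton'$ over $\Delta$ accepting exactly the words $y$ with $h^R(y)\in L(\automaton)$. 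The sufficient direction of \cref{thm:peg-sca-characterization} then yields $h^{-1}(L)\in\mathsf{PEG}$.

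The construction simulates, on each input letter $c$, the whole block of $\automaton$-transitions on the fixed word $h^R(c)$. Let $r=\max_c|h(c)|$, which is a constant. Reading $c$ would make $\automaton$ append $|h(c)|\le r$ nodes, but $\automaton'$ may append only one. We therefore let $\automaton'$ append a single \emph{macro-node}. Its label records the labels of the up to $r$ simulated nodes, together with their internal edges among one another, encoded as a finite datum. Each edge of a simulated node either stays internal to the macro-node or points into an earlier node of $\automaton$'s scaffold. We give $\automaton'$ degree $d'=rd$, so that each slot of the macro-node stores the target macro-node of one external edge of one simulated node. To identify \emph{which} simulated node inside that target an edge refers to, we store this index (at most $r$ possibilities) in the label of the source macro-node, next to the pointer slot.

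The key step is checking that $\automaton'$ can compute each simulated transition from a bounded neighbourhood. Following an $\automaton$-path of length $k$ from a simulated node visits at most $k$ macro-nodes through external edges, interspersed with internal moves that require no pointer. Hence radius $k'=k$ in the macro-scaffold, with labels carrying the internal structure and the target indices, suffices to reconstruct $\scaview{k}{\scaffold}{t}$ for every intermediate top. This reconstruction must also cover nodes created earlier within the same block: those nodes are not yet in the stored scaffold, but they are computed in the finite control during the same transition. Their neighbourhoods combine already-computed internal data with radius-$\le k$ views of older macro-nodes. Pointers from new nodes to other new nodes become internal edges. Pointers into old nodes become a macro-pointer path (the path to the target macro-node) plus the index stored in the label. $\scaself$ becomes an internal edge. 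The state of $\automaton'$ is the state of $\automaton$ after the block, and the accepting states are those whose $\automaton$-component lies in $F$.

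Two details need care. First, erasing letters with $h(c)=\varepsilon$ append a macro-node that copies nothing; it is a dummy with no simulated nodes, and paths must skip it. To keep radius bounds valid, we make the dummy node not referenced at all: later pointers always target real macro-nodes, and the dummy only passes the state along. Its top position is harmless if every macro-node also stores a pointer to the most recent real macro-node, using one extra degree slot. Second, the initial node $0$ of $\automaton$ has to be represented as well.

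I expect the main obstacle to be the bookkeeping that shows $\scaview{k}{\scaffold}{v}$ for an $\automaton$-node $v$ is a function of the radius-$k$ macro-neighbourhood. In particular, the distance from an external pointer must not compound with the added "last real node" indirection. If the radius turns out to be off by a constant, we will simply enlarge $k'$, for example to $2k+1$, which is harmless because the radius only needs to be a constant.
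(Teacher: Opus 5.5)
Your argument is correct, but it takes a genuinely different route from the paper. The paper does no automaton-side work here. It cites Birman and Ullman's theorem that gTS (GTDPL) languages are closed under inverse images of possibly erasing homomorphisms, and transfers it to PEGs through Ford's gTS--PEG equivalence. You instead prove that $\mathsf{SCA}$ itself is closed under inverse homomorphisms by a macro-node simulation. You then move between the two sides using $(h^{-1}(L))^R=(h^R)^{-1}(L^R)$ and both directions of \cref{thm:peg-sca-characterization}.

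The paper's route is two lines and does not depend on the PEG--SCA characterization at all. However, it rests on external results that the Lean artifact has to postulate as an axiom. Your route uses the same automaton-side toolkit as \cref{thm:closure-asymmetry,lem:closure-pref-peg}. It would make the theorem depend only on the characterization, which the artifact proves. The cost is the bookkeeping you describe: degree $rd+1$; macro-labels carrying the simulated labels, internal and missing edges, and target indices; dummy macro-nodes for erased letters with a ``last real macro-node'' slot; and node $0$ of $\automaton'$ standing for node $0$ of $\automaton$.

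One correction to that bookkeeping: radius $k'=k$ does not suffice, and in general neither does $2k+1$. The $j$th node created within a block is attached by paths of length at most $k$ from the $(j-1)$st. By induction, its old targets can therefore lie up to $jk$ steps from the old top $t$, and the radius-$k$ view at the intermediate top $t+j$ reaches roughly $(j+1)k$ steps. Since every edge on such a path may be external, macro-distance can equal $\automaton$-distance. Taking $k'=rk+1$ covers every intermediate view and every macro-path you must emit as a pointer instruction; the extra $1$ pays for the last-real indirection. This is still a constant depending only on $\automaton$ and $h$, so your closing remark that any constant radius is acceptable is exactly what makes the construction go through.
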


\begin{proof}
Birman and Ullman's gTS formalism, later called the generalized top-down parsing language (GTDPL), is an early
recognition model for top-down parsing with backtracking.
They proved that the languages recognized by this formalism are closed under inverse images under possibly erasing
homomorphisms~\cite{BirmanUllman1973}.
Ford later showed that gTS and PEGs have the same recognition power~\cite{Ford2004PEG}, so the result transfers to
PEG languages.
\end{proof}

Greibach proved that there is a single context-free language $L_0$, a \emph{hardest} context-free language,
such that every context-free language is an inverse homomorphic image of $L_0$ or of
$L_0\setminus\{\varepsilon\}$~\cite{Greibach1973Hardest}.

\begin{corollary}[hardest context-free languages are not PEG languages]
\label{cor:closure-greibach}
$L_0\notin\mathsf{PEG}$.
\end{corollary}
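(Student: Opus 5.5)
We argue by contradiction, combining Greibach's theorem with the inverse-homomorphism closure of \cref{thm:closure-invhom} and the witness language $L$ of \cref{thm:multiphase-negative}. Suppose $L_0\in\mathsf{PEG}$. Since $L$ is linear and hence context-free by \cref{prop:multiphase-linear}, Greibach's theorem provides a homomorphism $h$ with either $L=h^{-1}(L_0)$ or $L=h^{-1}(L_0\setminus\{\varepsilon\})$. In the first case, \cref{thm:closure-invhom} gives $L\in\mathsf{PEG}$ directly.

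The second case needs $L_0\setminus\{\varepsilon\}\in\mathsf{PEG}$. This follows from \cref{lem:closure-basic}: the language $\{\varepsilon\}$ is regular, so its complement is a PEG language, and intersecting that complement with $L_0$ stays in $\mathsf{PEG}$. Then \cref{thm:closure-invhom} again gives $L\in\mathsf{PEG}$.

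Both cases contradict $L\notin\mathsf{PEG}$ from \cref{thm:multiphase-negative}, so $L_0\notin\mathsf{PEG}$.

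The only point I would check carefully is that the homomorphism in Greibach's theorem matches the hypotheses of \cref{thm:closure-invhom}. Greibach's construction may use erasing homomorphisms and a different alphabet for $L$ than for $L_0$. \Cref{thm:closure-invhom} explicitly permits erasing and places no constraint on the source alphabet $\Delta$, so it applies with $\Delta=\Sigma$ and target the alphabet of $L_0$. Beyond this check the argument is a direct combination of earlier results.
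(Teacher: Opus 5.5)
Your proof is correct and is essentially the paper's argument: both remove $\varepsilon$ by intersecting with a regular language via \cref{lem:closure-basic} and then apply \cref{thm:closure-invhom} together with Greibach's theorem. The only difference is that the paper concludes $\mathsf{CFL}\subseteq\mathsf{PEG}$ and contradicts \cref{cor:multiphase-main-consequences}, whereas you apply Greibach's theorem directly to the witness $L$ and contradict \cref{thm:multiphase-negative}; this just inlines the same step.
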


\begin{proof}
If $L_0\in\mathsf{PEG}$, then so is $L_0\setminus\{\varepsilon\}$, being the intersection of $L_0$ with a
regular language (\cref{lem:closure-basic}).
By \cref{thm:closure-invhom} every context-free language would then lie in $\mathsf{PEG}$, contradicting
\cref{cor:multiphase-main-consequences}.
\end{proof}

Loff, Moreira, and Reis observed that the existence of a context-free language without a PEG is equivalent
to $L_0$ having no PEG~\cite{LoffMoreiraReis2020}; \cref{cor:closure-greibach} settles that formulation,
with a fixed language studied since 1973 in place of a witness built for the reduction.

\begin{samepage}
\subsection{Summary of Closure Properties}
\label{sec:closure-summary}

The resulting closure landscape is summarized below.
Shaded rows are established in this work.

\begin{center}
\begin{tabular}{@{}llcl@{}}
\hline
Operation & Form & Closed & Reference \\
\hline
Boolean operations & $L_1\cup L_2,\ L_1\cap L_2,\ \overline{L}$ & Yes & Ford~\cite{Ford2004PEG} \\
\rowcolor{black!6}
Concatenation\footnotemark & $L_1L_2$ & No & This work \\
\rowcolor{black!6}
Kleene star & $L^*$ & No & This work \\
\rowcolor{black!6}
Reversal & $L^R$ & No & This work \\
\rowcolor{black!6}
Homomorphism & $h(L)$ & No & This work \\
Inverse homomorphism & $h^{-1}(L)$ & Yes &
Birman and Ullman~\cite{BirmanUllman1973}, Ford~\cite{Ford2004PEG} \\
\rowcolor{black!6}
Substitution & $\sigma(L)$ & No & This work \\
\hline
\end{tabular}
\end{center}
\end{samepage}
\footnotetext{Concatenation remains asymmetric with regular factors.
\Cref{thm:closure-asymmetry} gives $\mathsf{REG}\cdot\mathsf{PEG}\subseteq\mathsf{PEG}$, a case of a more
general theorem of Rubtsov and Chudinov~\cite{RubtsovChudinov2024}, whereas \cref{thm:closure-concat} gives
$\mathsf{PEG}\cdot\mathsf{REG}\nsubseteq\mathsf{PEG}$.}

\section{Conclusion and Open Problems}
\label{sec:conclusion}

We developed a transfer framework that interprets the scaffold maintained by an SCA as persistent memory in the
cell-probe model.
Since each transition accesses only a fixed-radius neighbourhood, every input symbol can be simulated using
$\mathcal{O}(1)$ probes.
Together with the PEG--SCA characterization, this converts cell-probe lower bounds into expressiveness lower bounds
for PEGs.

Applying the framework to Multiphase Inner Product yields a language $C$ satisfying
\[
C\in\mathsf{LIN}\cap\mathsf{PEG}\setminus\mathsf{SCA},
\qquad
C^R\in\mathsf{LIN}\cap\mathsf{SCA}\setminus\mathsf{PEG}.
\]
Consequently, not every linear context-free language is recognized by a PEG, and the class of PEG languages is not
closed under reversal.

The same witness also gives negative answers to the open closure questions for concatenation and Kleene star.
Together with the positive closure results, this yields the strict asymmetry
\[
\mathsf{REG}\cdot\mathsf{PEG}\subseteq\mathsf{PEG}
\qquad\text{but}\qquad
\mathsf{PEG}\cdot\mathsf{REG}\nsubseteq\mathsf{PEG},
\]
and the known inverse-homomorphism closure follows from Birman and Ullman's theorem and Ford's gTS--PEG
equivalence.

Beyond these results, the proof suggests a broader program for developing PEG lower bounds.
The transfer theorem avoids a direct analysis of PEG computations by moving the lower-bound problem to the
cell-probe model, where strong techniques are already available.
This advantage also marks the present limitation of the method.
It establishes non-membership only after the language has been connected to a data-structure problem with a suitable
lower bound, and therefore does not isolate an intrinsic obstruction within the SCA model.

This limitation motivates an intrinsic theory of SCA lower bounds.
It remains open whether SCA non-membership can be derived directly from combinatorial or information-theoretic
properties of append-only scaffolds and fixed-radius access.
Combined with the PEG--SCA characterization, such a theory could yield reusable PEG non-membership criteria without
requiring a separate data-structure reduction.

An intrinsic theory would do more than provide another proof of the present separation.
By identifying general obstructions rather than individual hard encodings, it could help organize the boundary
between $\mathsf{LIN}$ and $\mathsf{PEG}$.
Our result shows that this boundary is nontrivial, but it does not characterize
$\mathsf{LIN}\cap\mathsf{PEG}$.
Finding such a characterization, even through natural necessary or sufficient conditions for a linear context-free
language to be recognized by a total PEG, remains open.

\clearpage
\section*{Formal Verification}
\label{sec:formal-verification}

The main results of this paper are checked in Lean~4~\cite{MouraUllrich2021Lean4} by the accompanying
artifact~\cite{KimPark2026Artifact}, archived at \url{https://doi.org/10.5281/zenodo.22099762}.
Two entry points cover them, \texttt{CFLNotPEG/Main.lean} for the separation and
\texttt{Closure/Main.lean} for the closure properties; checking either one also prints the axioms
that each of its declarations depends on.
The table pairs each result with a Lean declaration establishing it, naming declarations relative to
the \texttt{PegSeparation} namespace.

\begin{center}
\small
\begingroup
\renewcommand{\arraystretch}{1.14}
\arrayrulecolor{black!45}
\rowcolors{2}{black!3}{white}
\begin{tabular}{
>{\raggedright\arraybackslash}p{0.22\textwidth}
>{\raggedright\arraybackslash}p{0.50\textwidth}
>{\raggedright\arraybackslash}p{0.16\textwidth}
}
\hline
\rowcolor{black!9}
Paper result & Lean declaration & Trust boundary \\
\hline
\Cref{thm:transfer} &
\texttt{Reduction.GenericTransfer.\allowbreak{}genericTransferTheorem} &
Standard \\
\Cref{prop:multiphase-linear,prop:multiphase-peg-correct,prop:multiphase-peg-total} &
\texttt{Witness.LanguageCFG.LinearGrammar.\allowbreak{}cfg\_language\_eq\_C},
\texttt{Witness.DirectPEG.\allowbreak{}C\_recognizedByTotalPEG} &
Standard \\
\Cref{cor:no-fast-multiphase} &
\texttt{CFLNotPEG.\allowbreak{}ko2026MultiphaseIPLowerBound} &
Ko~\cite{Ko2026CellProbe} \\
\Cref{thm:multiphase-separation,cor:multiphase-main-consequences} &
\texttt{CFLNotPEG.finalSeparation},
\texttt{Closure.reversalNonClosure} &
Ko~\cite{Ko2026CellProbe} \\
\Cref{lem:closure-factorization,lem:closure-pref-peg,lem:closure-star-shape} &
\texttt{Closure.witness\_factorization},
\texttt{pref\_isPEG},
\texttt{cFirst\_isSCA\_degree\_six\_radius\_two},
\texttt{starWitness\_kstar\_inter\_shape} &
Standard \\
\Cref{lem:closure-basic,thm:closure-asymmetry} &
\texttt{Closure.recognizedByTotalPEG\_union},
\texttt{\_inter}, \texttt{\_compl}, \texttt{\_of\_isRegular},
\texttt{\_mul\_of\_isRegular\_left} &
Standard \\
\Cref{thm:closure-concat,thm:closure-star,cor:closure-hom} &
\texttt{Closure.concatenationNonClosure},
\texttt{kleeneStarNonClosure},
\texttt{HomomorphismInternal.\allowbreak{}homomorphismNonClosureInternal} &
Ko~\cite{Ko2026CellProbe} \\
\Cref{thm:closure-invhom} &
\texttt{Closure.\allowbreak{}recognizedByTotalPEG\_inverseImage} &
Birman--Ullman and Ford~\cite{BirmanUllman1973,Ford2004PEG} \\
\Cref{cor:closure-greibach} &
\texttt{Closure.GreibachHardest.\allowbreak{}greibachHardest\_isContextFree\_not\_totalPEG} &
All three \\
\hline
\end{tabular}
\endgroup
\end{center}

\emph{Standard} means that only Lean and mathlib's foundational axioms
(\texttt{propext}, \texttt{Classical.choice}, \texttt{Quot.sound}) are reported.
Exactly three further axioms occur above: the project-facing corollary of \cref{thm:ko-multiphase},
which every negative result inherits; inverse-image closure for possibly erasing homomorphisms, which
the artifact postulates in the composite form given by Birman--Ullman together with Ford's
equivalence; and Greibach's hardest-language theorem.
\Cref{cor:closure-greibach} is the only result using all three.
Every positive result in the table is proved outright.

The artifact avoids asymptotic notation, so \cref{thm:transfer} and \cref{cor:no-fast-multiphase}
appear there with explicit budget functions and existentially quantified constants.
The scaffolding-automaton compiler produces exactly that form, so the separation is unaffected.
The remaining rows correspond literally.

\section*{Generative AI Usage}

A generative AI tool based on a large language model was used to examine the witness language and preliminary proof
outlines and to generate all Lean 4 proof scripts in the accompanying formal-verification artifact.
The authors reviewed and revised the manuscript arguments, checked that the formal statements correspond to the
claims reported in the paper, and confirmed that the artifact is accepted by Lean 4.
The authors take full responsibility for every definition, claim, proof, and formalization in the paper.

\bibliographystyle{plain}
\bibliography{references}

\end{document}